\newcommand {\dsty}{\displaystyle}
\newcommand{\0}{{\bm 0}}
\newcommand{\Above}[2]{\stackrel{\scriptstyle #1}{#2}}
\newcommand{\A}{\bm{A}}
\newcommand{\B}{\bm{B}}
\newcommand{\D}{\bm{D}}
\newcommand{\Fc}{{\cal F}}
\newcommand{\Ic}{{\cal I}}
\newcommand{\I}{\bm{I}}
\newcommand{\K}{\bm{K}}
\newcommand{\Lam}{\bms{\Lambda}}
\newcommand{\Lm}{\bm{L}}	
\newcommand{\Mc}{{\cal M}}
\newcommand{\M}{\bm{M}}
\newcommand{\Nc}{{\cal N}}
\newcommand{\Order}{{\cal O}}
\renewcommand{\P}{\bm{P}}
\newcommand{\Pm}{\bm{P}}
\newcommand{\R}{\bm{R}}
\newcommand{\Sc}{{\cal S}}
\newcommand{\Smh}{\hat{\bm{S}}}
\newcommand{\Sm}{\bm{S}}
\newcommand{\T}{\bm{T}}
\newcommand{\Upsi}{\bms{\Upsilon}}
\newcommand{\Vc}{{\cal V}}
\newcommand{\Wm}{\bm{W}}
\newcommand{\Zc}{{\cal Z}}
\newcommand{\Zero}{{\bf 0}}
\newcommand{\alphab}{\ov{\alpha}}
\newcommand{\alphah}{\hat {\alpha}}
\newcommand{\bdes}{\begin{description}}
\newcommand{\bd}{\begin{description}}
\newcommand{\bean}{\begin{eqnarray*}}
\newcommand{\benu}{\begin{enumerate}}
\newcommand{\ben}{\begin{enumerate}}
\newcommand{\bite}{\begin{itemize}}
\newcommand{\bi}{\begin{itemize}}
\newcommand{\bms}[1]{{\boldsymbol{#1}}}
\newcommand{\bmu}{\begin{multline}}
\newcommand{\bm}[1]{{\bf #1}}
\newcommand{\del}{\bms{\delta}}
\newcommand{\diagD}[1]{{{\bm D}_{#1} }}
\newcommand{\diagm}[1]{ { \mbox{\bf diag}\matrx{#1} } }
\newcommand{\dgam}[1]{\frac{d #1}{d \gamma}}
\newcommand{\dspfrac}[2]{\frac{\displaystyle #1}{\displaystyle #2} }
\newcommand{\dv}{\bm{d}}
\newcommand{\eban}{\begin{eqnarray*}}
\newcommand{\eba}{\begin{eqnarray}}
\newcommand{\eb}{\begin{equation}}
\newcommand{\edes}{\end{description}}
\newcommand{\ed}{\end{description}}
\newcommand{\eean}{\end{eqnarray*}}
\newcommand{\eea}{\end{eqnarray}}
\newcommand{\eenu}{\end{enumerate}}
\newcommand{\een}{\end{enumerate}}
\newcommand{\ee}{\end{equation}}
\newcommand{\eite}{\end{itemize}}
\newcommand{\ei}{\end{itemize}}
\newcommand{\emu}{\end{multline}}
\newcommand{\epsi}{\bms{\epsilon}}
\newcommand{\eqdef}{:=}
\newcommand{\etav}{{\bms{\eta}}}
\newcommand{\ev}{\bm{e}}
\newcommand{\finalversion}[1]{#1}
\newcommand{\fv}{\bm{f}}
\newcommand{\gv}{\bm{g}}
\newcommand{\hide}[1]{}	
\newcommand{\intersect}{\cap}
\newcommand{\kv}{\bm{k}}
\newcommand{\la}{\!\leftarrow\!}
\newcommand{\m}{\bm{m}}
\newcommand{\matrx}[1]{{\left[ \stackrel{}{#1}\right]}}
\newcommand{\mub}{\ov{\mu}}
\newcommand{\muv}{\bms{\mu}}
\newcommand{\muvb}{{\ov{\bms{\mu}} } }
\newcommand{\ov}{\overline}
\newcommand{\prtl}[2]{\frac{\partial #1}{\partial  #2} }
\newcommand{\piv}{\bms{\pi}}
\newcommand{\pmu}[2]{\frac{\partial #1}{\partial  \mu_{#2}} }
\newcommand{\pv}{\bm{p}}
\newcommand{\qv}{\bm{q}}
\newcommand{\suchthat}{\colon}
\newcommand{\tp}{\top}	
\newcommand{\union}{\cup}
\newcommand{\vh}{\hat{v}}
\newcommand{\vvh}{{\hat{\vv}}}
\newcommand{\vv}{\bm{v}}
\newcommand{\wbh}{\widehat{\wb} } 
\newcommand{\wb}{\ov{w}} 
\newcommand{\wh}{\hat{w}}
\newcommand{\xh}{\hat{x}}
\newcommand{\xvh}{\hat{\x}}
\newcommand{\xv}{\bm{x}}
\newcommand{\x}{\bm{x}}
\newcommand{\yvh}{\hat{\y}}
\newcommand{\y}{\bm{y}}
\newcommand{\zh}{\hat{z}}
\newcommand{\zvh}{\hat{\z}}
\newcommand{\z}{\bm{z}}
\newfont{\gilfont}{cmsy10 scaled\magstep0}
\newcommand{\Reals}{\mathbb{R}} 
\newcommand{\Eqref}[1]{Equation \eqref{#1}}
\newcommand{\Secref}[1]{Section  \ref{#1}}
\newcommand{\itemref}[1]{{\ref{#1}.\!}}
\newcommand{\Qref}[1]{{Q.\ref{#1}.\!}}
\newtheorem{Result}{Result}
\newtheorem{Corollary}{Corollary}
\newtheorem{Lemma}{Lemma}
\newtheorem{Theorem}{Theorem}
\newtheorem{Definition}{Definition}
\theoremstyle{plain}
\newtheorem{Main}{Main Results}
\title{
An Evolutionary Reduction Principle for \\Mutation Rates at Multiple Loci
}
\author{Lee Altenberg
\\ \url{altenber@hawaii.edu}
\date{\today}\footnote{To appear in {\em Bulletin of Mathematical Biology}, DOI 10.1007/s11538-010-9557-9}
}
\begin{document}
\maketitle	
\begin{abstract}
 \finalversion{A model of mutation rate evolution for multiple loci under arbitrary selection is analyzed.   Results are obtained using techniques from \citet{Karlin:1982} that overcome the weak selection constraints needed for tractability in prior studies of multilocus event models.

A multivariate form of the reduction principle is found:  reduction results at individual loci combine topologically to produce a surface of mutation rate alterations that are neutral for a new modifier allele.  New mutation rates survive if and only if they fall below this surface --- a generalization of the hyperplane found by \citet{Zhivotovsky:Feldman:and:Christiansen:1994} for a multilocus recombination modifier.  Increases in mutation rates at some loci may evolve if compensated for by decreases at other loci.  The strength of selection on the modifier scales in proportion to the number of germline cell divisions,  and increases with the number of loci affected.   Loci that do not make a difference to marginal fitnesses at equilibrium are not subject to the reduction principle, and under fine tuning of mutation rates would be expected to have higher mutation rates than loci in mutation-selection balance.

Other results include the nonexistence of `viability analogous, Hardy-Weinberg' modifier polymorphisms under multiplicative mutation, and the sufficiency of average transmission rates to encapsulate the effect of modifier polymorphisms on the transmission of loci under selection.  A conjecture is offered regarding situations, like recombination in the presence of mutation, that exhibit departures from the reduction principle. Constraints for tractability are: tight linkage of all loci, initial fixation at the modifier locus, and mutation distributions comprising transition probabilities of reversible Markov chains.} \footnote{Dedicated to the memory of Sam Karlin, whose theorems continue to bear new fruit.}

\ \\ \ \\
Keywords:  evolution; evolutionary theory; modifier gene; mutation rate; spectral analysis; reduction principle; Karlin's theorem; reversible Markov chain.
\end{abstract}

\section{Introduction}
Genetic systems have the same material basis as developmental and physiological systems --- proteins, nucleotides, regulatory sequences, gene interaction networks, and self-organizing structures and activities in the cell and organism.  For the evolution of genetic systems, however, the Darwinian paradigm of heritable variation for fitness runs into a complication:  genetic variation for heredity can change the {\em content} of an organism's contribution to the next generation without necessarily changing the {\em quantity}, i.e. the organism's fitness.  Genetic variation for heredity can therefore be selectively neutral yet still enter into the evolutionary dynamics or the population.  Models of selectively neutral genes that modify genetic transmission --- modifier genes --- have been posed and analyzed in order to understand the evolutionary forces on the genetic system itself.

The methodology of the neutral modifier model is to find out what effects a modifier allele must have on transmission in order to survive, as a function of the conditions of the population (such as the selection regime, existing genetic system, current genes in the population, population size, etc.).  Since the modifier locus is assumed to have no intrinsic effect on fitness, its differential survival requires it become associated with alleles at loci under selection that have above average fitness.  This is called \emph{induced selection} (also referred to as `secondary selection' \citep{Karlin:and:McGregor:1972:Modifier,Kondrashov:1995}).  The task of modifier theory is to find out which effects on transmission cause a modifier allele to become associated with fitter genotypes.

Any particular system can be simulated to find out the result, and a region of systems evaluated, but one cannot be sure how such results interpolate or extrapolate without analytical results.  A `complete theory' of modifier genes would be a complete classification of population conditions and modifier effects that would produce modifier allele survival.   The current state of theory is far from this complete classification due to the limitations of mathematical techniques.  Analytical results  have been obtained  only for models that are great simplifications of reality.  The relevance of their results to real systems is justifiable only by the premise that the results extend beyond the simplified models into the space of real systems.  One may argue that this premise is implicit in the use of all theoretical results from simplified models.  

This premise is always uncertain.  To lessen the uncertainty, one would like to analyze models that are ever closer to reality.  Modifier theory has a history of being extended to ever more realistic and general models.  One result that has reappeared throughout this sequence of greater realism is the \emph{Reduction Principle}:  that population near equilibrium under a balance of selection and transformation processes will evolve in the direction of reduced rates of those transformation processes.

Modifier models exhibiting the reduction principle have mostly shared one glaring departure from reality:   that only a single transforming event during reproduction occurs for the genes under selection --- i.e. a single mutation, or single crossover.  In reality, multiple transformation events are the rule during reproduction.  This paper takes one step toward greater reality by modeling modifiers of multiple events.  

\subsection{The Reduction Principle}

In the first analyses of genetic modifiers of mutation, recombination, and migration in the 1970s, a common result kept appearing, which was that only reduced levels of mutation, recombination, or migration could evolve when populations were near equilibrium under a balance between the forces of selection and transmission. \citet{Feldman:1972} discovered the first example of this analytical \emph{reduction result} for modifiers of recombination between two loci with two alleles under viability selection, for multiplicative and symmetric viability regimes.  Subsequent studies extended the reduction result to larger and larger spaces of models, including modifiers of mutation and migration rates, large modified rates, and arbitrary viability selection regimes \citep{Karlin:and:McGregor:1972:Modifier,Feldman:and:Balkau:1973,Balkau:and:Feldman:1973,Karlin:and:McGregor:1974,Feldman:and:Krakauer:1976,Teague:1977,Feldman:Christiansen:and:Brooks:1980}.  

It so happened that during this same time period, on a seemingly unrelated topic --- how population subdivision would affect the maintenance of genetic variation --- \citet{Karlin:1976,Karlin:1982} developed two general theorems on the spectral radius of perturbations of migration-selection systems.  These theorems show how, for two different kinds of variation in migration, a greater level of `mixing' lowers the spectral radius of the stability matrix for the system, and reduces the number of alleles that exist as protected polymorphisms.  The theorems first appear, without proof, in \citet[pp. 642--647]{Karlin:1976}, and with proof as Theorems 5.1 and 5.2 in \citet{Karlin:1982}.  

The `mixing' that occurs in migration is dynamically analogous to the `mixing' of genetic information that occurs during reproduction.  \citet{Altenberg:1984} found that Karlin's Theorem 5.2 applied to the form of variation modeled in the literature that exhibited the reduction result, through use of a general representation of genetic transmission, which hides the details of the genetic system but makes explicit the form that variation in transmission takes.  Because of the generality of Theorem 5.2, its applicability meant that the reduction result could be extended to arbitrary genetic systems and processes being modified (for which recombination, mutation, and migration are special cases), arbitrary numbers of alleles and loci, and arbitrary selection regimes --- a level of generality not often attainable in population genetics theory.  

However, the tradeoff for this generality is the very specific way that the modifier gene must vary genetic transmission in order for Theorem 5.2 to apply:  the modifier gene must scale equally all transmission probabilities between different genotypes.  This is referred to as {\em linear variation} (\citealt{Altenberg:1984,Altenberg:and:Feldman:1987}).  Linear variation has the form:
\[
T(i \la j, k) = \alpha \; P(i\la j, k) \mbox{\ for } j, k \neq i,
\] 
where $T(i\la j, k)$ is the probability that parental haplotypes $j$ and $k$ produce a gamete haplotype $i$, and $\alpha$ is the parameter controlled by the modifier gene that scales the transmission rates $P(i\la j, k)$.   

Subsequent studies of the reduction principle for linear variation include \citet{Feldman:and:Liberman:1986}, \citet{Liberman:and:Feldman:1986:MMR}, \citet{Liberman:and:Feldman:1986:GRP}, \citet{Altenberg:and:Feldman:1987}, and \citet{Altenberg:2009:Linear}.

What linear variation means, in biological terms, is that during reproduction, the genotype is `hit' only once by the transformation processes, and the probability that this hit occurs is what is controlled by the modifier gene.  Being hit only once, however, is manifestly unrealistic because reproduction almost universally exhibits multiple independent transformation events, including multiple mutations, crossovers, chromosomal reassortment, transpositions, and their combinations.  

The literature has explored the realm of multiple-hit genetic transformation models to a very limited extent, but even here, important phenomena have been discovered.  These studies can be classified into two categories:  
\benu 
\item models of two {\em mixed processes}, where the modifier gene controls one transformation process, but a second, {\em different} transformation process occurs outside of its control; and
\item models of a single process that can occur multiple times among different loci; in this case, the models are all of multi-locus recombination modification \citep{Zhivotovsky:Feldman:and:Christiansen:1994,Zhivotovsky:and:Feldman:1995}.
\eenu

\subsubsection{Mixed Processes}
Mixed processes are notable in that they are where departures from the reduction principle are found in near-equilibrium populations.  The mixed process of greatest interest has been recombination in the presence of mutation \citep{Feldman:Christiansen:and:Brooks:1980,Charlesworth:1990:MSB,Otto:and:Feldman:1997,Pylkov:Zhivotovsky:and:Feldman:1998}, and the departures from the reduction result are the basis of the `deterministic mutation hypothesis' for the evolution of sex \citep{Kondrashov:1982,Kondrashov:1984}.  Other mixed processes studied include:  the evolution of recombination in the presence of migration \citep{Charlesworth:and:Charlesworth:1979,Pylkov:Zhivotovsky:and:Feldman:1998}, or segregation and syngamy (which self-fertilization exposes in the recursion) \citep{Charlesworth:Charlesworth:and:Strobeck:1979,Holsinger:and:Feldman:1983:LM}; or models of the evolution of multiple mutation processes \citep[pp. 137--151]{Altenberg:1984}, or mutation in the presence of segregation and syngamy (also exposed in the recursion by self-fertilization \citep{Holsinger:and:Feldman:1983:MM} or fertility selection \citep{Holsinger:Feldman:and:Altenberg:1986}).  

The departures from the reduction principle caused by mixed processes are summarized by the `principle of partial control':  when the modifier gene has only partial control over the transformation occurring at loci under selection, then it may be possible for the part it controls to evolve an increase in rates \citep[pp. 149, 225--228]{Altenberg:1984}.  

\subsubsection{Multiple Hit Processes}

In the majority of these models of mixed processes, the process controlled by the modifier gene still occurs as a single event during reproduction.  Multiple events under modifier control are studied in a model of recombination between multiple loci in \citet{Zhivotovsky:Feldman:and:Christiansen:1994}, \citet{Zhivotovsky:and:Feldman:1995} and \citet{Pylkov:Zhivotovsky:and:Feldman:1998}.  By assuming fitness differences near zero (i.e. weak selection), two alleles per locus, and an unlinked modifier locus, these studies obtain analytic results for a modifier gene that has arbitrary control over recombination distributions that include multiple recombination events.

The main result in \citet{Zhivotovsky:Feldman:and:Christiansen:1994} is that they find a more sophisticated reduction principle at work: a new modifier allele will increase when rare if and only if it reduces a certain \emph{weighted sum} of recombination probabilities.  Notably,  particular recombination events may evolve an increase in rates, as long as the weighted sum is decreased.  This more complex result is distinguished by the term `generalized reduction principle'.

\subsubsection{Multiple Hit Processes Under Strong Selection}

Can the constraints of weak selection and two alleles per locus in these prior studies be dropped?  That is the aim of this paper.  Techniques from the proof of Theorem 5.1 in \citet{Karlin:1982} allow one to obtain analytic results for a more general modifier model with:
\benu
\item multiple loci under selection,
\item multiple alleles at those loci,
\item arbitrary viability selection regimes of any strength,
\item arbitrary control over the rates of the multiple events, and
\item arbitrary numbers of cell divisions from zygote to gamete.
\eenu
The latter generalization --- to multiple cell divisions in the gamete line --- is novel to this study; multiple cell divisions fundamentally rule out models with linear variation, and require multiple-hit theory.

To use these techniques, however, a different set of constraints is needed:
\benu
\item the population begins fixed at the modifier locus,
\item the only type of event is mutation,
\item mutation events occur at each locus independently,
\item the mutation rates at each locus are scaled equally by the modifier locus,
\item mutation distributions must have the form of transition probabilities of reversible Markov chains, and
\item no other genetic processes occur, including recombination.
\eenu
This last constraint --- an absence of recombination --- produces the greatest distance from realism in this model.  The distance of a constraint from physical reality, however, may not reflect its distance mathematically from techniques that make it unnecessary.  This was the case with the reduction results in \citet{Altenberg:1984} and \citet{Altenberg:and:Feldman:1987}, whose proofs depend on the assumption that no recombination occurs between the modifier locus and the loci under selection; addition of a single mathematical technique allowed this constraint to be dropped in the proof of the reduction result \citep{Altenberg:2009:Linear}.  The present results, which require an absence of recombination, are similarly presented with the hope that future developments will allow this constraint to be removed.

The modifier gene here is assumed to produce linear variation for {\em single} loci, but multiple independent events occur over multiple loci.  In other words, the modifier gene scales equally the mutation probabilities between all alleles at each single locus, but the probability of multiple mutations is the product of the probabilities of the single mutations.  Furthermore, the modifier is allowed arbitrary control over the mutation rate parameter for each locus.

Under these assumptions, one loses the use of Karlin's Theorem 5.2, since it is impossible generically for the modifier gene to produce linear variation in transmission.  However, Karlin has another theorem --- 5.1 --- which applies to a different form of variation that has many more degrees of freedom (see \eqref {eq:Theorem5.1}).  And, as it turns out, the form of variation treated in Karlin's Theorem 5.1 is perfectly suited to multiple event models. 

Karlin's Theorem 5.1 --- which does not appear to have been utilized in the literature since its original publication \citep{Karlin:1982} --- here comes into its own.  Theorem 5.1 considers stochastic matrices Karlin defines as {\em symmetrizable}, which affords use of the Rayleigh-Ritz variational characterization of  the spectral radius.  While Karlin does not mention it, symmetrizable stochastic matrices are one and the same as transition matrices for time-reversible Markov chains (see Lemma \ref {Lemma:Reversible}), which are assumed for most models of mutation in phylogenetic reconstruction \citep{Squartini:and:Arndt:2008}).  In an earlier version of this paper, I used Theorem 5.1 directly, but with further consideration it turns out that the critical tools needed are actually certain steps in Karlin's proof \citep[pp. 114--116, 197--198]{Karlin:1982}.

Application of these tools to this multiple-hit model yields ---unsurprisingly--- {\bf the reduction result}.  Moreover, the result has the form of the `generalized reduction principle' delineated by \citet{Zhivotovsky:Feldman:and:Christiansen:1994}, in that mutation rates can increase at some loci provided that mutation rates decrease sufficiently at other loci.  

The weighted average of the mutation rates found by \citet{Zhivotovsky:Feldman:and:Christiansen:1994} to be the criterion for the initial increase of the modifier is shown here to actually be the linear limit of a larger object:  namely, a smooth manifold of mutation rates that divides the space of mutation rates into those that will cause a modifier to invade, and those that will cause it to go extinct.  The existence of this manifold is found to be a topological necessity from the single-locus reduction result, shown using the Intermediate Value Theorem and Implicit Function Theorem.

For clarity, the main results of the paper from \Secref{sec:Main} are previewed here:

\begin{Main}[Multivariate Reduction Principle for Symmetrizable Mutation Rates at Multiple Loci]
Consider a genetic system in which a modifier locus controls the mutation rates of a group of loci under viability selection.  Mutations occur independently among the loci under selection.  In a population near equilibrium under a stable mutation-selection balance, fixed at the modifier locus, let a new allele of the modifier locus be introduced.  The new modifier allele can change the mutation rate parameter separately for each locus, and each parameter scales equally the probability of mutations at that locus. 

Under the following constraints:
\benu
\item mutation rates at each locus range between 0 and $1/2$,
\item no recombination or other transformation process acts on the genes,
\item the mutation matrix for each locus is irreducible, and
\item is the transition matrix for some reversible Markov chain,
\eenu
then the new modifier allele will increase (decrease) in frequency at a geometric rate if, among the loci that affect the marginal fitnesses:
\begin{enumerate}
\item  it reduces (increases) the mutation rate at any locus, and does not increase (decrease) the mutation rates at any locus;
\item it increases the mutation rates for at least one locus, and decreases the mutation rates for at least one locus, and falls below (above) the neutral manifold of mutation rates that includes the mutation rates at the equilibrium.   Should the mutation rates produced by the new modifier allele fall on this neutral manifold, then it will not change frequency at a geometric rate.
\end{enumerate}
Moreover, the further that the new set of mutation rates is from the neutral manifold, the stronger is the eventual induced selection for (against) the new modifier allele, up to a maximum fitness of $\max_i \wh_i / \wbh$ for a modifier allele that eliminates all mutation.

These results hold, in the case of multicellular organisms, for arbitrary numbers of cell divisions between gamete generations.  \finalversion{The strength of selection on the modifier locus scales in proportion to the number of cell divisions in the germline, and increases with the number of loci controlled by the modifier.}
\end{Main}

The paper proceeds with an introduction to the general modifier gene model, followed by development of mathematical tools that will be used, key theorems, and finally their application to the modifier model.  It concludes with a discussion of the particular implications of the main results, a discussion on the nature of models that depart from the reduction principle, and a conjecture about departures from the reduction principle that embodies the proposed explanation and can readily be tested.

\section{The General Evolutionary Model}

I give a condensed exposition of the general modifier model developed in \citet{Altenberg:1984} and \citet{Altenberg:and:Feldman:1987}, used in \citet{Zhivotovsky:Feldman:and:Christiansen:1994}, \citet{Zhivotovsky:and:Feldman:1995}, and \citet{Pylkov:Zhivotovsky:and:Feldman:1998}, and described recently in detail in \citet{Altenberg:2009:Linear}.  

The genome is structured in two parts:  a group of loci experiencing natural selection, and, external to the group, a neutral locus that modifies their genetic transmission probabilities.   The model assumes an infinite population, random mating, non-overlapping generations, frequency-independent viability selection, sex symmetry, and no sex-linkage.  Although selection acts on diploid genotypes, the haplotype frequencies become dynamically sufficient state variables under random mating.  Haplotypes have two indices:  one for the haplotype of the loci under selection ($i, j, k$), and one for the allele at the modifier locus ($a, b, c$).  The modifier allele is assumed to be transmitted without alteration and in Mendelian proportions (no mutation nor segregation distortion), so that the only force acting upon it is from associations it forms with the loci under selection.

The recursion on the frequency of haplotypes from one generation to the next is:
\eb
\label{eq:ModifierModel}
\wb  \;  z_{ai}' =  \sum_{bjk} T_{(r)}(ai \la aj | bk) \; w_{jk}  \; z_{aj}  \; z_{bk}
\ee
where
\bd
\item[$z_{ai}$]  is the frequency of the haplotype with allele $a$ at the modifier locus, and haplotype $i$ at the loci under selection; $z_{ai}'$ is the next generation;
\item[$w_{jk} = w_{kj}$] is the fitness of diploid genotype $jk$ for the loci under selection;
\item[$\wb$] $\dsty \eqdef \sum_{abjk}w_{jk}  \; z_{aj}  \; z_{bk}$ is the mean fitness of the population.
\item[$r_{ab}$] is the probability of recombination between the modifier locus and the nearest locus under selection,
\item[$T_{(r)}(ai \la aj | bk)$] is the probability that parental haplotypes $aj$ and $bk$ produce an offspring haplotype $ai$, conditioned on the modifier allele of the offspring being $a$:
\[
T_{(r)}(ai \la aj | bk) \eqdef (1-r_{ab}) T(ai \la aj | bk) + r_{ab} T^{\Join}(ai \la ak | bj),
\]
where the probability that parental genotype $aj, bk$ produces gamete haplotype $ai$ is:
\item[$T(ai \la aj | bk),$] when no recombination occurs between the modifier and nearest locus under selection, and
\item[$T^{\Join}(ai \la ak | bj),$] when recombination occurs between the modifier and nearest locus under selection (hence $aj | bk$ becomes $ak | bj$).  If there is no position effect from the modifier locus, then $T(ai \la aj | bk) = T^{\Join} (ai \la aj | bk).$
\ed
So, $1 = \sum_i  T(ai \la aj | bk) = \sum_i  T^{\Join}(ai \la ak | bj) , \; \forall a, b, j, k$.

At this point it becomes appropriate to point out a fundamental property of genetic transmission: 

\begin{Result}[Sufficiency of the Mean Transmission Probabilities]  The transmission probabilities enter into the dynamics of the haplotype frequencies of the loci under selection solely through their population averages, regardless of any form of underlying genetic variation for the transmission probabilities. 
\end{Result}
\begin{proof}
Let
\[
v_i \eqdef \sum_{a} z_{ai} 
\]
represent the frequency of haplotype $i$ comprising the loci under selection.  The population average of the transmission probabilities experienced by the loci under selection is:
\begin{align}
\label{eq:Tbar}
\bar{T}_{(r)}(i \la j | k) &\eqdef \frac{\sum_{ab} T_{(r)}(ai \la aj | bk) \;  z_{aj}  \; z_{bk}}{\sum_{ab}  z_{aj}  \; z_{bk}} \notag \\
&= \frac{1}{ v_{j}  \; v_{k}}\sum_{ab} T_{(r)}(ai \la aj | bk)  \;  z_{aj}  \; z_{bk}.
\end{align}
The recursion on $v_i$ is thus:
\begin{align}
\wb  \;  v_i' &= \sum_{a} z_{ai}' =  \sum_{abjk} T_{(r)}(ai \la aj | bk) \; w_{jk}  \; z_{aj}  \; z_{bk} \notag \\
&=  \sum_{jk} \bar{T}_{(r)}(i \la j | k) \; w_{jk}  \; v_{j}  \; v_{k}  \label{eq:vrecursion}.
\end{align}
Hence, any modifier polymorphism enters the dynamics of $v_i$ solely through the population mean $\bar{T}_{(r)}(i \la j | k)$.  
\end{proof}

The mean transmission probabilities $\bar{T}_{(r)}(i \la j | k)$ thus behave like a sufficient statistic, in that no additional details  about $z_{ai}$ or $T_{(r)}(ai \la aj | bk)$ matter to the value of $v_i'$.  Hence $\bar{T}_{(r)}(i \la j | k)$  screens off \citep{Salmon:1971,Salmon:1984,Brandon:1982} any details of polymorphisms of the modifier locus, such as allele frequencies or linkage disequilibrium. 

It should be noted that \eqref{eq:vrecursion} cannot be used to \emph{define} the dynamics, because $\bar{T}_{(r)}(i \la j | k)$ is itself subject to change that is not definable in terms of $\{ v_i \}$.  Hence the $\{ z_{ai} \}$ are {dynamically} sufficient state variables \citep[pp. 6--8]{Lewontin:1974}, while $\{ v_{i} \}$ are not.

\subsection{Equilibrium Relations}
A population at equilibrium under \eqref{eq:ModifierModel} must satisfy the constraint for each $b$:
\eb
\label{eq:Equilibrium}
\wbh \; \zh_{bi} =  \sum_{cjk} T_{(r)}(bi \la bj | ck) \; w_{jk}  \; \zh_{bj}  \; \zh_{ck},
\ee
where $\hat{\ }$ indicates the marked variable is at an equilibrium value.  In vector form:
\eb
\label{eq:EquilibriumFixed}
\zvh_{b} =   \M_{(b)} \; \D  \; \zvh_{b},
\ee
where
\bd
\item [$\zvh_{b}$] $\eqdef (\zh_{b1} \ \zh_{b2} \cdots \zh_{bn})^\tp$ ($\scriptstyle \tp$ is the transpose),
\item [$n$] is the number of different haplotypes for the group of loci under selection,
\item [$\D$] $\eqdef \diagm{\wh_i / \wbh }_{i,j=1}^{n}, $
\item [$\wh_j$] $\dsty \eqdef \sum_{ck} \zh_{ck} w_{jk}$, and
\item [$\M_{(b)}$] $\dsty \eqdef \matrx{\sum_{c k} T_{(r)}(bi \la bj | ck) \; \frac{w_{jk}}{\wh_j}  \; \zh_{ck} }_{i,j=1}^n$.
\ed
Note that $\D$ a non-negative diagonal matrix, and $\M$ is a (column) stochastic matrix, since $\sum_i T_{(r)}(bi \la bj | ck) = 1$ for all $b, c, j, k$, hence
$$
\sum_i [\M_{(b)}]_{ij} = \sum_i \sum_{c k} T_{(r)}(bi \la bj | ck) \; \frac{w_{jk}}{\wh_j}  \; \zh_{ck} 
= \sum_{c k}  \frac{w_{jk} \; \zh_{ck} }{\wh_j} = \frac {\wh_j}{\wh_j} = 1.
$$

A perturbation of the equilibrium to $z_{bi} = \zh_{bi} + \epsilon_{bi}$ produces:
\eba
\label{eq:Perturbation}
\lefteqn{\left(\wbh + 2 \sum_{bjck} \epsilon_{bj} w_{jk}  \zh_{ck}+ \sum_{bjck} \epsilon_{bj} \epsilon_{ck}\right)  \;   (\zh_{bi} + \epsilon_{bi}')} \\
&=&  \sum_{cjk} T_{(r)}(bi \la bj | ck) \; w_{jk}  \; (\zh_{bj} + \epsilon_{bj} ) \; (\zh_{ck} + \epsilon_{ck}). \nonumber
\eea
The system \eqref{eq:Perturbation} is assumed to be stable to {\em internal} perturbations, i.e. for perturbations where $\epsilon_{bi} \neq 0$ only for $b, i \suchthat \zh_{bi} > 0$.
  
\subsection{Initial Increase of a New Modifier Allele}
The long-term evolution of genetic transmission depends on the properties that allow a new modifier allele to invade a population and be protected from extinction.  Hence the analysis focuses on perturbations of the equilibrium by rare modifier alleles, entailing $\zh_{ai} = 0$ for all $i$ for new modifier allele $a$.  Making this substitution, and ignoring all second and higher order terms in the perturbation, the linear recursion on a new modifier allele, $a$, that perturbs \eqref{eq:Equilibrium} can be represented in vector form as:
\eb
\label{eq:ExternalStability}
\epsi_a' = \M_{(a)} \; \D \; \epsi_a \ ,
\ee
where
\begin{align}
\epsi_a &\eqdef (\epsilon_{a1} \ \epsilon_{a2} \cdots \epsilon_{an} )^\tp, \\
\intertext{and}
\label{eq:Ma}
& \M_{(a)}  \eqdef   \matrx{\displaystyle \sum_{bk} T_{(r)}(ai \la aj | bk) \dspfrac{w_{jk}}{\wh_j} \zh_{bk}}_{i, j=1}^{n}  \\
&  =  (1\!\!-\!\!r) \matrx{\displaystyle \sum_{bk} T(ai \la aj | bk) \dspfrac{w_{jk}}{\wh_j} \zh_{bk}}_{i, j=1}^{n}
\!\!\!\! +   r  \matrx{\displaystyle \sum_{bk} T^{\Join}(ai \la ak | bj) \dspfrac{w_{jk}}{\wh_j} \zh_{bk}}_{i, j=1}^{n} \nonumber.
\end{align}

Modifier allele $a$ will increase at a geometric rate when rare if and only if the spectral radius $\rho(\M_{(a)}\D)$ exceeds 1, and will decrease at a geometric rate when rare if and only if the spectral radius $\rho(\M_{(a)}\D)$ is less than 1.  Clearly, if $\D = \I$, then $\rho(\M_{(a)}\D) = \rho(\M_{(a)}) = 1$, so geometric rates of change in modifier allele frequencies require $\D \neq \I$, a situation described by saying there is a positive {\it selection potential} (\citealt[``fitness load'' p. 63]{Altenberg:1984}; \citealt{Altenberg:and:Feldman:1987}):
\eb
\label{eq:SelectionPotential}
V =  \frac{\max_i \wh_i}{ \wbh} - 1 > 0.
\ee
We know from \eqref {eq:EquilibriumFixed} that $\rho(  \M_{(b)} \D) = 1$, since $\zvh_{b} = \M_{(b)}  \D \,  \zvh_{b}$, provided $\zvh_b \geq \neq \0$ is the only nonnegative eigenvector of $\M(b) \D$.

The analysis consists of evaluating how the relationship between $\M_{(a)}$ and the matrices $\{ \M_{(b)} \}$ maps to the relationship between $\rho(\M_{(a)}\D)$ and $\rho(\M_{(b)}\D) = 1$.

\subsection{Constraints for Tractability}\label{sec:Constraints}
Evaluating how the relationship between $\M_{(a)}$ and the matrices $\{ \M_{(b)} \}$ affects $\rho(\M_{(a)} \D)$ is, in general, difficult.  The addition of three constraints makes it tractable:  
\benu
\item Mutation is the only transformation process acting on the loci under selection;
\item the modifier locus is fixed on a single allele in the initial population; and 
\item the modifier locus is tightly linked to the loci under selection.
\eenu

{\bf Mutation}.  In mutation, the products from transformation of a haplotype depend on that haplotype alone, not on the haplotype from the other parent, so $T_{(r)}(ai \la aj | bk)$ can be simplified to $T_{(r)}(ai \la aj | b)$, and \eqref{eq:Ma} becomes:
\begin{align}
& \M_{(a)} =  \matrx{\displaystyle \sum_{bk} T_{(r)}(ai \la aj | b) \dspfrac{w_{jk}}{\wh_j} \zh_{bk}}_{i, j=1}^{n}\nonumber\\
&= (1-r) \matrx{\displaystyle \sum_{bk} T(ai \la aj | b) \dspfrac{w_{jk}}{\wh_j} \zh_{bk}}_{i, j=1}^{n}
+  r \matrx{\displaystyle \sum_{bk} T^{\Join}(ai \la ak | b) \dspfrac{w_{jk}}{\wh_j} \zh_{bk}}_{i, j=1}^{n} \notag
\end{align}

{\bf Fixation of the Modifier Locus}.  The sum over $k$ involves only the terms $w_{jk} \zh_{bk}$, and can be made to cancel out the $\wh_j$ term if the initial population is fixed on a single modifier allele $b$, since in that case
$$
\sum_{k} w_{jk} \zh_{bk} = \sum_{bk} w_{jk} \zh_{bk} =  \wh_j,
$$
and therefore
\eba
\label{eq:Mmut}
\M_{(a)}&=& (1-r) \matrx{\displaystyle T(ai \la aj | b) \dspfrac {\wh_j}{\wh_j} }_{i, j=1}^{n} 
+  r \matrx{\displaystyle \sum_{k} T^{\Join}(ai \la ak | b) \dspfrac{w_{jk}}{\wh_j} \zh_{bk}}_{i, j=1}^{n}
\nonumber\\
&=& (1-r) \, \T_{ab} + r \, \T^{\Join}_{ab} \ \diagD{\zvh_b} \Wm \ \D^{-1} / \ \wbh,
\eea
where 
\bd
\item[$\Wm$] $ \eqdef \matrx{w_{ij}}_{i,j = 1}^L$ is the matrix of fitness coefficients; 
\item[$\diagD{\zvh_b}$] represents a diagonal matrix whose diagonal entries are the entries of the vector $\zvh_b$;
\item[$\T_{ab}$] $\eqdef  \matrx{\displaystyle  T(ai \la aj | b) }_{i, j=1}^{n}$, and
\item[$\T^{\Join}_{ab}$] $\eqdef  \matrx{\displaystyle T^{\Join}(ai \la aj | b)}_{i, j=1}^{n}$.
\ed
The matrices $\T_{ab}$ and $\T^{\Join}_{ab}$ do not depend on either the selection coefficients or the haplotype frequencies, which is a great simplification of $\M_{(a)}$.  Additional, more compelling, reasons to fix the initial population on a single modifier allele arise from the structure of the transmission matrix, described in \Secref{subsubsec:ModifierPolymorphisms}.  Hence, fixation of the initial population on modifier allele $b$ is assumed throughout the remainder of the paper.

{\bf No Recombination with the Modifier Locus}. The analysis here follows Karlin's \citeyearpar[p. 198]{Karlin:1982} application of the Rayleigh-Ritz variational characterization of the spectral radius (\citealt[pp. 172--173]{Wilkinson:1965}, \citealt[pp. 176--180]{Horn:and:Johnson:1985}), which  requires that $\M_{(a)}$ be \emph{symmetrizable} --- i.e. of the form $\Lm \Sm \R$, where $\Lm$ and $\R$ are positive diagonal matrices, and $\Sm$ is a real symmetric matrix.  

For tight linkage of the modifier gene to the loci under selection, $r=0$, so \eqref{eq:Mmut} becomes simply $\M_{(a)} = \T_{ab}$, and symmetrizable $\T_{ab}$ are readily defined.  This is treated in \Secref{sec:MathematicalTools}.  

However, for looser linkage, $r > 0$, a key step in the analysis is blocked (see footnote \ref{foot:rec} in the proof of Theorem \ref{Theorem:MultivariateMutation}).  When $r > 0$, the term 
\[
\frac{1}{\wbh} \diagD{\zvh_b} \Wm \ \D^{-1}
\]
precludes finding families of $\M_{(a)}$ that are generically symmetrizable.     For, suppose that
\[
\M_{(a)} = (1-r) \, \T_{ab} + r \, \T^{\Join}_{ab} \ \diagD{\zvh_b} \Wm \ \D^{-1} / \ \wbh = \Lm \Sm \R.
\]
Then
\[
\Sm = (1-r) \, \Lm^{-1} \T_{ab} \R^{-1} + r \, \Lm^{-1} \T^{\Join}_{ab} \ \diagD{\zvh_b} \Wm \ \D^{-1} \R^{-1} / \ \wbh .
\]
But if this expression is to be symmetric for any $r$, then both $\Lm^{-1} \T_{ab} \R^{-1}$ and \linebreak $\Lm^{-1} \T^{\Join}_{ab} \ \diagD{\zvh_b} \Wm \ \D^{-1} \R^{-1}$ must be symmetric.  The first term requires $\T_{ab}$ be symmetrizable, as before.  But the second term, to be symmetric, forces mutation rate matrix $\T^{\Join}_{ab}$ to depend on the equilibrium haplotype frequencies $\zh_{bi}$, and the marginal fitnesses $\wh_i$, which is contrary to the biological basis of mutation rates, non-generic, and not useful for understanding the selective forces on mutation rates.  

Hence, for the remainder of the analysis, it is assumed that there is tight linkage between the modifier and the loci under selection, that mutation is the sole transformation process, and the initial population begins fixed on modifier allele $b$.  Relaxation of each of these constraints would be a goal for future analytical methods.

\subsection{Multilocus Mutation Structure}

The biology of mutation provides a natural structure for multiple events.  Each nucleotide is a locus for a possible mutation event.  And in multicellular organisms, each cell division in the gamete lineage provides opportunities for mutation events (ranging from approximately 9 cell divisions in nematodes, 36 in flies, to 200 in humans \citep{Lynch:Sung:etal:2008}).

Assuming that mutations occur independently at each nucleotide, and independently from one cell division to the next, the the probability of multiple events is just the product of the probabilities of each event individually.  This is a standard assumption in many phylogenetic inference models (e.g. see \citealt{Yang:and:Nielsen:2002}, \citealt{Whelan:and:Goldman:2004}).   The modifier gene is posited to rescale equally the probabilities of all single events at each locus.  So the modifier gene could be said to produce linear variation at each {\em single locus}, but not over the entire haplotype.  Multiple cell divisions between gamete generations are represented in the dynamics simply as multiple powers of the mutation matrix.

Multiple non-independent mutation events do happen in nature, however.  A mutational event may involve multiple nucleotides.  To decompose the probabilities in this case would require nested sums and products of transition matrices (an example with dinucleotide dependencies is modeled by \citet{Squartini:and:Arndt:2008} in the context of phylogenetic processes), and will not be pursued here.  

The implementation of these assumptions is as follows.  Let:
\bd
\item[$L$] be the number of loci under selection, and $\xi, \kappa \in \{1, \ldots, L\}$ index the loci;
\item[$\muv$] be an $L$-long vector of mutation rates, one rate $\mu_ \xi $ for each locus $\xi $, whose values are controlled by the modifier gene; 
\item[$\mu_\xi P^{(\xi)}_{ij}$] be the probability of mutation from allele $j$ to allele $i$ at locus $\xi$;
\item [$\Pm^{(\xi)}$] $ \eqdef \matrx{ P^{(\xi)}_{ij}}_{i,j = 1}^{\nu_\xi}$ be the $\nu_\xi \times \nu_\xi$ transition matrix representing the mutation distribution at locus $\xi$;
\item[ $\nu_\xi$] be the number of possible alleles at locus $\xi$; and together,
\item [$\M^{(\xi)}_{\mu_\xi}$] $\eqdef (1-\mu_\xi) \I^{(\xi)} + \mu_\xi \Pm^{(\xi)}  = \I^{(\xi)} + \mu_\xi ( \Pm^{(\xi)} -  \I^{(\xi)} )$ is the $\nu_\xi \times \nu_\xi$ transmission matrix for alleles at locus $\xi$.
\ed

Under the assumption that each locus mutates independently of the other loci, the transmission matrix for the entire space of haplotypes of loci under selection is represented by the Kronecker (tensor) product ($\otimes$):
\eba
\label{eq:Mdef}
\M_\muv &=& \bigotimes_{\xi=1}^L \M^{(\xi)}_{\mu_\xi}
=  \bigotimes_{\xi=1}^L [ (1-\mu_\xi) \I^{(\xi)} + \mu_\xi \Pm^{(\xi)}  ] .
\eea
I will use the terms \emph{multivariate, multiplicative} variation to refer to the way that \eqref{eq:Mdef} maps variation in $\muv$ to variation in $\M_\muv$.

\subsubsection{Consequences for Modifier Polymorphisms}
\label{subsubsec:ModifierPolymorphisms}

Multivariate, multiplicative variation does not allow for the elegant ``viability analogous, Hardy-Weinberg'' (VAHW) \finalversion{equilibria, $\zvh = \yvh \otimes \vvh$,} that arise in modifier models with linear variation \finalversion{($\yvh$ is the frequency vector of the modifier alleles)(\citealt{Feldman:and:Krakauer:1976}, \citealt[pp. 130--169]{Altenberg:1984}, \citealt{Feldman:and:Liberman:1986}, \citealt{Liberman:and:Feldman:1986:MMR}, \citealt{Liberman:and:Feldman:1986:GRP})}.  

In VAHW equilibria, the parameter controlled by the modifier allele behaves as if it were a viability fitness coefficient (one minus that parameter, actually).  The transmission probabilities have the  parameterized form:
\[
T(ai \la aj | bk) = T_{ \alpha_{ab}} (i \la j|k),
\]
where the modifier locus genotype $(a,b)$ enters solely through the parameter $\alpha_{ab}$.  The VAHW structure requires that population averages of the transmission rates, as in \eqref{eq:Tbar}, be expressible as $T_{\alphab } (i \la j|k)$ for some $\alphab$.  This is possible if and only if  the space of variation in transmission is convex.  

But convexity no longer holds for multiplicative variation.  For one-locus mutation with linear variation, the convexity of the space is seen by its form:
$$
\Mc \eqdef \{ (1-\mu) \I + \mu \Pm \suchthat \mu \in (0, 1/2) \},
$$
\finalversion{where, for} a set $\{ p_i \suchthat p_i \geq 0, \sum_{i} p_i  = 1\}$, one has
$$
\sum_{i} p_i \M_{\mu_i} = \M_{\mub}, \mbox{\ where \ } \mub \eqdef  \sum_{i} p_i \mu_i.
$$
For multivariate, multiplicative variation, the space of variation is,
$$
\Mc \eqdef \{ \bigotimes_{\xi=1}^L [ (1-\mu_\xi) \I^{(\xi)} + \mu_\xi \Pm^{(\xi)}  ]  \suchthat \mu_\xi \in (0, 1/2) \}
$$
If $\mu_\xi$ varies for more than one $\xi$, $\Mc$ is no longer a convex set, so population averages of $\M_\muv$ over different $\muv$ are not of the form $\M_\etav$, for any $\etav$.  

This can be seen in the simplest case where $L = 2$.  Let there be two set of different mutation rates for each locus, and take their weighted average using $p, 1-p$, $0 < p < 1$.  Now, suppose there is $\muvb$ such that:
\eban
\lefteqn{\M_\muvb = p \M_{\muv_1} + (1-p) \M_{\muv_2}} \\
& = &p [ (1- \mu^{(1)}_1) \I^{(1)} + \mu^{(1)}_1 \Pm^{(1)}  ] \otimes [ (1-\mu^{(2)}_1) \I^{(2)} + \mu^{(2)}_1 \Pm^{(2)}  ]  \\
&+ & (1-p) [ (1- \mu^{(1)}_2) \I^{(1)} + \mu^{(1)}_2 \Pm^{(1)}  ] \otimes [ (1-\mu^{(2)}_2) \I^{(2)} +\mu^{(2)}_2 \Pm^{(2)}  ] 
 \eean
(here $\mu^{(1)}_1$ and $\mu^{(1)}_2$ refers to the two mutation rates at locus 1). Equating coefficients on each matrix term,
\begin{align*}
 (1- \mub^{(1)} ) (1- \mub ^{(2)}) 
& = p  (1- \mu^{(1)}_1) (1-\mu^{(2)}_1) +   (1- p) (1- \mu^{(1)}_2) (1-\mu^{(2)}_2), \\
\mub^{(1)}  (1- \mub ^{(2)})  &= p  \mu^{(1)}_1 (1-\mu^{(2)}_1) +   (1- p)  \mu^{(1)}_2 (1-\mu^{(2)}_2) ,\\
(1- \mub^{(1)} ) \mub ^{(2)}  &= p  (1- \mu^{(1)}_1) \mu^{(2)}_1 +   (1- p)  (1- \mu^{(1)}_2)\mu^{(2)}_2, \mbox{\ and \ }\\
\mub^{(1)} \mub ^{(2)}  &= p  \mu^{(1)}_1 \mu^{(2)}_1 +   (1- p)  \mu^{(1)}_2 \mu^{(2)}_2 .
\end{align*}
The result of adding the last two equations, and adding the second and last equations:
\begin{align*}
\mub ^{(2)}  = p  \mu^{(2)}_1 +   (1- p)  \mu^{(2)}_2, \ 
\ \mub^{(1)}   = p  \mu^{(1)}_1 +   (1- p)  \mu^{(1)}_2 
\end{align*}
gives:
\begin{align*}
\mub^{(1)} \mub ^{(2)} &= p  \mu^{(1)}_1 \mu^{(2)}_1 \  +  \   (1- p)  \mu^{(1)}_2 \mu^{(2)}_2 \\
&= [ p  \mu^{(1)}_1 +   (1- p)  \mu^{(1)}_2]  \   [ p  \mu^{(2)}_1 +   (1- p)  \mu^{(2)}_2 ] 
\end{align*}
which requires either $\mu^{(1)}_1= \mu^{(1)}_2$, or $\mu^{(2)}_1 = \mu^{(2)}_2$, which leaves  only one locus with mutation rate variation, or $p=0$, or $p=1$, which is fixation of the modifier.

 Thus, when the modifier locus is polymorphic and varies the mutation rates at more than one locus, averages over the modifier alleles can no longer be summarized by the averages of the mutation rate parameters, but instead yield mean transmission matrices \eqref{eq:Tbar} that fall outside of $\Mc$.  Hence, the relation between $\M_{(a)}$ and the matrices $\{ \M_{(b)} \}$ in \eqref{eq:EquilibriumFixed} is not simple.  
 
The analysis of modifier polymorphisms for multivariate, multiplicative variation in transmission will require techniques that can handle more general spaces of variation in transmission, a topic left for another study.

\section{Mathematical Tools}
\label{sec:MathematicalTools}
The analysis here is made possible with the techniques used in Theorem 5.1 of \citet[pp. 114--116, 197--198]{Karlin:1982}.  The theorem is restated as follows:

\begin{Definition} [Symmetrizable Matrices]
A square, real matrix $\A$ is called {\em symmetrizable} to a symmetric real matrix $\Sm$  if it can be represented as a product $\A = \Lm \Sm \R$, where $\Lm$ and $\R$ are positive diagonal matrices. 
\end{Definition}
\begin{Theorem}\protect{\citep[Theorem 5.1, pp. 114--116, 197--198]{Karlin:1982}}.  
\label{Theorem5.1} 
Consider a family of stochastic matrices that commute and are symmetrizable to positive definite matrices:
\eb
\label{eq:Mi}
\Fc \eqdef \{ \M_h = \Lm \Sm_h \R \colon \M_h \M_k = \M_k \M_h \},
\ee
where $\Lm$ and $\R$ are positive diagonal matrices, and each $\Sm_h$ is a positive definite symmetric real matrix.   Let $\D$ be a positive diagonal matrix.  Then\footnote{The version in \citet[pp. 642--647]{Karlin:1976} is \eqref {eq:Theorem5.1}, but \citet[Theorem 5.1, p. 116]{Karlin:1982} states strict inequality, although the proof, pp. 197--198, does not exclude equality.  Strict inequality holds provided all $\M_h$ are irreducible.} for each $\M_h, \M_k \in \Fc$:
\eb
\label{eq:Theorem5.1}
\rho(\M_h \M_k \D) \leq \rho(\M_k \D).
\ee
\end{Theorem}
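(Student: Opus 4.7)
The plan is to pass from each of $\M_k\D$ and $\M_h\M_k\D$ to a similar symmetric matrix, exploit the commutation hypothesis to reduce the comparison to a Loewner-order inequality, and conclude via Rayleigh--Ritz (equivalently, Weyl's monotonicity).

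\textbf{Symmetrization.} Set $\Delta \eqdef \Lm\R$ (a positive diagonal matrix). A direct diagonal similarity shows that $\M_h = \Lm\Sm_h\R$ is similar to $\Gamma_h \eqdef \Delta^{1/2}\Sm_h\Delta^{1/2}$, which is symmetric and positive definite since $\Sm_h$ is; the same conjugation applied to $\M_h\D = \Lm\Sm_h(\R\D)$ (symmetrizable by the pair $(\Lm,\R\D)$) gives similarity to $\D^{1/2}\Gamma_h\D^{1/2}$, likewise symmetric positive definite. Because $\M_h$ is column stochastic, Perron--Frobenius gives $\rho(\M_h)=1$, so the largest eigenvalue of $\Gamma_h$ equals $1$ and every eigenvalue of $\Gamma_h$ lies in $(0,1]$; in particular $\I-\Gamma_h$ is positive semidefinite.

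\textbf{Product step.} The factorization $\M_h\M_k = \Lm(\Sm_h\Delta\Sm_k)\R$ shows that $\M_h\M_k$ is symmetrizable by $(\Lm,\R)$ with middle factor $\Sm_h\Delta\Sm_k$. The hypothesis $\M_h\M_k=\M_k\M_h$, after canceling $\Lm$ on the left and $\R$ on the right, forces $\Sm_h\Delta\Sm_k = \Sm_k\Delta\Sm_h$, which equals $(\Sm_h\Delta\Sm_k)^{\tp}$ since $\Sm_h,\Sm_k$ are symmetric and $\Delta$ is diagonal. Hence $\Sm_h\Delta\Sm_k$ is symmetric, and the same similarity as above shows $\M_h\M_k\D$ is similar to $\D^{1/2}\Gamma_h\Gamma_k\D^{1/2}$. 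This matrix is symmetric and positive definite, because the commuting symmetric positive definite matrices $\Gamma_h,\Gamma_k$ are simultaneously orthogonally diagonalizable, making their product symmetric with positive eigenvalues.

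\textbf{Loewner comparison and conclusion.} Since $\I-\Gamma_h$ and $\Gamma_k$ commute and are both positive semidefinite (the former by the stochasticity bound on the spectrum of $\Gamma_h$, the latter by positive definiteness), their product $(\I-\Gamma_h)\Gamma_k = \Gamma_k - \Gamma_h\Gamma_k$ is symmetric positive semidefinite. Congruence by the nonsingular $\D^{1/2}$ preserves semidefiniteness, so $\D^{1/2}\Gamma_h\Gamma_k\D^{1/2} \preceq \D^{1/2}\Gamma_k\D^{1/2}$, and Weyl's monotonicity for symmetric matrices yields $\lambda_{\max}(\D^{1/2}\Gamma_h\Gamma_k\D^{1/2}) \leq \lambda_{\max}(\D^{1/2}\Gamma_k\D^{1/2})$. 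Translating back through the similarities of the first two steps gives exactly $\rho(\M_h\M_k\D) \leq \rho(\M_k\D)$.

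\textbf{Main obstacle.} The delicate point is the symmetrization of the product in the second step: without commutativity there is no reason for $\Sm_h\Delta\Sm_k$ to be symmetric, $\M_h\M_k\D$ need not have real spectrum, and the entire Rayleigh--Ritz/Loewner chain breaks down. This is precisely why the family $\Fc$ in (\ref{eq:Mi}) must be defined as a commuting family rather than merely as symmetrizable stochastic matrices sharing $\Lm$ and $\R$. The strict-inequality refinement under irreducibility, noted in the footnote, would follow by refining the Loewner step: irreducibility of $\M_h$ forces the kernel of $\I-\Gamma_h$ to be one-dimensional (spanned by the Perron direction), while irreducibility of $\M_k$ guarantees that the Perron eigenvector of $\D^{1/2}\Gamma_k\D^{1/2}$ has a nonzero component orthogonal to that kernel in the congruence-transformed coordinates.
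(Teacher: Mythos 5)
Your proof is correct, and it is worth noting that the paper itself never proves this theorem: it is quoted from \citet{Karlin:1982}, whose argument the paper describes as using ``a specially crafted inner product,'' and the paper only extracts the needed ingredients (the canonical form of Lemma \ref{Lemma:Balance}, the diagonal similarity to a symmetric matrix with the same spectrum, and the Rayleigh--Ritz characterization) for use in Theorem \ref{Theorem:MultivariateMutation}. Your argument is a self-contained matrix-congruence rendering of that same circle of ideas, with two points that go beyond what the paper spells out. First, you identify exactly where the commuting-family hypothesis enters: cancelling $\Lm$ and $\R$ in $\M_h\M_k=\M_k\M_h$ forces $\Sm_h\Lm\R\Sm_k=(\Sm_h\Lm\R\Sm_k)^\tp$, which is what makes the product symmetrizable with a \emph{symmetric} middle factor and keeps its spectrum real; all the similarity computations ($\M_h\D=\Lm\Sm_h(\R\D)$ similar to $\D^{1/2}\Gamma_h\D^{1/2}$, and likewise for the product) check out because the diagonal factors commute. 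Second, you replace the explicit Rayleigh maximizer by the Loewner-order step $(\I-\Gamma_h)\Gamma_k\succeq 0$ --- legitimate since the two factors are commuting positive semidefinite matrices --- followed by congruence by $\D^{1/2}$ and Weyl monotonicity; the positive-definiteness hypothesis on the $\Sm_h$ is used exactly where needed, to ensure the symmetric representatives are positive definite so that spectral radius coincides with largest eigenvalue, and the stochasticity bound $\rho(\M_h)=1$ gives $\I-\Gamma_h\succeq 0$. What this shorter route does not supply is the eigenvector bookkeeping (the maximizer $\xvh(\muv)$ and its relation to the Perron vector) that the paper needs later for its equality analysis, which is why the paper carries the maximizing vector explicitly in its own Theorem \ref{Theorem:MultivariateMutation}. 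Your closing remark on strict inequality under irreducibility is only a sketch, but the theorem as stated claims the weak inequality, which you have fully established.
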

Karlin's proof uses a specially crafted inner product, but here I utilize a canonical form for symmetrizable matrices:

\begin{Lemma}[Canonical Form for Symmetrizable Matrices]
\label{Lemma:Balance} 
A symmetrizable matrix $\A = \Lm \Sm \R$ can always be represented by a single positive diagonal matrix, $\B$, and a symmetric matrix, $\Smh$, that has the same spectrum as $\A$:
\eb
\label{eq:BalanceForm}
\A = \Lm \Sm \R = \B \Smh \B^{-1},
\ee
where 
\eb
\label{eq:B}
\B = \Lm^{1/2} \; \R^{-1/2}  c
\ee
with $c > 0$ any scalar, and 
\eb\label{eq:Smh}
\Smh =  \Lm^{1/2} \; \R^{1/2} \; \Sm \; \Lm^{1/2} \;\R^{1/2}.
\ee  

Furthermore, the Jordan canonical form,  $\Smh = \K \Lam \K^\tp$, with orthogonal matrix $\K$, and real diagonal matrix $\Lam$ of the eigenvalues of $\Smh$ and $\A$, provides a canonical form for symmetrizable $\A$:
\eb
\label{eq:CanonicalForm}
\A = \Lm \Sm \R = \B \K \Lam \K^\tp \B^{-1}. 
\ee
$\B \K$ is the matrix of right eigenvectors of $\A$ (columns), and $\K^\tp \B^{-1}$ is the matrix of left eigenvectors of $\A$ (rows).  $\B$ can be made unique by setting $c$ to a normalizer $c = \min_i[ \Lm^{-1/2} \R^{1/2}]_{ii}$ which yields $\rho(\B) = 1$.
\end{Lemma}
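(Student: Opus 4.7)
The plan is to verify the claimed formulas by direct algebraic substitution, exploiting the fact that $\Lm$ and $\R$ are positive diagonal matrices (hence commute with each other and with their square roots, which are well defined and positive).

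First I would establish \eqref{eq:BalanceForm} by plugging the definitions \eqref{eq:B} and \eqref{eq:Smh} into $\B \Smh \B^{-1}$. Since $\B$ is a scalar multiple of a diagonal matrix, the scalar $c$ cancels in $\B \Smh \B^{-1}$, and the remaining product
\[
\Lm^{1/2} \R^{-1/2} \cdot \Lm^{1/2} \R^{1/2} \Sm \Lm^{1/2} \R^{1/2} \cdot \Lm^{-1/2} \R^{1/2}
\]
collapses to $\Lm \Sm \R$ once the diagonal factors are commuted and paired. This shows the similarity $\A = \B \Smh \B^{-1}$, so $\A$ and $\Smh$ share the same spectrum. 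Next I would check that $\Smh$ is genuinely real symmetric: taking the transpose and using $\Sm^\tp = \Sm$ together with the symmetry of the diagonal factors gives $\Smh^\tp = \Smh$. Therefore $\Smh$ admits an orthogonal spectral decomposition $\Smh = \K \Lam \K^\tp$ with real $\Lam$, and substituting into \eqref{eq:BalanceForm} yields the canonical form \eqref{eq:CanonicalForm}.

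To identify left and right eigenvectors, I would compute $\A (\B\K) = \B \Smh \B^{-1} \B \K = \B \Smh \K = \B \K \Lam$, so the columns of $\B\K$ are right eigenvectors of $\A$ with eigenvalues on the diagonal of $\Lam$; an analogous calculation with $\K^\tp \B^{-1}$ on the left of $\A$ identifies the rows as left eigenvectors. Finally, for the normalization claim, I would note that since $\B$ is positive diagonal, its spectral radius equals its largest diagonal entry $\max_i c\,[\Lm^{1/2} \R^{-1/2}]_{ii}$; choosing $c = \min_i [\Lm^{-1/2} \R^{1/2}]_{ii} = 1/\max_i [\Lm^{1/2} \R^{-1/2}]_{ii}$ makes this maximum equal to $1$, giving $\rho(\B) = 1$ and hence pinning $\B$ uniquely.

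There is no real obstacle here — the lemma is an exercise in bookkeeping with diagonal factors, and the only subtlety is recognizing that the freedom in writing $\A = \Lm \Sm \R$ is absorbed into the scalar $c$, which is why the uniqueness assertion requires an explicit normalizing convention rather than following from the factorization itself.
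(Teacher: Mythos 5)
Your proposal is correct and follows essentially the same route as the paper's own proof: verification of $\A = \B \Smh \B^{-1}$ by direct substitution using the commutativity of the positive diagonal factors, symmetry of $\Smh$ from its sandwich form, the orthogonal spectral decomposition $\Smh = \K \Lam \K^\tp$, the same eigenvector computations $\A(\B\K) = (\B\K)\Lam$ and $(\K^\tp\B^{-1})\A = \Lam(\K^\tp\B^{-1})$, and the identical normalization $c = \min_i[\Lm^{-1/2}\R^{1/2}]_{ii} = 1/\max_i[\Lm^{1/2}\R^{-1/2}]_{ii}$ forcing $\rho(\B)=1$. No gaps; your closing remark about the scalar freedom being the reason uniqueness requires an explicit convention is a fair observation that the paper leaves implicit.
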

\begin{proof}
Verifying by substitution:
\[
\B \Smh \B^{-1} = (\Lm^{1/2} \R^{-1/2}) \;  \Lm^{1/2} \R^{1/2} \Sm  \Lm^{1/2}  \R^{1/2} \; ( \Lm^{-1/2} \R^{1/2} )
= \Lm  \Sm \R.
\]
$\Lm^{1/2}$ and $\R^{1/2}$ exist because $\Lm$ and $\R$ are positive diagonal matrices, and $\R$ and $\Lm$  (and their powers) commute because they are diagonal matrices.  $\Smh$ is symmetric by the symmetric form of $\Lm^{1/2} \; \R^{1/2} \; \Sm \; \Lm^{1/2} \;\R^{1/2}$, so its Jordan canonical form, $\Smh = \K \Lam \K^{-1}$, has orthogonal $\K$, and real diagonal $\Lam$ \citep[Theorem 4.1.5, p. 171]{Horn:and:Johnson:1985}.  

Since $\A (\B \K) = \B \K \Lam \K^\tp \B^{-1}\B \K =(\B \K) \Lam$, it can be seen that the $j$th column of $\B\K$ is a right eigenvector associated with eigenvalue $[\Lam]_{jj}$.  Similarly, $ (\K^\tp \B^{-1}) \A  =  \K^\tp \B^{-1} \B \K \Lam \K^\tp \B^{-1} = \Lam (\K^\tp \B^{-1})$, so the $i$th row of $\K^\tp \B^{-1} $ is a left eigenvector with eigenvalue $[\Lam]_{ii}$. 

Setting
\eb
\label{eq:Bnormed}
\B  = \min_i [\Lm^{-1/2} \R^{1/2}]_{ii} \ \Lm^{1/2} \R^{-1/2} = \dspfrac{1}{\max_i [\Lm^{1/2} \R^{-1/2}]_{ii} } \ \Lm^{1/2} \R^{-1/2}
\ee
gives a unique $\B$ normalized so that $\max_i [\B]_{ii} = 1$.
\end{proof}

The symmetrizable stochastic matrices considered here have the same canonical form as the transition matrices of reversible Markov chains (\citealt[p. 33]{Keilson:1979}, \citealt[p. 296]{Ababneh:Jermiin:and:Robinson:2006}).  One may ask whether they are one and the same.  Indeed they are.  An ergodic Markov chain is {reversible} if its transition matrix $\M$ is irreducible and obeys:
\eb
\label{eq:ReversibleCriterion}
\M \ \diagD{\piv} = ( \M  \  \diagD{\piv})^\tp =   \diagD{\piv} \ \M^\tp 
\ee
where $\M \piv = \piv$, \finalversion{stationary distribution of the chain, and the Perron vector of $\M$, which refers to the eigenvector associated with the eigenvalue of largest modulus, the Perron root $1$}.  $ \diagD{\piv}$ is the diagonal matrix of the entries of $\piv$ (\citealt[pp. 414--415]{Feller:1968v1}; \citealt[pp. 143--145]{Iosifescu:1980}).
\finalversion{Hence this follows}:
\begin{Lemma}[Reversible Markov Chains]
\label{Lemma:Reversible}
An irreducible stochastic matrix is of the form $\M = \Lm \Sm \R$, with $\Lm$ and $\R$ positive diagonal matrices and $\Sm$ a symmetric matrix, if and only if it is the transition matrix of a reversible ergodic Markov chain.
\end{Lemma}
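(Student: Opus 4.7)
My plan is to prove the two directions separately, with both reducing to the detailed-balance identity (\ref{eq:ReversibleCriterion}) and the canonical form established in Lemma~\ref{Lemma:Balance}. Throughout, the paper's column-stochastic convention is in force, so $\One^\tp \M = \One^\tp$ and the Perron vector is a right eigenvector $\piv$ with $\M \piv = \piv$.

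The ($\Leftarrow$) direction is almost a one-liner. If $\M$ is the transition matrix of a reversible ergodic chain with stationary distribution $\piv > 0$, then by (\ref{eq:ReversibleCriterion}) the matrix $\Sm \eqdef \M \diagD{\piv}$ is symmetric. Setting $\Lm = \I$ and $\R = \diagD{\piv}^{-1}$ — both positive diagonal because $\piv$ is strictly positive — I get $\Lm \Sm \R = \M \diagD{\piv} \diagD{\piv}^{-1} = \M$, exhibiting the required symmetrizable factorization.

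For ($\Rightarrow$), I would start from the given factorization $\M = \Lm \Sm \R$ and invoke Lemma~\ref{Lemma:Balance} to rewrite $\M = \B \Smh \B^{-1}$ with $\B$ positive diagonal (e.g.\ $\B = \Lm^{1/2} \R^{-1/2}$) and $\Smh$ symmetric. The crucial observation is that $\M \B^2 = \B \Smh \B$, which is manifestly symmetric, so $\M \B^2 = (\M \B^2)^\tp = \B^2 \M^\tp$. Define $\piv \eqdef \B^2 \One / (\One^\tp \B^2 \One)$. Since $\diagD{\piv}$ is a positive scalar multiple of $\B^2$, the identity $\M \diagD{\piv} = \diagD{\piv} \M^\tp$ holds; this is precisely the detailed-balance criterion (\ref{eq:ReversibleCriterion}). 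It remains to verify that $\piv$ really is the Perron vector of $\M$. Applying the identity to $\One$, the right-hand side yields $\diagD{\piv} \M^\tp \One = \diagD{\piv} \One = \piv$ (using column-stochasticity: $\M^\tp \One = \One$), while the left-hand side is $\M \diagD{\piv} \One = \M \piv$. Hence $\M \piv = \piv$, with $\piv > 0$ because $\B^2$ has strictly positive diagonal. Combined with the assumed irreducibility, this gives a well-defined reversible ergodic chain with transition matrix $\M$.

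I do not anticipate a serious technical obstacle; the only thing to watch is bookkeeping around the column-stochastic convention and the normalization of $\piv$. A minor conceptual point is what is meant by ``ergodic'' — some authors require aperiodicity in addition to irreducibility and positive recurrence — but the paper's own definition in (\ref{eq:ReversibleCriterion}) only uses irreducibility plus detailed balance, so matching that usage suffices. Beyond invoking Lemma~\ref{Lemma:Balance}, no spectral theory is needed: the entire argument reduces to the observation that $\B^2$ encodes the stationary distribution precisely when $\M$ symmetrizes on the left by $\B$.
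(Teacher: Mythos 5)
Your proof is correct, and in the ``only if'' direction it takes a genuinely different --- and more elementary --- route than the paper's. In the ``if'' direction the difference is cosmetic: the paper sets $\B = \diagD{\piv}^{1/2}$, proves $\Smh = \B^{-1}\M\B$ symmetric by manipulating the detailed-balance identity, and then reassembles $\Lm$, $\Sm$, $\R$ (with a free positive diagonal $\Lm$, incidentally exhibiting the non-uniqueness of the factorization), whereas your choice $\Lm=\I$, $\Sm = \M\diagD{\piv}$, $\R = \diagD{\piv}^{-1}$ gets there in one line. In the ``only if'' direction the arguments genuinely diverge. The paper shows that $\B\ev$ is a Perron vector of $\Smh$ (from $\ev^\tp\M=\ev^\tp$) and that $\B^{-1}\piv$ is another (from $\M\piv=\piv$), then invokes uniqueness of the Perron vector of the irreducible matrix $\Smh$ to conclude $\piv = c\,\B^{2}\ev$, and only afterwards checks that $\M\diagD{\piv} = c\,\B\Smh\B$ is symmetric. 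You instead begin from the symmetry of $\M\B^{2} = \B\Smh\B$, \emph{define} $\piv \propto \B^{2}\One$, and verify stationarity by applying the detailed-balance identity to the all-ones vector together with column-stochasticity; no appeal to Perron--Frobenius uniqueness for $\Smh$ is required (irreducibility is still used, but only to conclude that the $\piv$ you constructed is \emph{the} stationary distribution, not to construct it). What the paper's route buys is the explicit eigenstructure of $\Smh$, which it reuses later in the proof of Theorem \ref{Theorem:MultivariateMutation}; what your route buys is brevity and a purely algebraic verification. Both proofs ultimately identify $\diagD{\piv}$ with a scalar multiple of $\B^{2} = \Lm\R^{-1}$.
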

\begin{proof}
For the `if' part, since $\M$ is the transition matrix of an ergodic Markov chain, $\M$ must be an irreducible stochastic matrix (column stochastic by convention in this paper).   It therefore has a strictly positive Perron vector, $\piv > \0$ for Perron root $1$.  

Let $\B \eqdef \diagD{\piv}^{1/2}$, and $\Smh \eqdef \B^{-1} \M \B$.  First $\Smh$ will be shown to be symmetric. Since $\M$ satisfies \eqref{eq:ReversibleCriterion} by hypothesis:
\[
\M \ \diagD{\piv} = \M \B^2 = ( \M \B^2 )^\tp =   \B ^2 \ \M^\tp 
\]
Using $\M = \B \Smh \B^{-1}$,
\[
\begin{split}
\M \ \diagD{\piv} &= \M \B^2 =  \B \Smh \B^{-1} \B^2 =  \B \Smh \B, \\
 \B ^2 \ \M^\tp & = \B ^2 \ (\B \Smh \B^{-1})^\tp = \B ^2 \ \B^{-1} \Smh^\tp \B 
 = \B \Smh^\tp \B.
\end{split}
\]
So $ \B \Smh \B =  \B \Smh^\tp \B$, hence $\Smh = \Smh^\tp$.  

Let $\R =  \B^{-2} \Lm$ for any positive diagonal matrix $\Lm$, and let symmetric matrix
\[
\Sm = \Lm^{-1/2} \; \R^{-1/2} \; \Smh \; \Lm^{-1/2} \;\R^{-1/2}.
\]
This produces the desired $\M = \Lm \Sm \R$.

For the `only if' part, given that $\M = \Lm \Sm \R$, use $\M = \B \Smh \B^{-1}$ from Lemma \ref{Lemma:Balance}, where $\B = \Lm^{1/2} \; \R^{-1/2}$ and $\Smh =  \Lm^{1/2} \; \R^{1/2} \; \Sm \; \Lm^{1/2} \;\R^{1/2}$.
Substituting:
\[
\ev^\tp \M = \ev^\tp \B \Smh \B^{-1} = \ev^\tp \iff 
\ev^\tp \B \Smh  = \ev^\tp \B  
  \iff  \Smh^\tp  \B \ev = \Smh  \B \ev = \B \ev,
\]
hence $ \B \ev$ is a Perron vector of $\Smh$.

Let $\piv$ be the right eigenvector of each $\M$, normalized so that $\ev^\tp \piv = 1$.  Then:
\[
\M \piv = \B \Smh \B^{-1} \piv = \piv \iff 
\Smh \B^{-1} \piv  =  \B^{-1} \piv,
\]
hence $\B^{-1} \piv$ is also a Perron vector of $\Smh$.  Since $\Smh $ is irreducible, the Perron vector of $\Smh$ is unique (up to scaling, $c$), therefore
\[
c \ \B \ev =  \B^{-1} \piv  \iff  \piv = c \ \B^2 \ev =   \frac{1}{\ev^\tp \B^2 \ev} \ \B^{2} \ev =  \frac{1}{\ev^\tp \Lm \R^{-1} \ev} \  \Lm \R^{-1} \ev.
\]
Note that $\D_\piv =  \B^2 \ (\ev^\tp \Lm \R^{-1} \ev)^{-1}$.  Substituting:
\[
\M \ \diagD{\piv} = ( \B \Smh \B^{-1} ) \ \B^2 \ (\ev^\tp \Lm \R^{-1} \ev)^{-1} \\
= \B \Smh \B \ (\ev^\tp \Lm \R^{-1} \ev)^{-1},
\]
which is symmetric.  Therefore, irreducible $\M = \Lm \Sm \R$ satisfies the condition for the transition matrix of a reversible Markov chain.
\end{proof}

\section{Results}
With these mathematical tools in place, we are ready to analyze the modifier models.  The core result is  the following theorem that the derivative of the spectral radius of the stability matrix $\M_\muv \D$ with respect to each mutation rate parameter is negative.

\begin{Theorem}[Multivariate, Multiplicative Variation]
\label{Theorem:MultivariateMutation}
Consider the stochastic matrix
\eb
\label{eq:Mmuv}
\M_\muv = \bigotimes_{\xi=1}^L [ (1-\mu_\xi) \I^{(\xi)} + \mu_\xi \Pm^{(\xi)}  ],
\ee
where each $\Pm^{(\xi)}$ is a $\nu_\xi \times \nu_\xi$ transition matrix for a reversible ergodic Markov chain.

Let $\D$ be a positive diagonal matrix.  Then for every point $\muv \in (0, 1/2)^L$, the spectral radius of 
\[
\M_\muv \D = \{ \bigotimes_{\xi=1}^L [ (1-\mu_\xi) \I^{(\xi)} + \mu_\xi \Pm^{(\xi)}  ] \} \D
\]
is non-increasing in each $\mu_\xi$.

If diagonal entries
\[
 D_{\displaystyle i_1 \cdots i_\xi \cdots i_L}  \neq D_{\displaystyle i_1 \cdots i_\xi' \cdots i_L}
\]
differ for at least one pair $i_\xi, i_\xi' \in \{1, \ldots, \nu_\xi \}$, for some $ i_1 \in \{1, \ldots, \nu_1\}$,  $\ldots$, $i_{\xi-1} \in \{1, \ldots, \nu_ {\xi-1} \}$, $i_{\xi+1} \in \{1, \ldots, \nu_ {\xi+1} \}$, $\ldots$, $i_L \in \{1, \ldots, \nu_L\}$,  
then
\[
\pmu{\rho(\M_\muv \D)}{\xi} < 0 .
\]
\end{Theorem}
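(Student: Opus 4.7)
The plan is to reduce $\rho(\M_\muv \D)$ to the largest eigenvalue of a symmetric matrix that is affine in each $\mu_\xi$, and then read off the sign of the derivatives via Rayleigh--Ritz and first-order perturbation. \textbf{Step 1 (symmetrization).} Each factor $\M^{(\xi)}_{\mu_\xi} = (1-\mu_\xi)\I^{(\xi)} + \mu_\xi \Pm^{(\xi)}$ is an irreducible reversible transition matrix sharing the stationary distribution $\piv^{(\xi)}$ of $\Pm^{(\xi)}$, so by Lemmas~\ref{Lemma:Reversible} and~\ref{Lemma:Balance} it decomposes as $\M^{(\xi)}_{\mu_\xi} = \B^{(\xi)} \Smh^{(\xi)}_{\mu_\xi} (\B^{(\xi)})^{-1}$ with $\B^{(\xi)} = \Diag{\sqrt{\piv^{(\xi)}}}$ \emph{independent of $\mu_\xi$} and, writing $\tilde{\Pm}^{(\xi)} := (\B^{(\xi)})^{-1} \Pm^{(\xi)} \B^{(\xi)}$ for the symmetric form of $\Pm^{(\xi)}$,
\[
\Smh^{(\xi)}_{\mu_\xi} = (1-\mu_\xi)\I^{(\xi)} + \mu_\xi \tilde{\Pm}^{(\xi)},
\]
whose eigenvalues $1 - \mu_\xi(1-\lambda)$, $\lambda \in [-1,1]$, are strictly positive for $\mu_\xi \in (0,1/2)$. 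Kronecker-producting yields $\M_\muv = \B \Smh_\muv \B^{-1}$ with diagonal, $\muv$-independent $\B = \bigotimes_\xi \B^{(\xi)}$ and symmetric positive definite $\Smh_\muv = \bigotimes_\xi \Smh^{(\xi)}_{\mu_\xi}$.

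\textbf{Step 2 (Rayleigh--Ritz and monotonicity).} Since $\B$ and $\D$ are diagonal they commute, and successive similarities give $\rho(\M_\muv \D) = \rho(\D^{1/2} \Smh_\muv \D^{1/2})$. Rayleigh--Ritz on this symmetric matrix yields
\[
\rho(\M_\muv \D) = \max_{\|\x\|_2 = 1} \x^\tp \D^{1/2} \Smh_\muv \D^{1/2} \x.
\]
The affinity of $\Smh^{(\xi)}_{\mu_\xi}$ in $\mu_\xi$ gives
\[
\pmu{\Smh_\muv}{\xi} = -\Bigl(\bigotimes_{\eta<\xi} \Smh^{(\eta)}_{\mu_\eta}\Bigr) \otimes (\I^{(\xi)} - \tilde{\Pm}^{(\xi)}) \otimes \Bigl(\bigotimes_{\eta>\xi} \Smh^{(\eta)}_{\mu_\eta}\Bigr).
\]
The middle factor is positive semi-definite ($\I^{(\xi)} - \tilde{\Pm}^{(\xi)}$ has spectrum $\{1-\lambda\} \subset [0,2]$), and the outer two Kronecker factors are positive definite; hence $\partial \Smh_\muv / \partial \mu_\xi$ is negative semi-definite, and so is its congruence by $\D^{1/2}$. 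Loewner monotonicity of $\lambda_{\max}$ immediately yields $\partial \rho(\M_\muv \D) / \partial \mu_\xi \leq 0$.

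\textbf{Step 3 (strict inequality --- the main obstacle).} Irreducibility of each $\Pm^{(\xi)}$ (and hence of $\M^{(\xi)}_{\mu_\xi}$ for $\mu_\xi > 0$) forces $\M_\muv$ to be irreducible, so its Perron root is simple with a strictly positive right eigenvector; the diagonal similarity passes these properties on to a unit Perron vector $\x > 0$ of $\D^{1/2} \Smh_\muv \D^{1/2}$. First-order eigenvalue perturbation then gives
\[
\pmu{\rho(\M_\muv \D)}{\xi} = \x^\tp \D^{1/2} \Bigl(\pmu{\Smh_\muv}{\xi}\Bigr) \D^{1/2} \x \leq 0,
\]
with equality iff $\D^{1/2}\x$ lies in $\ker\bigl(\partial \Smh_\muv / \partial \mu_\xi\bigr)$. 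Because the outer Kronecker factors are positive definite and $\I^{(\xi)} - \tilde{\Pm}^{(\xi)}$ has one-dimensional kernel spanned by the Perron vector $\uv^{(\xi)} = \sqrt{\piv^{(\xi)}} > 0$ of $\tilde{\Pm}^{(\xi)}$, this kernel factors as $\Reals^{\prod_{\eta<\xi}\nu_\eta} \otimes \mathrm{span}(\uv^{(\xi)}) \otimes \Reals^{\prod_{\eta>\xi}\nu_\eta}$. Equality would therefore force $\D^{1/2} \x = \w_- \otimes \uv^{(\xi)} \otimes \w_+$ for some $\w_\pm > 0$; substituting into the Perron equation $\Smh_\muv \D (\D^{-1/2}\x) = \rho\,\D^{-1/2}\x$ and using $\Smh^{(\xi)}_{\mu_\xi}\uv^{(\xi)} = \uv^{(\xi)}$ to peel off the middle factor, one obtains a scalar identity whose left-hand side is independent of the $\xi$-th index while its right-hand side is proportional to $1/D_{i_1 \cdots i_L}$. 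Positivity of $\w_\pm$ then forces $D_{i_1 \cdots i_L}$ to be constant in $i_\xi$, contradicting the hypothesis. The main obstacle is precisely this last step: pinpointing the kernel of the derivative in its Kronecker-product form and translating the hypothetical equality into a symmetry of $\D$ that the hypothesis explicitly rules out.
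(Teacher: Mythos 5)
Your overall strategy coincides with the paper's: exploit reversibility to write $\M_\muv = \B\,\Smh_\muv\,\B^{-1}$ with $\B$ positive diagonal and independent of $\muv$, cancel $\B$ against the diagonal $\D$ so that $\rho(\M_\muv\D)$ becomes the top eigenvalue of a symmetric positive definite matrix, apply Rayleigh--Ritz and first-order eigenvalue perturbation, and characterize equality by locating the kernel of $\partial\Smh_\muv/\partial\mu_\xi$. Two of your choices are genuine simplifications relative to the paper: you keep the symmetric form $\Smh_\muv$ rather than passing to the full spectral decomposition $\B\K\Upsi_\muv\K^\tp\B^{-1}$ with diagonal $\Upsi_\muv$, and you obtain the weak inequality from Loewner monotonicity of $\lambda_{\max}$ rather than from the paper's differentiate-the-maximizer computation. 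Those parts are correct.

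The equality case, however, contains a real error --- and it is exactly the pitfall the paper flags in its own proof. From $\D^{1/2}\x\in\Reals^{m}\otimes\mathrm{span}(\uv^{(\xi)})\otimes\Reals^{n}$ you conclude that $\D^{1/2}\x = \w_-\otimes\uv^{(\xi)}\otimes\w_+$. That subspace is the \emph{span} of such elementary tensors, i.e.\ the set of vectors of the form $\sum_{i,k}c_{ik}\,\e_i\otimes\uv^{(\xi)}\otimes\e_k$; membership does not force the coefficient array $(c_{ik})$ to be rank one, so the elementary-tensor factorization does not follow. The paper makes this very point (``were $\kappa$ in the middle, trying to write $\xvh = \yvh\otimes[1\,0\cdots 0]^\tp\otimes\yvh'$ forces a Kronecker factoring of $\xvh$ \ldots\ which is not implied'') and sidesteps it by permuting the locus in question to the last position. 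Fortunately your peeling computation survives the correction: writing $(\D^{1/2}\x)_{i_-i_\xi i_+} = c_{i_-i_+}\,u_{i_\xi}$ with $c_{i_-i_+}>0$ and using the eigenvalue equation in the form $\Smh_\muv(\D^{1/2}\x) = \rho\,\D^{-1/2}\x$ together with $\Smh^{(\xi)}_{\mu_\xi}\uv^{(\xi)} = \uv^{(\xi)}$, the factor $u_{i_\xi}$ still pulls out of the left-hand side, giving $h_{i_-i_+}\,u_{i_\xi} = \rho\,c_{i_-i_+}\,u_{i_\xi}/D_{i_-i_\xi i_+}$ for an array $h$ independent of $i_\xi$, whence $D_{i_-i_\xi i_+} = \rho\,c_{i_-i_+}/h_{i_-i_+}$ is constant in $i_\xi$ --- the contradiction you want. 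So the gap is localized and repairable, but as written the step ``equality forces $\D^{1/2}\x = \w_-\otimes\uv^{(\xi)}\otimes\w_+$'' is false and needs to be replaced by the slice-form argument (or by the paper's reindexing device).
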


\begin{proof}  The proof is presented in three sections: applying the canonical form, evaluating the derivative, and evaluating the equality case.

\begin{flushleft}
{\bf Applying the Canonical Form:} 
\end{flushleft}
The first step is to utilize the canonical form \eqref{eq:CanonicalForm}.  Since each $\Pm^{(\xi)}$ is the transition matrix of an ergodic Markov chain, it is irreducible and thus has Perron vector $ \piv^{(\xi)} > \Zero$, hence Lemma \ref{Lemma:Reversible} and Lemma \ref{Lemma:Balance} apply.  Therefore $\Pm^{(\xi)}$ has the canonical form
\eb
\label{eq:PBSBform}
\Pm^{(\xi)} = \B^{(\xi)}  \K^{(\xi)} \Lam^{(\xi)}{ \K^{(\xi)}}^\tp {\B^{(\xi)}}^{ -1},
\ee
where

\bd 
\item[$\B^{(\xi)}$] is a positive diagonal matrix,
\item[$\K^{(\xi)}$] is orthogonal, i.e. $\K^{(\xi)} { \K^{(\xi)}}^\tp =  { \K^{(\xi)}}^\tp  \K^{(\xi)}= \I^{(\xi)}$,  and
\item[$\Lam^{(\xi)}$] is a diagonal matrix of the eigenvalues of $ \Pm^{(\xi)}$ with largest simple eigenvalue $1$.\\
\ed
Define 
\begin{align} 
\label{eq:UpsiDef} \Upsi^{(\xi)}_{\mu_\xi} & \eqdef  (1-\mu_\xi) \I^{(\xi)} + \mu_\xi \Lam^{(\xi)}.
\end{align}

For $\mu_\xi \in (0, 1/2)$, the diagonal entries of $\Upsi^{(\xi)}_{\mu_\xi}$ are all positive.  This is seen as follows:
$\Lam^{(\xi)}$ is the diagonal matrix of the eigenvalues of $\Pm^{(\xi)}$, which are all real due to symmetrizability.  Because $\Pm^{(\xi)}$ is an irreducible stochastic matrix, by Perron-Frobenius theory it has simple largest eigenvalue 1, and all other eigenvalues of modulus at most 1.  Without loss of generality, arrange the indices so that spectral radius corresponds to index $1$.  Hence:
\eb
\label{eq:LambdaBounds}
{[\Lam^{(\xi)}]}_{11} = 1, \mbox{\ \ and \ }{ [\Lam^{(\xi)}]}_{ii} \in [-1,1) \mbox{\  for all \ }i \neq 1.
\ee   
Therefore,
\[
1-\mu_\xi + \mu_\xi {[\Lam^{(\xi)}]}_{11} = 1-\mu_\xi + \mu_\xi= 1,
\]
and for $i \neq 1$:
\eb
\label{eq:UpsiBounds}
0 < 1 - 2 \mu_\xi \leq 1-\mu_\xi + \mu_\xi {[\Lam^{(\xi)}]}_{ii} < 1.
\ee

Substituting \eqref{eq:PBSBform} and \eqref {eq:UpsiDef} into \eqref{eq:Mmuv}, one gets:
\begin{align}
\label{eq:MmuvBKetc}
\M_\muv 
&= \bigotimes_{\xi=1}^L \left( (1-\mu_\xi) \I^{(\xi)} + \mu_\xi \B^{(\xi)}  \K^{(\xi)} \Lam^{(\xi)}{ \K^{(\xi)}}^\tp {\B^{(\xi)}}^{ -1}  \right) \notag \\ 
&= \bigotimes_{\xi=1}^L \left( \B^{(\xi)}  \K^{(\xi)} [ (1-\mu_\xi) \I^{(\xi)} + \mu_\xi  \Lam^{(\xi)}]{ \K^{(\xi)}}^\tp {\B^{(\xi)}}^{ -1}  \right) \notag\\ 
&= \bigotimes_{\xi=1}^L \left( \B^{(\xi)}  \K^{(\xi)} \Upsi^{(\xi)}_{\mu_\xi} { \K^{(\xi)}}^\tp {\B^{(\xi)}}^{ -1}  \right) \notag\\ 
&= (  \bigotimes_{\xi=1}^L \B^{(\xi)} )    
(  \bigotimes_{\xi=1}^L \K^{(\xi)})  
(  \bigotimes_{\xi=1}^L \Upsi^{(\xi)}_{\mu_\xi} ) (  \bigotimes_{\xi=1}^L {\K^{(\xi)})}^\tp ( \bigotimes_{\xi=1}^L {\B^{(\xi)} }^{ -1}) \notag \notag \\
&= \B  \; \K \Upsi_\muv \K^\tp \; \B^{-1}, 
\end{align}
where
\begin{align}
\B &\eqdef  \bigotimes_{\xi=1}^L \B^{(\xi)}, \notag\\
\K &\eqdef  \bigotimes_{\xi=1}^L {\K^{(\xi)} }, \notag
\intertext{and}
\label{eq:UpsiMuv} \Upsi_\muv &\eqdef \bigotimes_{\xi=1}^L  \Upsi^{(\xi)}_{\mu_\xi}  = \bigotimes_{\xi=1}^L [ (1-\mu_\xi) \I^{(\xi)} + \mu_\xi \Lam^{(\xi)}].  
\end{align}
Since $\B$, $\K$, and $\Upsi_\muv$ are all invertible, 
they can be rotated in sequence without altering the spectrum (i.e. $\rho(\A_1 \A_2 \A_3) = \rho(\A_1^{-1} \A_1 \A_2 \A_3 \A_1) = \rho(\A_2 \A_3 \A_1 )$).  The key step from \citet[Proof of Theorem 5.1, pp. 197--198]{Karlin:1982} is to rotate the terms into a symmetric form:
\begin{align}
\rho(\M_\muv \D) &= \rho( \B  \; \K \Upsi_\muv \K^\tp \; \B^{-1} \D)
= \rho( \; \K \Upsi_\muv \K^\tp \; \B^{-1} \D  \B ) \nonumber \\
&= \rho(  \Upsi_\muv \K^\tp \; \B^{-1} \D \B \K )  
=   
\rho(  \Upsi_\muv^{1/2}  \K^\tp \D \K   \Upsi_\muv^{1/2} ), \label{eq:rhoUKDKU}
\end{align}
since\footnote{\label{foot:rec}The cancellation of the diagonal matrix $\B$ is the step that is blocked when the modifier recombines with the loci under selection, giving $r > 0$ in \eqref {eq:Mmut}, in which case, $\D$ is replaced by $ [(1-r)\D + r \, \diagD{\zvh_b} \Wm  / \wbh ]$ (assuming $\T_{ab} = \T^{\Join}_{ab}$), and so instead of the symmetric term $\B^{-1} \ \D \ \B = \D$, one has $\B^{-1} [(1-r)\D + r   / \wbh \; \diagD{\zvh_b} \Wm ] \B = (1-r) \D + r\; \B^{-1} \diagD{\zvh_b} \Wm \B \,  / \wbh$, which is generically not symmetric, thus precluding use of the Rayleigh quotient at this step.  } $ \B^{-1} \D \B = \D$.

To this symmetric form one can apply the Rayleigh-Ritz variational characterization of the spectral radius (\citealt[p. 198]{Karlin:1982}, \citealt[pp. 172--173]{Wilkinson:1965}, \citealt[pp. 176--180]{Horn:and:Johnson:1985}).   The Rayleigh-Ritz formula is, for any symmetric real matrix $\A$:
\eb
\label{eq:RayleighRitz}
\rho(\A) = \sup_{\xv \neq \Zero} \frac{\xv^\tp \A \xv}{\xv^\tp \xv}
\ee
Let $\xvh(\muv)$ be a vector, constrained to the unit sphere, $\xvh(\muv)^\tp \xvh(\muv) = 1$, that maximizes  
\[
\phi(\x) \eqdef \xv ^\tp (   \Upsi_\muv^{1/2} \K^\tp \D  \K  \Upsi_\muv^{1/2} ) \xv.
\]
Then $\xvh(\muv)$ is an eigenvector satisfying: 
\eb
\label{eq:xEigenvector}
 ( \Upsi_\muv^{1/2} \K^\tp \D  \K  \Upsi_\muv^{1/2} ) \ \xvh (\muv)= \rho(\M_\muv \D) \ \xvh (\muv).
\ee

Pre-multiplying each side of $ \eqref{eq:xEigenvector} $ by $ \B \K \Upsi_\muv^{1/2}$:
\[
\begin{split}
\rho(\M_\muv \D) \  \B \K \Upsi_\muv^{1/2} \ \xvh(\muv)  
&=  \B \K \Upsi_\muv  \K^\tp \B^{-1} \D ( \B \K  \Upsi_\muv^{1/2} \  \xvh(\muv)  ) \\
& = \M_\muv \D \ ( \B \K  \Upsi_\muv^{1/2} \  \xvh(\muv)  ),
\end{split}
\]
therefore $\B \K  \Upsi_\muv^{1/2}  \xvh(\muv)$ 
 is the eigenvector of $\M_\muv \D$ associated with the spectral radius, unique since $\M_\muv \D$ is irreducible, so call it
\eb
\label{eq:BKUvvh}
 \vvh(\muv) \eqdef \B \K  \Upsi_\muv^{1/2} \ \xvh(\muv).
\ee
Since $\B$, $\K$, and $ \Upsi_\muv^{1/2}$ are all invertible, 
\eb
\label{eq:xvh}
\xvh(\muv) =  \Upsi_\muv^{-1/2} \K^\tp \B^{-1} \ \vvh(\muv).
\ee

\begin{flushleft}
{\bf Evaluating the Derivative}:  
\end{flushleft}
Differentiating \eqref{eq:rhoUKDKU} with respect to the mutation rate $\mu_\kappa$ at the $\kappa$th locus under selection:
\eban
\pmu{}{\kappa} \rho(\M_\muv \D) 
&=&  2 \, \xvh(\muv)^\tp ( \Upsi_\muv^{1/2} \K^\tp \D  \K  \Upsi_\muv^{1/2} ) \pmu{\xvh(\muv)}{\kappa} \\
&& +  \ 2 \,  \xvh(\muv)^\tp ({ \Upsi_\muv^{1/2}} \K^\tp \D  \K  \pmu{\Upsi_\muv^{1/2}}{\kappa} ) \xvh(\muv).
\eean
As in Karlin's proof of Theorem 5.2 \citeyearpar[p. 195]{Karlin:1982}, since $\xvh(\muv)$ maximizes the quadratic function $\phi(\xv)$,
it is a critical point of $\phi(\xv)$ \citep[p. 72]{Duistermaat:Kolk:2004}, therefore $ \left.\partial \phi(\xv) / \partial \xv \right|_{\xvh (\muv)} =0$, so 
\[
\left. \prtl{\phi(\xv)}{\xv}\right|_{\xvh (\muv)}  \left. \prtl{\xvh(\muv)}{\mu_\kappa}\right|_{\muv} 
= 2 \, \xvh(\muv)^\tp ( \Upsi_\muv^{1/2} \K^\tp \D  \K  \Upsi_\muv^{1/2} ) \pmu{\xvh(\muv)}{\kappa} = 0.
\]
Using 
$$
\pmu{ \Upsi_\muv^{1/2}}{\kappa} =  \frac{1}{2} \Upsi_\muv^{-1/2} \pmu{  \Upsi_\muv}{\kappa}
$$ 
this leaves:
\eb
\label{eq:pmukRhoMD}
\pmu{}{\kappa}  \rho(\M_\muv \D) 
= \xvh(\muv)^\tp (  \Upsi_\muv^{1/2} \K^\tp \D  \K  \Upsi_\muv^{-1/2}  \pmu{ \Upsi_\muv}{\kappa} ) \xvh(\muv) 
\ee
where $\partial  \Upsi_\muv / \partial {\mu_\kappa} $ evaluates to:
\eba
\label{eq:pmukUpsi}
\pmu{}{\kappa}  \Upsi_\muv 
&=& \pmu{}{\kappa}  \left( \bigotimes_{\xi=1}^L [ (1-\mu_\xi) \I^{(\xi)} + \mu_\xi \Lam^{(\xi)}] \right) \nonumber \\
&=& \bigotimes_{\xi=1}^{\kappa - 1} [ (1-\mu_\xi) \I^{(\xi)} + \mu_\xi \Lam^{(\xi)}] \nonumber  \\
&& \otimes \ [  \Lam^{(\kappa)} - \I^{(\kappa)}] \finalversion{\otimes} \\
&& \bigotimes_{\xi=\kappa + 1}^{L} [ (1-\mu_\xi) \I^{(\xi)} + \mu_\xi \Lam^{(\xi)}], \nonumber 
\eea
and
\begin{align}
\intertext{$\displaystyle \Upsi_\muv^{-1}  \pmu{ \Upsi_\muv}{\kappa} $} \label{eq:UpU}
&= [ \bigotimes_{\xi=1}^{\kappa-1} \I^{(\xi)} ]  \otimes [  \Lam^{(\kappa)} - \I^{(\kappa)}] 
[ (1-\mu_\kappa) \I^{(\kappa)} + \mu_\kappa \Lam^{(\kappa)}]^{-1} \finalversion{\otimes [\otimes}_{\xi=\kappa + 1}^{L} \I^{(\xi)} ] \notag \\
&= [ \bigotimes_{\xi=1}^{\kappa-1} \I^{(\xi)} ]  \otimes 
\diagm{\frac{[\Lam^{(\kappa)}]_{ii} - 1}{(1-\mu_ \kappa)  + \mu_\kappa  [\Lam^{(\kappa)}]_{ii}}}_{i=1}^{\nu_\kappa} 
\finalversion{\otimes [\otimes}_{\xi=\kappa + 1}^{L} \I^{(\xi)} ].
\end{align}
Using \eqref {eq:xEigenvector} one can substitute
$
\xvh (\muv) ^\tp (   \Upsi_\muv^{1/2} \K^\tp \D  \K   ) = \rho(\M_\muv \D) \ \xvh (\muv) ^\tp \Upsi_\muv^{-1/2} 
$ 
into \eqref{eq:pmukRhoMD} and obtain:
\eba
\label{eq:pmukXU}
\pmu{}{\kappa}  \rho(\M_\muv \D) 
&=&  \rho(\M_\muv \D) \ \xvh (\muv) ^\tp \Upsi_\muv^{-1/2}   \Upsi_\muv^{-1/2}  \pmu{ \Upsi_\muv}{\kappa} \xvh(\muv) \nonumber \\
&=&  \rho(\M_\muv \D) \ \xvh (\muv) ^\tp \Upsi_\muv^{-1}  \pmu{ \Upsi_\muv}{\kappa} \xvh(\muv).
\eea

From \eqref{eq:LambdaBounds}, one sees that all the terms in \eqref{eq:pmukUpsi} are positive except for the term $\Lam^{(\kappa)} - \I^{(\kappa)}$.  The term $ \Lam^{(\kappa)} - \I^{(\kappa)} $ is a diagonal matrix with $[\Lam^{(\kappa)}]_{11} - 1 = 0$, and for $i \neq 1$, negative diagonal entries, $[\Lam^{(\kappa)}]_{ii} - 1 < 0$ .

Thus for $\muv \in (0, 1/2)^L$, $\partial \Upsi_\muv / \partial{\finalversion{\mu_}\kappa}$ and $\Upsi_\muv^{-1} \ \partial  \Upsi_\muv / \partial{\finalversion{\mu_}\kappa}$ are negative semi-definite:
\eb
\label{eq:pmukUpsiNegSemiDef}
\pmu{ \Upsi_\muv}{\kappa} \leq \neq \0 \mbox{\ \ and \ \ } \Upsi_\muv^{-1}  \pmu{ \Upsi_\muv}{\kappa}  \leq \neq \0.
\ee
Therefore \eqref{eq:pmukXU} evaluates to:
\eb
\label{eq:pmukLEQzero}
\pmu{}{\kappa}  \rho(\M_\muv \D) \leq 0.
\ee
The restriction of the mutation rates to the interval $(0, 1/2)$ is justified empirically, but their motivation here is analytic.  The open interval on the $0$ side is done to avoid the technical details of derivatives on a boundary.  The open interval on the $1/2$ side is more than technical:  if a mutation rate is allowed to be $1/2$ or greater, then the terms $(1-\mu_\xi) + \mu_\xi [\Lam^{(\xi)}]_{ii}$ may be $0$ or negative, invalidating \eqref{eq:pmukUpsiNegSemiDef}.   

\begin{flushleft}
{\bf Evaluating the Equality Case}:
\end{flushleft}
The conditions that allow equality in \eqref{eq:pmukLEQzero} are elucidated, and the work consists mostly of tracking the zeros through the equations.  Representing \eqref {eq:pmukXU} in terms of individual entries:
\begin{align}
\label{eq:pmukSumForm}
\pmu{ \rho(\M_\muv \D) }{\kappa}
=  \rho(\M_\muv \D) 
& \sum_{i_\kappa=1}^{\nu_\kappa}   \frac{[\Lam^{(\kappa)}]_{i_\kappa i_\kappa} - 1}{(1-\mu_\kappa)  + \mu_\kappa [\Lam^{(\kappa)}]_{i_\kappa i_\kappa}} \notag \\
\times & \sum_{i_1=1}^{\nu_1} \cdots \sum_{i_{\kappa -1}=1}^{\nu_ {\kappa -1}} \  \sum_{i_{\kappa +1}=1}^{\nu_ {\kappa +1}} \cdots \sum_{i_L = 1}^{\nu_L} \xh_{i_1 i_2 \cdots i_L}^2  \leq 0.
\end{align}
In order for $\partial  \rho(\M_\muv \D) / \partial \mu_\kappa = 0$, every index $i$ where $[\Upsi_\muv^{-1} \  \partial{ \Upsi_\muv}/ \partial {\kappa}]_{ii}$ is non-zero must have $\xh_i(\muv)=0$.
So, either $[\Lam^{(\kappa)}]_{i_\kappa i_\kappa} - 1 = 0$ or $\xh_{i_1 i_2 \cdots i_\kappa \cdots i_L} = 0$.  But $[\Lam^{(\kappa)}]_{i_\kappa i_\kappa} = 1$ only for $i_\kappa = 1$, hence $\partial \rho(\M_\mu \D) / \partial \mu_\kappa = 0$ if and only if
\eb
\label{eq:xcondition}
\xh_{i_1 i_2 \cdots i_\kappa \cdots i_L} = 0 \mbox{ for all } i_\kappa \neq 1 \mbox{ and all } i_1, \ldots, i_{\kappa-1}, i_{\kappa+1}, \ldots, i_L.
\ee

This condition on $\xvh(\muv)$ can be translated into a condition on $\D$.  Using \eqref{eq:xEigenvector}:
\[
\rho(\M_\muv \D) \ \xvh(\muv)    =   \Upsi_\muv^{1/2} \K^\tp \D  \K  \Upsi_\muv^{1/2}  \xvh(\muv)  
=     \Upsi_\muv^{1/2} \K^\tp \B^{-1} \D \B \K  \Upsi_\muv^{1/2}  \xvh(\muv) .
\]
Pre-multiplying each side by $[\Upsi_\muv^{1/2} \K^\tp \B^{-1}]^{-1} = \B \K \Upsi_\muv^{-1/2} $:
\eb
\label{eq:Dform}
\rho(\M_\muv \D) \  \B \K \Upsi_\muv^{-1/2}  \xvh(\muv)  =  \D ( \B \K  \Upsi_\muv^{1/2}  \xvh(\muv)  ).
\ee

Here, it becomes notationally helpful to let $\kappa$ be either the first or last index.  Since there is no actual spatial structure or consequence to the ordering of the loci in the absence of recombination, the following derivation applies to any locus $\kappa$.  Using $\kappa=L$, the $\xvh(\muv)$ satisfying \eqref{eq:xcondition} can be written as:
\eb
\label{eq:xform}
\xvh(\muv) = \yvh \otimes \matrx{\begin{array}{c}1 \\ 0 \\ \vdots \\ 0 \end{array} }_{i=1}^{\nu_L},
\ee
for some vector $\yvh$ (were $\kappa$ in the middle, trying to write $\xvh = \yvh \otimes [1 0 0 \cdots 0 ]^\tp \otimes \yvh'$ forces a Kronecker factoring of $\xvh$ into $\yvh$ and $\yvh'$, which is not implied by \eqref {eq:xcondition}).   Substitution of \eqref {eq:xform} into \eqref{eq:BKUvvh} gives:
\begin{align}
\label{eq:BKUx}
\vvh(\muv) &= \B \K  \Upsi_\muv^{1/2}  \xvh(\muv) \nonumber \\
& =  \left[ ( \bigotimes_{\xi=1}^{L-1} \B^{(\xi)} \K^{(\xi)}  {\Upsi^{(\xi)}_{\mu_\xi}}^{1/2} ) \ \yvh \right] \otimes  \B^{(L)} \K^{(L)}  {\Upsi^{(L)}_{\mu_L}}^{1/2} \ [1 0 \cdots 0]^\tp \nonumber \\
&= \left[ ( \bigotimes_{\xi=1}^{L-1} \B^{(\xi)} \K^{(\xi)}  {\Upsi^{(\xi)}_{\mu_\xi}}^{1/2} ) \ \yvh \right] \otimes  \B^{(L)} [\K^{(L)}]_1 ,
\end{align}
where $[\K^{(L)}]_1$ is the first column of $\K^{(L)}$, since
\begin{align}
\label{eq:BKUform}
\B^{(L)} \K^{(L)}  {\Upsi^{(L)}_{\mu_L}}^{1/2} \ [1 0 \cdots 0]^\tp
&= \B^{(L)} \K^{(L)} \ \left[ [{\Upsi^{(L)}_{\mu_L}}]_{11}^{1/2} \  0 \cdots 0 \right]^\tp \notag \\
&= \B^{(L)} \K^{(L)} \ [1 0 \cdots 0]^\tp 
= \B^{(L)} [ \K^{(L)}]_1.
\end{align}
By construction, $\B^{(L)}[\K^{(L)}]_1$ is the Perron vector of irreducible $\P^{(L)}$, hence $\B^{(L)}[\K^{(L)}]_1 > \0$.

Substituting \eqref {eq:BKUx} and \eqref {eq:BKUform} into \eqref{eq:Dform} gives:
\begin{multline}
\label{eq:Dform2}
\D  \left(  \left[ \{ \bigotimes_{\xi=1}^{L-1} \B^{(\xi)} \K^{(\xi)}  {\Upsi^{(\xi)}_{\mu_\xi}}^{1/2} \} \ \yvh \right] 
\otimes \B^{(L)} [\K^{(L)}]_1  \right) \\
=   \rho(\M_\muv \D) \  \left[ ( \bigotimes_{\xi=1}^{L-1} \B^{(\xi)} \K^{(\xi)}  {\Upsi^{(\xi)}_{\mu_\xi}}^{-1/2} ) \ \yvh \right] 
\otimes   \B^{(L)} [\K^{(L)}]_1 .
\end{multline}
\Eqref{eq:Dform2} becomes clearer if the terms are represented by single symbols.  Let
\bd
\item[$\fv$] $\displaystyle \eqdef  ( \bigotimes_{\xi=1}^{L-1}\B^{(\xi)}  \K^{(\xi)}  {\Upsi^{(\xi)}_{\mu_\xi}}^{1/2} ) \ \yvh,$  
\item[$\gv$] $\displaystyle\eqdef  ( \bigotimes_{\xi=1}^{L-1}\B^{(\xi)}  \K^{(\xi)}  {\Upsi^{(\xi)}_{\mu_\xi}}^{-1/2} ) \ \yvh,$  
\item[$\kv$] $\displaystyle\eqdef  \B^{(L)} {[\K^{(L)}]}_1,$ and 
\item[$\rho$] $\displaystyle \eqdef \rho(\M_\muv \D).$
\ed
Then 
\[
\vvh = \fv \otimes \kv,
\]
 and  \eqref {eq:Dform2} becomes:
\eb
\label{eq:Dform3}
\D (\fv \otimes \kv) = \rho \ \gv \otimes \kv.
\ee
Now \eqref {eq:Dform2} 
may be expressed in terms of the entries:  Let $i_1$ index the haplotypes of all loci except $L$, and $i_2$ index the alleles of locus $L$.  Then \eqref {eq:Dform3} is represented as:
\eb
\label{eq:Dform4}
D_{i_1 i_2} \ f_{i_1} k_{i_2} = \rho \ g_{i_1} \ k_{i_2}
\ee
for each $i_1, i_2$.   Since $k_{i_2} > 0$, this implies $D_{i_1 i_2} \ f_{i_1} = \rho \ g_{i_1}$ for all $i_2$.
Because $\M_\muv \D$ is irreducible, 
\eb
\label{vvhGTzero}
\vvh > \0,
\ee
therefore $f_{i_1} > 0$.  Then $ D_{i_1 i_2} = \rho \ g_{i_1}/ \ f_{i_1}$ for all $i_2$.   Thus the chain of implications that starts with $\partial \rho(\M_\muv \D) / \partial \mu_L = 0 $ concludes with the finding that $\D$ must be of the form
\[
\D = \ov{\D} \otimes \I^{(L)}
\] 
where $\ov{\D} = \diagm{D_{i_1}}$.

To summarize the equality case, recall that the above derivation applies with respect to any locus $\kappa$.  Thus, 
if and only if 
\eb
\label{eq:Differ}
D_{\displaystyle i_1 \cdots i_\kappa \cdots i_L}  \neq D_{\displaystyle i_1 \cdots i_\kappa' \cdots i_L}
\ee
for at least one pair $i_\kappa, i_\kappa' \in \{1, \ldots, \nu_\kappa \}$, for some $ i_1 \in \{1, \ldots, \nu_1\}$,  $\ldots$, $i_{\xi-1} \in \{1, \ldots, \nu_ {\xi-1} \}$, $i_{\xi+1} \in \{1, \ldots, \nu_ {\xi+1} \}$, $\ldots$, $i_L \in \{1, \ldots, \nu_L\}$,  
 then
\[
\pmu{\rho(\M_\muv \D)}{\kappa} < 0 . \qedhere
\]
\end{proof}

{\bf Remarks}.  In the case where $[\D]_{ii} = 0$ for some $i$, $\M_\muv \D$ is no longer irreducible, and so a unique positive Perron vector $\vvh$ for $\M_\muv \D$ is no longer guaranteed.  If the set of $[\D]_{ii}=0$ entries dissects the haplotype space into multiple non-communicating sub-spaces, each of these is represented by an isolated block in the Frobenius normal form of $\M_\muv \D$, thus $\M_\muv \D$ will have multiple non-negative eigenvectors.  This situation is more complicated than merits pursuit here.  However, when the $[\D]_{ii}=0$ entries do not destroy the uniqueness of $\vvh$, it yields a ready result:   

\begin{Corollary}[Lethality Case]
Let all the conditions of Theorem \ref {Theorem:MultivariateMutation} apply except that $[\D]_{ii} = 0$ for at least one $i$.  If $\M_\muv \D$ has a unique eigenvector $\vvh(\muv )$ associated with eigenvalue $\rho(\M_\muv \D)$, then 
\[
\pmu{\rho(\M_\muv \D)}{\kappa} = 0
\]
if and only if
\[
D_{\displaystyle i_1 \cdots i_\kappa \cdots i_L}  = D_{\displaystyle i_1 \cdots i_\kappa' \cdots i_L}
\]
for all $i_\kappa, i_\kappa' \in \{1, \ldots, \nu_\kappa\}$, whenever 
\[
\vh_{i_1 \cdots i_\kappa \cdots i_L} > 0, \mbox{\ \ and \ \ } \vh_{i_1 \cdots i_\kappa' \cdots i_L} > 0.
\]
Otherwise,
\[
\pmu{\rho(\M_\muv \D)}{\kappa} < 0.
\]
\end{Corollary}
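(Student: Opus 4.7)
The plan is to recycle almost all of the machinery from the proof of Theorem~\ref{Theorem:MultivariateMutation} and to localize the final equality step --- the only place that irreducibility of $\M_\muv \D$ was used --- to those coordinates at which the eigenvector $\vvh(\muv)$ is strictly positive. First I would note that the canonical-form reduction $\M_\muv = \B \K \Upsi_\muv \K^\tp \B^{-1}$ depends only on the factors $\Pm^{(\xi)}$ and not on $\D$, so the rotation
\[
\rho(\M_\muv \D) = \rho(\Upsi_\muv^{1/2} \K^\tp \D \K \Upsi_\muv^{1/2})
\]
still holds. The bracketed matrix is real symmetric (merely positive semidefinite rather than positive definite), so the Rayleigh--Ritz characterization \eqref{eq:RayleighRitz} applies unchanged, and its supremum equals the spectral radius since the matrix has non-negative spectrum. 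The hypothesis that $\vvh(\muv)$ is the unique eigenvector of $\M_\muv \D$ at $\rho(\M_\muv \D)$ transfers, via the invertible change of variables $\xv \leftrightarrow \B \K \Upsi_\muv^{1/2} \xv$, to uniqueness (up to sign) of the unit Rayleigh--Ritz maximizer $\xvh(\muv) = \Upsi_\muv^{-1/2} \K^\tp \B^{-1} \vvh(\muv)$. The envelope identity used in the derivative calculation therefore still holds, and both the formula \eqref{eq:pmukXU} and the non-positivity bound \eqref{eq:pmukLEQzero} carry over verbatim.

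Next I would revisit the equality analysis from \eqref{eq:xcondition} through \eqref{eq:Dform4}. Nothing in that chain requires $\vvh > \0$; it needs only that $\kv = \B^{(L)}[\K^{(L)}]_1 > \0$, which remains guaranteed by Perron--Frobenius applied to the irreducible factor $\Pm^{(L)}$ alone (and symmetrically for any $\kappa$, using the freedom to relabel loci in the absence of recombination). The single step that must be revised is the one that previously used irreducibility of $\M_\muv \D$ to conclude $\fv > \0$. In its place I would split cases in the relation $D_{i_1 i_2} f_{i_1} k_{i_2} = \rho\, g_{i_1} k_{i_2}$: where $f_{i_1} > 0$, cancelling $k_{i_2} > 0$ again forces $D_{i_1 i_2}$ to be constant in $i_2$; where $f_{i_1} = 0$, positivity of $\rho$ forces $g_{i_1} = 0$ as well and leaves $D_{i_1 i_2}$ unconstrained by the relation. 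Because $\vh_{i_1 i_2} = f_{i_1} k_{i_2}$ and $\kv > \0$, the dichotomy $f_{i_1} > 0$ versus $f_{i_1} = 0$ is exactly the dichotomy $\vh_{i_1 i_2} > 0$ for every $i_2$ versus $\vh_{i_1 i_2} = 0$ for every $i_2$. Unwinding the relabelling $\kappa \to L$ then yields precisely the stated equivalence.

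The main obstacle I anticipate is the first step --- verifying that the Rayleigh--Ritz maximizer $\xvh$ in the semidefinite setting really is the pull-back of the unique $\vvh$, so that the envelope argument behind \eqref{eq:pmukXU} still applies. The uniqueness hypothesis is precisely what makes this work: by the similarity $\M_\muv \D = \B \K \Upsi_\muv^{1/2} \, (\Upsi_\muv^{1/2} \K^\tp \D \K \Upsi_\muv^{1/2}) \, \Upsi_\muv^{-1/2} \K^\tp \B^{-1}$, geometric simplicity of $\rho(\M_\muv \D)$ lifts to geometric simplicity of the top eigenvalue of the symmetric matrix, and hence to a unique unit maximizer $\xvh$ up to sign. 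Once this correspondence is established, the remainder of the argument is algebraic bookkeeping of zeros through the tensor-product factorization $\xvh = \yvh \otimes [1\ 0\ \cdots\ 0]^\tp$, for which no strict positivity of the entries of $\xvh$, $\fv$, or $\gv$ is ever invoked.
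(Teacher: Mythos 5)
Your proposal is correct and follows essentially the same route as the paper: the paper likewise observes that positivity of $\D$ enters only at the identification of $\vvh(\muv)$ in \eqref{eq:BKUvvh} (preserved by the uniqueness hypothesis), and that the first real change occurs after \eqref{vvhGTzero}, where one can no longer assume $f_{i_1}>0$ and must instead require $D_{i_1 i_2}=\rho\, g_{i_1}/f_{i_1}$ only on the indices with $f_{i_1}>0$, i.e.\ where $\vh_{i_1 i_2}>0$. Your case split on $f_{i_1}>0$ versus $f_{i_1}=0$, using $\kv>\0$, is exactly the paper's argument spelled out in slightly more detail.
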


\begin{proof}
The positivity of $\D$ enters the proof of Theorem \ref{Theorem:MultivariateMutation} only at step \eqref{eq:BKUvvh}.  Assuming the uniqueness of $\vvh$ allows one to preserve \eqref{eq:BKUvvh}.  The first consequence of relaxing the positivity condition does not occur until after \eqref {vvhGTzero} when it can no longer be assumed that $f_{i_1} > 0$.  Continuing with the notation introduced at \eqref {eq:Dform3}, $D_{i_1 i_2} \ f_{i_1} k_{i_2} = \rho \ g_{i_1} \ k_{i_2}$ is satisfied as long as $D_{i_1 i_2} = \rho \ g_{i_1}/ \ f_{i_1}$ for all $i_2$ whenever $f_{i_1} > 0$, that is, whenever $\vh_{i_1 i_2} > 0$, which is what is stated in the corollary using the multilocus notation.

There are further implications for $\D$ (these do not alter the statement of the corollary):  $D_{i_1 i_2} \ f_{i_1} = \rho \ g_{i_1}$, so if some $D_{i_1 i_2} = 0$, then $g_{i_1} = 0$.  But then $D_{i_1 i_2'} f_{i_1} = 0$ for every $i_2'$.  Consequently, either $f_{i_1} = 0$, which means $\vh_{i_1 i_2} > 0$ for every $i_2$, or $D_{i_1 i_2'} = 0$ for all $i_2'$.  Thus, under the condition that $\partial \rho(\M_\muv \D) / \partial \mu_L = 0$, the existence of one lethal haplotype $i_1 i_2$ implies either that all haplotypes with $i_1$ are lethal, or that all haplotypes with $i_1$ are absent from the population.

In the latter case, $f_{i_1} = 0$, there are further implications.  Recall that 
\[
\D (\fv \otimes \kv) = \rho \ \gv \otimes \kv, \mbox{\ and \ } \rho \ \vvh  = \rho \ \fv \otimes \kv = \M_\muv \D  \vvh = \M_\muv \D (\fv \otimes \kv) 
= \rho \ \M_\muv (\gv \otimes \kv) .
\]
So $f_{i_1} = 0$ if and only if $\vh_{i_1 i_2} = f_{i_1} k_{i_2} = 0$, since $k_{i_2} > 0$.  Thus
\eban
\rho \ \vh_{i_1 i_2} = \rho \ f_{i_1} k_{i_2} = \rho \ \sum_{j_1 j_2} M_{i_1 i_2, j_1 j_2} \ g_{j_1} k_{j_2} = 0.
\eean
The zero sum mandates that $g_{j_1} = 0$ for every $j_1$ in which $M_{i_1 i_2, j_1 j_2} > 0$ for some $i_2, j_2$.  By \eqref{eq:Dform3}, $g_{j_1} = 0$ implies $D_{j_1 i_2} \ f_{j_1} = 0$ for all $i_2$, requiring that either $f_{j_1} = 0$, or $D_{j_1 i_2} = 0$ for all $i_2$.

In the case where $f_{j_1}=0$, then the above argument applies in turn to it.  So consider the entire set $\Zc = \{ i_1' \suchthat f_{i_1}' = 0\}$.  

If $M_{i_1 i_2, j_1 j_2} > 0$ only when both $i_1, j_1 \in \Zc$, that means $M_{i_1 i_2, j_1' j_2} = 0$ for all $j_1' \notin \Zc$.  But that means there is no mutation to $\Zc$ from outside of $\Zc$, which makes $\M_\muv$ reducible, contrary to hypothesis.  Therefore there must be some $M_{i_1 i_2, j_1 j_2} > 0$ that has $i_1 \in \Zc$ and $j_1 \notin \Zc$.  And $j_1 \notin \Zc$ implies $D_{j_1 i_2} = 0$ for all $i_2$.  

Therefore, if $\partial \rho(\M_\muv \D) / \partial \mu_L = 0$, the existence of one lethal haplotype $i_1 i_2$ implies either that all haplotypes with $i_1$ are lethal, or that all haplotypes with $i_1$ are absent from the population, which implies further that all $j_1$ in the population that can mutate to $i_1$ are lethal for all haplotypes $j_1 j_2$.  
\end{proof}

\begin{Corollary}[Multiple Cell Divisions]
\label{Corollary:MultipleDivisions}
One may substitute $\M_\muv^t$ for $\M_\muv$ in Theorem \ref{Theorem:MultivariateMutation}, where $t$ is a positive integer, and the theorem applies otherwise unchanged.  The partial derivatives, however, are all scaled by $t$.
\end{Corollary}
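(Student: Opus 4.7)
The plan is to exploit the fact that the Kronecker and spectral structure used throughout the proof of Theorem \ref{Theorem:MultivariateMutation} is preserved under taking matrix powers, and that the factor of $t$ emerges cleanly from the single-variable chain rule applied to each diagonal entry of $\Upsi_\muv$.

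First I would observe that since $\M_\muv = \bigotimes_{\xi=1}^L \M^{(\xi)}_{\mu_\xi}$, the mixed-product property of the Kronecker product yields
\[
\M_\muv^t = \bigotimes_{\xi=1}^L (\M^{(\xi)}_{\mu_\xi})^t
= \B \K \Upsi_\muv^t \K^\tp \B^{-1},
\]
with $\B$, $\K$ exactly as defined in \eqref{eq:MmuvBKetc} and $\Upsi_\muv^t$ the entrywise $t$-th power of the diagonal matrix $\Upsi_\muv$. Consequently every structural ingredient invoked in the proof of Theorem \ref{Theorem:MultivariateMutation} -- symmetrizability, the canonical decomposition \eqref{eq:CanonicalForm}, irreducibility (for $\mu_\xi \in (0,1/2)$ strict positivity of $(\Upsi^{(\xi)}_{\mu_\xi})^t$ diagonal entries is inherited from \eqref{eq:UpsiBounds}), and the identification of the Perron eigenvector of each $\P^{(\xi)}$ with $\B^{(\xi)}[\K^{(\xi)}]_1$ -- transfers without modification once $\Upsi_\muv$ is replaced by $\Upsi_\muv^t$.

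Next I would redo the derivative computation culminating in \eqref{eq:pmukXU}. Rotating into symmetric form as in \eqref{eq:rhoUKDKU} gives
\[
\rho(\M_\muv^t \D) = \rho\bigl(\Upsi_\muv^{t/2} \K^\tp \D \K \Upsi_\muv^{t/2}\bigr),
\]
and the Rayleigh--Ritz critical-point argument produces the analogue of \eqref{eq:pmukXU}:
\[
\pmu{}{\kappa}\rho(\M_\muv^t \D)
= \rho(\M_\muv^t \D)\, \xvh(\muv)^\tp \Upsi_\muv^{-t} \pmu{\Upsi_\muv^t}{\kappa} \xvh(\muv).
\]
Because $\Upsi_\muv$ and $\partial \Upsi_\muv / \partial \mu_\kappa$ are both diagonal and hence commute, the chain rule gives $\partial \Upsi_\muv^t / \partial \mu_\kappa = t\, \Upsi_\muv^{t-1} \partial \Upsi_\muv / \partial \mu_\kappa$, so
\[
\Upsi_\muv^{-t}\pmu{\Upsi_\muv^t}{\kappa} = t\,\Upsi_\muv^{-1}\pmu{\Upsi_\muv}{\kappa},
\]
and the scalar factor $t$ passes through all subsequent computations:
\[
\pmu{}{\kappa}\rho(\M_\muv^t \D)
= t\,\rho(\M_\muv^t \D)\, \xvh(\muv)^\tp \Upsi_\muv^{-1} \pmu{\Upsi_\muv}{\kappa} \xvh(\muv).
\]

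Finally, since the diagonal matrix $\Upsi_\muv^{-1}\, \partial \Upsi_\muv/\partial \mu_\kappa$ is unchanged, the sign analysis \eqref{eq:pmukUpsiNegSemiDef}--\eqref{eq:pmukLEQzero} is identical, as is the equality-case translation into the condition \eqref{eq:Differ} on $\D$; only the magnitude of the derivative is multiplied by $t$. There is no real obstacle here -- the only thing to check carefully is that commutativity of the diagonal matrices $\Upsi_\muv$ and $\partial\Upsi_\muv/\partial\mu_\kappa$ really does allow the clean extraction of the factor $t$ from $\partial\Upsi_\muv^t/\partial\mu_\kappa$, and that the ``first-coordinate'' structure underlying \eqref{eq:xform}--\eqref{eq:BKUform} (which uses only that $[\Upsi_\muv]_{11}=1$, equally true for $\Upsi_\muv^t$) is preserved under the power.
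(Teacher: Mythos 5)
Your proposal is correct and follows essentially the same route as the paper: rotate $\M_\muv^t\D$ into the symmetric form $\Upsi_\muv^{t/2}\K^\tp\D\K\Upsi_\muv^{t/2}$, apply the Rayleigh--Ritz critical-point argument, and extract the factor $t$ from the derivative of the diagonal power $\Upsi_\muv^t$, after which the sign and equality-case analyses carry over verbatim. The only cosmetic difference is that you package the computation as $\Upsi_\muv^{-t}\,\partial\Upsi_\muv^t/\partial\mu_\kappa = t\,\Upsi_\muv^{-1}\,\partial\Upsi_\muv/\partial\mu_\kappa$, whereas the paper differentiates $\Upsi_\muv^{t/2}$ directly; these are identical calculations.
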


\begin{proof}
The proof is identical to that of Theorem \ref{Theorem:MultivariateMutation} except that one is seeking
\[
\pmu{}{\kappa}  \rho(\M_\muv^t \D)  = \pmu{}{\kappa}  \rho( \left\{ \bigotimes_{\xi=1}^L [ (1-\mu_\xi) \I^{(\xi)} + \mu_\xi \Pm^{(\xi)}  ] \right\}^t \D ).
\]
Following the same sequence of steps as for Theorem \ref{Theorem:MultivariateMutation}, let $\xvh(\muv)$ confined to the unit sphere produce the maximum of 
\[
\phi(\x) \eqdef \xv ^\tp (   \Upsi_\muv^{t/2} \K^\tp \D  \K  \Upsi_\muv^{t/2} ) \xv.
\]
Then $\xvh(\muv)$ is an eigenvector satisfying: 
\eb
 ( \Upsi_\muv^{t/2} \K^\tp \D  \K  \Upsi_\muv^{t/2} ) \ \xvh (\muv)= \rho(\M_\muv^t \D) \ \xvh (\muv).
\ee
Following the same steps of differentiation:
\eban
\pmu{}{\kappa}  \rho(\M_\muv^t \D) 
&=&  2 \xvh(\muv)^\tp ( \Upsi_\muv^{t/2} \K^\tp \D  \K  \Upsi_\muv^{t/2} ) \pmu{\xvh(\muv)}{\kappa} \\
&& +  \ 2 \xvh(\muv)^\tp ({ \Upsi_\muv^{t/2}} \K^\tp \D  \K  \pmu{\Upsi_\muv^{t/2}}{\kappa} ) \xvh(\muv) \\
&=&    t \ \xvh(\muv)^\tp (  \Upsi_\muv^{t/2} \K^\tp \D  \K  \Upsi_\muv^{t/2 - 1} \  \pmu{ \Upsi_\muv}{\kappa} ) \xvh(\muv).
\eean
Utilizing
\eb
\xvh (\muv)^\tp \Upsi_\muv^{t/2} \K^\tp \D  \K   \ = \rho(\M_\muv^t \D) \ \xvh(\muv)^\tp \Upsi_\muv^{-t/2},
\ee
one obtains
\eban
\pmu{}{\kappa}  \rho(\M_\muv^t \D) 
&=&    t \ \rho(\M_\muv^t \D) \  \xvh(\muv)^\tp \Upsi_\muv^{-t/2}  \Upsi_\muv^{t/2 - 1} \  \pmu{ \Upsi_\muv}{\kappa} \  \xvh(\muv) \\
&=&    t \ \rho(\M_\muv^t \D) \  \xvh(\muv)^\tp  (\Upsi_\muv^{- 1} \  \pmu{ \Upsi_\muv}{\kappa} ) \ \xvh(\muv) 
\eean
which is identical to \eqref{eq:pmukXU} except for the presence of $t$.
Since $\Upsi_\muv^{-1}  \ \partial  \Upsi_\muv/ \partial \mu_\kappa$ is negative semi-definite for $\muv \in (0, 1/2)^L$:
\eb
\pmu{}{\kappa}  \rho(\M_\muv^t \D) \leq 0.
\ee

The steps in the equality case are unchanged except for the scaling factor $t$, and the substitution of $t/2$ for $1/2$ in the powers of $\Upsi$.
\end{proof}

\begin{Corollary}[Global Mutation Rate Control]
\label{Corollary:GlobalMutation}
Let all the conditions be identical to those in Theorem \ref{Theorem:MultivariateMutation} except that  the modifier locus controls a single, global mutation rate $\gamma$, scaling all the $\mu_\xi = \gamma \beta_\xi$ parameters equally:
\eb
\M_\muv = \bigotimes_{\xi=1}^L [ (1-\gamma \, \beta_\xi) \I^{(\xi)} + \gamma  \,  \beta_\xi \Pm^{(\xi)}  ].
\ee
Then the spectral radius of 
$
\M_\muv \D
$
is non-increasing in $\gamma $ for $\gamma \in (0, 1/2)$, and strictly decreasing if $\D \neq c \I$ for every $c > 0$.
\end{Corollary}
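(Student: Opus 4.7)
The plan is to reduce this corollary to Theorem \ref{Theorem:MultivariateMutation} via the chain rule. Under the reparametrization $\mu_\xi = \gamma \beta_\xi$, the matrix $\M_\muv$ depends on $\gamma$ through each $\mu_\xi$, so
\[
\df{\rho(\M_\muv \D)}{\gamma} = \sum_{\xi=1}^L \beta_\xi \, \pmu{\rho(\M_\muv \D)}{\xi}.
\]
Theorem \ref{Theorem:MultivariateMutation} gives $\partial \rho(\M_\muv \D)/\partial \mu_\xi \leq 0$ for every $\xi$, whenever $\muv \in (0,1/2)^L$. Since the $\beta_\xi$ are positive scaling constants (and $\gamma$ is restricted so that each $\gamma \beta_\xi \in (0,1/2)$), the sum is non-positive and the first claim follows.

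For the strict inequality, I would show that $\D \neq c \I$ forces strict negativity of at least one summand. If $\D$ is not a scalar multiple of the identity, then two of its diagonal entries differ; choose multi-indices $(i_1,\ldots,i_L)$ and $(j_1,\ldots,j_L)$ with $D_{i_1 \cdots i_L} \neq D_{j_1 \cdots j_L}$. Interpolate between them one coordinate at a time, producing a chain
\[
(i_1,\ldots,i_L) \to (j_1, i_2, \ldots, i_L) \to (j_1, j_2, i_3, \ldots, i_L) \to \cdots \to (j_1,\ldots,j_L),
\]
along which the $D$-values cannot all be equal. Hence some step $\kappa$ has two multi-indices differing only in the $\kappa$th coordinate with unequal $D$-values. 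That is exactly the condition \eqref{eq:Differ} in Theorem \ref{Theorem:MultivariateMutation}, giving $\partial \rho(\M_\muv \D)/\partial \mu_\kappa < 0$. Because every other term in the chain-rule sum is non-positive and $\beta_\kappa > 0$, the derivative $d\rho(\M_\muv \D)/d\gamma$ is strictly negative.

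The only subtlety will be bookkeeping the domain: one needs $\gamma \beta_\xi \in (0,1/2)$ for all $\xi$, which restricts $\gamma$ to $(0, \min_\xi 1/(2\beta_\xi))$; the statement implicitly assumes $\beta_\xi \leq 1$ (hence $\gamma \in (0,1/2)$ suffices). No genuinely new technique is required beyond the chain rule and a combinatorial telescoping observation, so I do not anticipate a real obstacle; Theorem \ref{Theorem:MultivariateMutation} already carries all the analytic weight.
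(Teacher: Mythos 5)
Your proposal is correct and follows essentially the same route as the paper: the paper likewise writes $d\rho(\M_\muv\D)/d\gamma$ as a positively weighted sum of the single-locus partial derivatives (via $\sum_\xi (\beta_\xi/\gamma)\,\partial\rho/\partial\beta_\xi$, which equals your $\sum_\xi \beta_\xi\,\partial\rho/\partial\mu_\xi$) and invokes Theorem \ref{Theorem:MultivariateMutation} for non-positivity of each term and strictness when $\D \neq c\I$. Your coordinate-by-coordinate telescoping argument is a welcome addition, since it makes explicit the step the paper merely asserts, namely that $\D \neq c\I$ forces condition \eqref{eq:Differ} to hold at some locus $\kappa$.
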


\begin{proof}
This follows directly from the fact that $\pmu{} {\xi}\rho(\M_\muv \D) \leq 0$ for any $\muv \in (0,1/2)^L$.  If $\D \neq c \ \I$ for any $c > 0$, then for at least one $\xi$, $\pmu{}{\xi}\rho(\M_\muv \D) < 0$, hence  $\dgam{} \rho(\M_\muv \D) < 0$.  This can be shown explicitly.  

Letting $\mu_\xi = \gamma \ \beta_\xi$, we wish to evaluate $\dgam{} \rho(\M_\muv \D) $.  The derivation is identical to the steps in the proof of Theorem \ref{Theorem:MultivariateMutation} except that $d / d \gamma$ replaces $\partial / \partial \mu_\xi$, until step \eqref {eq:pmukUpsi}, which becomes:
\begin{align*}
\dgam{ \Upsi_\muv } 
 = \dgam{} & \left( \bigotimes_{\xi=1}^L [ (1-\gamma \, \beta_\xi) \I^{(\xi)} + \gamma \, \beta_\xi \Lam^{(\xi)}] \right) \nonumber \\
= \sum_{\xi=1}^L & \left\{  \bigotimes_{\xi=1}^{\kappa - 1} [ (1-\gamma \, \beta_\xi) \I^{(\xi)} + \gamma \, \beta_\xi \Lam^{(\xi)}] \nonumber \right. \\
&  \  \otimes \beta_\xi [  \Lam^{(\kappa)} - \I^{(\kappa)}] \finalversion{\ \otimes} \nonumber \\
&  \  \left. \bigotimes_{\xi=\kappa + 1}^{L} [ (1-\gamma \, \beta_\xi) \I^{(\xi)} + \gamma \, \beta_\xi \Lam^{(\xi)}] \right\}, \nonumber \\
= &  \sum_{\xi=1}^L \frac{\beta_\xi}{\gamma} \  \frac{\partial}{\partial \beta_\xi} \Upsi_\muv , 
\end{align*}
Applying this expression yields a positive weighted sum of partial derivatives \eqref {eq:pmukRhoMD}:
\begin{align}
\dgam{}  \rho(\M_\muv \D) 
&=   \xvh(\muv)^\tp (  \Upsi_\muv^{1/2} \K^\tp \D  \K  \Upsi_\muv^{-1/2}  \dgam{ \Upsi_\muv} ) \xvh(\muv) \nonumber \\
&=   \sum_{\xi=1}^L\frac{\beta_\xi}{\gamma} \  \xvh(\muv)^\tp (  \Upsi_\muv^{1/2} \K^\tp \D  \K  \Upsi_\muv^{-1/2} \frac{\partial \Upsi_\muv}{\partial \beta_\xi}   ) \, \xvh(\muv) \nonumber \\
\finalversion{&=  \sum_{\xi=1}^L\frac{\beta_\xi}{\gamma} \frac{\partial }{\partial \beta_\xi}\rho(\M_\muv \D). \label{eq:SumPartials}}
\end{align}
By \eqref{eq:pmukUpsiNegSemiDef}, each of these partial derivative terms \finalversion{in \eqref {eq:SumPartials}} is non-positive, so \linebreak$\dgam{}  \rho(\M_\muv \D)$ $< 0$ if at least one term is non-zero.  To have all partial derivatives be $0$ requires $\D = c \ \I$ for some $c > 0$, hence  $\dgam{}  \rho(\M_\muv \D) < 0$ if $\D \neq c \ \I$ for every $c > 0$.  \finalversion{Note that if the modifier scales the mutation rates of only a subset of loci, then the sum in \eqref {eq:SumPartials} is replaced with a sum over that subset of loci.  Hence the magnitude of $\dgam{}  \rho(\M_\muv \D) $ increases with the number of loci affected by the modifier, given fixed $\beta_\xi$ values.}
\end{proof}

\subsection{Neutral Surfaces of Mutation Rates} \label{subsection:NeutralSurfaces}

Theorem \ref {Theorem:MultivariateMutation} shows that if the marginal fitnesses at equilibrium do not depend on the allelic state of a particular locus (i.e. no instance of \eqref {eq:Differ} occurs), then the mutation rate for that locus can be varied without changing the spectral radius of the stability matrix.  This trivially defines a surface of points $\etav \in (0, 1/2)^L$ on which $\rho(\M_\etav \D)$ is invariant.  Let us exclude this degenerate case for this section, and assume that the marginal fitnesses at equilibrium depend on every locus.  

Under this assumption, $\rho(\M_\muv \D)$ strictly decreases in each variable $\mu_\kappa$.  This raises the question of how $\rho(\M_\muv \D)$ and $\rho(\M_\etav \D)$ compare for two vectors $\muv$ and $\etav$ when $\mu_\kappa < \eta_\kappa$ for some $\kappa$, but $\mu_\xi > \eta_\xi$ for some other $\xi$, i.e. $\muv$ and $\etav$ are not ordered componentwise, and neither $\muv \leq \etav$, nor $\muv \geq \etav$.  

The Intermediate Value Theorem \citep[p. 154]{Munkres:1975} tells us that there must be a set, $\Nc (\muv) \subset \Reals^n$, surrounding $\muv$, on which $\rho(\M_\etav \D) = \rho(\M_\muv \D)$ for all $\etav \in \Nc (\muv)$; this is because $\rho(\M_\muv \D)$ is a continuous function from the matrix entries of $\M_\muv \D$ to $\Reals$ \citep[pp. 539--540]{Horn:and:Johnson:1985}, and the entries of $\M_\muv$ are continuous functions of each $\mu_\kappa$.  
The following properties will be shown for this set $\Nc(\muv)$:  
\benu
\item $\Nc (\muv) $ passes through every orthant surrounding $\muv$ except the strictly positive and strictly negative orthants; 
\item $\Nc (\muv) $ disconnects the mutation parameter space $(0,1/2)^L$ into two connected parts; and 
\item $\Nc (\muv) $ is an $L-1$ dimensional smooth manifold.  
\eenu
In a series of lemmas, the first two properties are established for arbitrary continuous, strictly decreasing functions, and the third is established for arbitrary differentiable functions with negative partial derivatives.  These lemmas are then applied to the mutation rate model, in Theorem \ref{Theorem:MutationManifold}.

\begin{Lemma}[Orthants]
\label{Lemma:Orthants}
Let $F \suchthat \Reals^L \rightarrow \Reals$ be continuous and strictly decreasing in each variable $m_i$ of $\m \in \Reals^L$. 

Let the orthants of $\Reals^L$ be represented as follows:  $({\Ic^+, \Ic^=, \Ic^-})$ represents a three-way partition of the indices $i = 1 \ldots L$.  The orthant $Q({\Ic^+, \Ic^=, \Ic^-}) \subset \Reals^L$ is defined as:
\eb
Q({\Ic^+, \Ic^=, \Ic^-}) = \{ \m \in \Reals^L \suchthat 
\left\{ 
\begin{array}{lr}
m_i > 0 & \forall \ i \in \Ic^+\\
m_i = 0 & \forall \ i \in \Ic^=\\
m_i < 0 & \forall \ i \in \Ic^- 
\end{array} 
\right.
\}.
\ee

Then, for any non-empty choices of subsets $\Ic^+$ and $\Ic^-$,  there is some \\
$\qv \in Q({\Ic^+, \Ic^=, \Ic^-})$ such that $F(\m + \qv) = F(\m)$.
\end{Lemma}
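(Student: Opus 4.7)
My plan is to exhibit a short line segment in $\Reals^L$ whose interior lies inside the open orthant $Q(\Ic^+,\Ic^=,\Ic^-)$ and along which the continuous function $s \mapsto F(\m + \gamma(s))$ takes a value strictly above $F(\m)$ at one endpoint and strictly below $F(\m)$ at the other. The Intermediate Value Theorem then yields a $\qv$ in the orthant with $F(\m+\qv)=F(\m)$. The strict monotonicity of $F$ in each coordinate is exactly what lets me sign the endpoints.

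Concretely, I would pick any $\alpha,\beta>0$ and set
\[
\gamma(s) \;=\; s\alpha\, \1_{\Ic^+} \;-\; (1-s)\beta\, \1_{\Ic^-}, \qquad s\in[0,1],
\]
where $\1_S$ denotes the $0/1$ indicator vector of an index set $S\subseteq\{1,\ldots,L\}$. For any $s\in(0,1)$, the $i$-th coordinate of $\gamma(s)$ equals $s\alpha>0$ for $i\in\Ic^+$, equals $-(1-s)\beta<0$ for $i\in\Ic^-$, and equals $0$ for $i\in\Ic^=$, so $\gamma(s)\in Q(\Ic^+,\Ic^=,\Ic^-)$. The assumed non-emptiness of both $\Ic^+$ and $\Ic^-$ is what ensures that these interior points of the segment are genuinely inside the open orthant rather than on its boundary.

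Next I would evaluate $f(s) := F(\m+\gamma(s))$, which is continuous on $[0,1]$ because $F$ is continuous and $\gamma$ is affine. At $s=0$ the vector $\gamma(0)=-\beta\,\1_{\Ic^-}$ only decreases the coordinates in $\Ic^-$, and strict decrease of $F$ in each such coordinate gives $f(0)>F(\m)$; at $s=1$ the vector $\gamma(1)=\alpha\,\1_{\Ic^+}$ only increases the coordinates in $\Ic^+$, and strict decrease of $F$ in each of those coordinates gives $f(1)<F(\m)$. The Intermediate Value Theorem supplies some $s^{\star}\in(0,1)$ with $f(s^{\star})=F(\m)$, and the point $\qv:=\gamma(s^{\star})$ is the required element of $Q(\Ic^+,\Ic^=,\Ic^-)$.

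There is no substantive obstacle here; the construction is essentially forced by the hypotheses. The only point worth double-checking is that the interpolating path lies in the \emph{open} orthant on the whole interior $(0,1)$ — a fact which uses precisely the non-emptiness of $\Ic^+$ and $\Ic^-$ together with the positivity of $\alpha$ and $\beta$, since otherwise the segment would collapse onto a coordinate subspace and touch $\boundary Q$. Note also that this proof uses neither differentiability nor any boundedness hypothesis on $F$, which is what will make the lemma available later when $F$ is instantiated as $\rho(\M_\muv\D)$ (or any other continuous strictly decreasing function on the mutation parameter space).
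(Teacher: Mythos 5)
Your proof is correct and takes essentially the same approach as the paper's: both restrict $F$ to a line segment whose interior lies in the open orthant, sign the two endpoint values above and below $F(\m)$ using strict monotonicity, and invoke the Intermediate Value Theorem. The only difference is cosmetic — your symmetric choice of segment from $-\beta\,\1_{\Ic^-}$ to $\alpha\,\1_{\Ic^+}$ avoids the case analysis the paper performs on whether an arbitrarily chosen $\qv$ gives $F(\m+\qv)$ above or below $F(\m)$.
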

\begin{proof}
Choose an arbitrary $\qv \in Q({\Ic^+, \Ic^=, \Ic^-})$.  If $F(\m + \qv) = F(\m)$, then one has found the sought-after $\qv$.  

If $F(\m + \qv) > F(\m)$, then one can construct a $\qv'$ such that $F(\m + \qv') < F(\m) < F(\m + \qv)$, and then find the desired value between $\m + \qv$ and $\m + \qv'$:  increase all the negative elements of $\qv$ to 0 to define $\qv' \suchthat q_i' = 0$ for all $i \in \Ic^-$,  $q_i' = q_i$ for $i \in \Ic^+, \Ic^=$.  Thus $\qv' \in Q(\Ic^+, \Ic^= \cup \Ic^-, \emptyset)$.  By the monotonicity of $F$ one knows $F(\m + \qv') < F(\m + \qv)$.  And $\qv' \in Q(\Ic^+, \Ic^= \cup \Ic^-, \emptyset)$ means $\qv' \geq \neq \Zero$, so $F(\m + \qv') < F(\m) < F(\m + \qv)$.  

Now consider the convex combination $\qv(\alpha) \eqdef (1-\alpha) \qv + \alpha \qv'$.  Since $F$ is a continuous function, with $F(\m + \qv(0)) = F(\m + \qv) > F(\m) >  F(\m + \qv') = F(\m + \qv(1)) $, then by the Intermediate Value Theorem \citep[p. 154]{Munkres:1975}, there is some $\alphah \in (0, 1)$ such that $F(\m + \qv(\alphah) ) = F(\m)$.  We must verify that $\qv(\alphah) \in Q({\Ic^+, \Ic^=, \Ic^-})$ as required:   for $i \in \Ic^-$, $q(\alphah)_i = (1-\alphah) q_i + \alphah q'_i =  (1-\alphah) q_i < 0$.  So $\qv(\alphah)$ is the desired point.

If $F(\m + \qv) < F(\m)$, the mirror argument applies, and we decrease the $q_i \suchthat i \in \Ic^+$ to make a new point with $q'_i = 0$ giving $F(\m + \qv') > F(\m)$ where $\qv' \in Q(\emptyset, \Ic^+ \cup \Ic^=, \Ic^-)$.  Analogously, we know there is $\alphah$ that yields $F(\m + (1-\alphah)\qv+ \alphah \qv') ) = F(\m)$.
\end{proof}

\begin{Lemma}[Connected Regions]
\label{Lemma:ConnectedRegions}
Let $F \suchthat (0,c)^L \rightarrow \Reals$ be continuous and strictly decreasing in each variable $m_i$ of $\m \in (0, c)^L \subset \Reals^L$, where $c > 0$ is a constant.  Then the set $\Nc(\m) = \{\m' \colon F(\m') = F(\m) \}$ disconnects $(0, c)^L$ into two connected sets.
\end{Lemma}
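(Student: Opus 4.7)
The natural candidates for the two components are
\[
U^+ \eqdef \{\m' \in (0,c)^L \colon F(\m') > F(\m)\}, \qquad U^- \eqdef \{\m' \in (0,c)^L \colon F(\m') < F(\m)\},
\]
which together with $\Nc(\m)$ exhaust $(0,c)^L$. I would organize the proof around three claims: (i) both $U^+$ and $U^-$ are non-empty open sets, (ii) each of them is path-connected, and (iii) no continuous path in $(0,c)^L \setminus \Nc(\m)$ can cross from $U^+$ to $U^-$.

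\textbf{Steps.} First, $U^+$ and $U^-$ are open as preimages of the open rays $(F(\m),\infty)$ and $(-\infty,F(\m))$ under the continuous map $F$. To see they are non-empty, choose $\epsilon > 0$ small enough that $\m \pm \epsilon \One \in (0,c)^L$; strict monotonicity in each variable gives $F(\m - \epsilon \One) > F(\m) > F(\m + \epsilon \One)$, so $\m - \epsilon \One \in U^+$ and $\m + \epsilon \One \in U^-$.

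Next, to establish that $U^-$ is path-connected, I would use a ``common upper corner'' construction: given $\m_1, \m_2 \in U^-$, pick $\m^\star \in (0,c)^L$ with $m^\star_i > \max(m_{1,i}, m_{2,i})$ for every $i$ (possible since $\max(m_{1,i}, m_{2,i}) < c$). The straight-line segment from $\m_1$ to $\m^\star$ is coordinate-wise non-decreasing, so by strict monotonicity $F$ is non-increasing along it, hence every point on the segment satisfies $F \leq F(\m_1) < F(\m)$ and lies in $U^-$; concatenate with the analogous segment from $\m^\star$ to $\m_2$. A symmetric ``common lower corner'' argument (pick $\m_\star$ with $0 < m_{\star,i} < \min(m_{1,i}, m_{2,i})$) shows $U^+$ is path-connected.

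Finally, I would conclude: $U^+$ and $U^-$ are disjoint, non-empty, open, connected subsets of $(0,c)^L \setminus \Nc(\m)$ whose union is all of $(0,c)^L \setminus \Nc(\m)$. Since every connected component must lie entirely in one of them, these are exactly the two components, and $\Nc(\m)$ disconnects $(0,c)^L$ into the two connected pieces $U^+$ and $U^-$. The only subtle point, which I do not expect to be difficult, is verifying that the interpolation paths remain inside the open box $(0,c)^L$; the strict inequalities in the construction of $\m^\star$ and $\m_\star$ handle this automatically. The strict monotonicity hypothesis is used only to drive the values of $F$ monotonically along the coordinate-aligned interpolations, so no differentiability or other structure on $F$ is needed.
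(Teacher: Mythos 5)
Your proposal is correct and follows essentially the same route as the paper: the same decomposition into the sublevel and superlevel sets of $F$, and the same coordinate-wise monotone ``corner'' construction to show each piece is path-connected. The only (harmless) difference is that you obtain the disconnection directly from the two pieces being disjoint, non-empty and open, whereas the paper separately exhibits the points $\m \pm \qv$ and applies the Intermediate Value Theorem to an arbitrary curve between them; your version is a slight streamlining of the same argument.
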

\begin{proof}
To show that the set $\Nc(\m)$ disconnects $(0,c)^L$, it is sufficient to find two points in $(0,c)^L$ such that every continuous curve between them intersects $\Nc(\m)$.
For the given $\m \in (0,c)^L$, let $q_i \eqdef \min ( m_i, c-m_i) /2 > 0$.  The two requisite points will be  $\m - \qv $, and $\m + \qv$.  Clearly $0 < m_i - q_i < m_i < m_i + q_i < c$, so $\m- \qv, \m + \qv \in (0,c)^L$.  Since $F$ decreases in each variable, then $F(\m - \qv) > F(\m) > F(\m + \qv)$.  Define the continuous curve $C \suchthat [0,1] \mapsto (0,c)^L$ to have $C(0) = \m - \qv $, and $C(1) = \m + \qv$.  Since $F$ is continuous and $C$ is continuous, then $F \circ C \suchthat [0, 1] \rightarrow \Reals$ is continuous.  Since 
\[
F(C(0)) = F(\m - \qv) > F(\m) > F(C(1)) = F(\m + \qv),
\]
by the Intermediate Value Theorem there must be some $\alpha \in [0,1]$ such that $F(C(\alpha)) = F(\m)$, which means $C(\alpha) \in \Nc(\m)$.  Thus every continuous curve between $\m - \qv$ and $\m + \qv$ intersects $\Nc (\m) $.  Therefore $\Nc (\m) $ disconnects $(0,c)^L$.

To show that $(0,c)^L - \Nc(\m)$ consists of two connected sets, it is sufficient to show that between any pair of points in the same set, there is a continuous curve $\subset (0,c)^L$, that does not intersect $\Nc(\m)$.   
The two sets are
$$\Sc^-(\m) \eqdef \{ \pv \in (0,c)^L \suchthat F(\pv) < F(\m) \}$$ 
and 
$$\Sc^+(\m) \eqdef \{ \pv\in (0,c)^L \suchthat  F(\pv) > F(\m) \}.$$  
Clearly $\Sc^-(\m)$, $\Nc(\m)$, and $\Sc^+(\m) $ are disjoint, and $\Sc^-(\m) \union \Nc(\m) \union \Sc^+(\m) = (0,c)^L$.  Now we shall see that $\Sc^-(\m) $ is connected, and $\Sc^+ (\m) $ is connected.

For any two distinct points $\pv, \pv' \in \Sc^-(\m) $, construct a new point,  $\pv''$, that combines the maxima from the two points: $p''_i = \max (p_i, p_i')$.  Let one curve be the line from $\pv$ to $\pv''$, $\{ (1-\alpha) \pv + \alpha \pv'' \suchthat \alpha \in [0, 1] \}$, and the other curve be the line from $\pv'$ to $\pv''$, $\{ (1-\alpha) \pv' + \alpha \pv'' \suchthat \alpha \in [0, 1] \}$.  Both curves are clearly within $(0, c)^L$.  The union of the two curves forms a continuous curve between $\pv$ and $\pv'$.  

Now it must be verified that the curves do not intersect $\Nc(\m)$.  For the line between $\pv$ and $\pv''$:  $p''_i \geq p_i$ for each $i$, so $ (1-\alpha) p_i + \alpha p''_i \geq p_i$ for all $\alpha \in [0, 1]$.  Therefore $F( (1-\alpha) \pv + \alpha \pv'') \leq F(\pv) < F(\m)$ for all $\alpha \in [0, 1]$.  So all the points on the line remain within $\Sc^-$.  The same applies to the other line between $\pv'$ and $\pv''$.  Thus there is a continuous curve connecting $\pv$ and $\pv' \in \Sc^-(\m) $ that does not intersect $\Nc(\m)$.  Therefore $\Sc^-(\m)$ is connected.  The mirror argument applies to $\Sc^+ (\m) $.
\end{proof}

\begin{Lemma}[Smooth Manifold]
\label{Lemma:Manifold}
Let $F \suchthat (0,c)^L \rightarrow \Reals$ be a smooth map, and strictly decreasing in each variable $m_i$ of $\m \in (0, c)^L \subset \Reals^L$, where $c > 0$ is a constant, and $L \geq 2$.  Further, let $\partial F(\m) / \partial m_i < 0$ for each $i$.  Then the set $\Nc(\m) = \{\m' \colon F(\m') = F(\m) \}$ is a smooth, $L-1$ dimensional submanifold of $ (0,c)^L$.
\end{Lemma}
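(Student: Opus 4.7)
The plan is to apply the Implicit Function Theorem (equivalently, the Regular Value Theorem) to $F$ at each point of $\Nc(\m)$. Because we are given that $\partial F/\partial m_i < 0$ for every $i$ at every point of $(0,c)^L$, the gradient $\nabla F$ is nowhere zero on $(0,c)^L$. Hence $F$ has no critical points, so every value in the image of $F$ is a regular value; in particular, the level value $F(\m)$ is regular. The standard consequence (see, e.g., the Regular Value Theorem / Submersion Theorem in any differential topology reference) is that $F^{-1}(F(\m)) = \Nc(\m)$ is a smooth embedded submanifold of $(0,c)^L$ of codimension $1$, i.e., of dimension $L-1$.

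To make this concrete at the level of the Implicit Function Theorem, I would proceed pointwise. Fix $\m' \in \Nc(\m)$, and observe that since $\partial F(\m')/\partial m_L < 0 \neq 0$, the Implicit Function Theorem applies with the splitting of coordinates $\m' = (m_1', \ldots, m_{L-1}', m_L')$. It yields an open neighborhood $U \subset (0,c)^{L-1}$ of $(m_1', \ldots, m_{L-1}')$ and a smooth function $g \colon U \to (0,c)$ such that for $\nv$ near $\m'$ in $(0,c)^L$,
\[
F(\nv) = F(\m) \iff n_L = g(n_1, \ldots, n_{L-1}).
\]
Thus $\Nc(\m)$ is locally the graph of a smooth function, which is the standard local model of a smooth $(L-1)$-dimensional submanifold. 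Since this holds at every $\m' \in \Nc(\m)$, and since the choice of coordinate to eliminate could be any $i$ with $\partial F/\partial m_i \neq 0$ (here, any $i$ at all), one obtains a smooth atlas on $\Nc(\m)$ with the subspace topology inherited from $(0,c)^L$, and $\Nc(\m)$ is a smooth embedded submanifold of $(0,c)^L$ of dimension $L-1$.

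There is no real obstacle here: the hypothesis $\partial F/\partial m_i < 0$ (strict inequality at every point) is precisely the hypothesis needed to guarantee that $F$ is a submersion on all of $(0,c)^L$, which is the only thing the theorem requires. The smoothness of $F$ provides the smooth regularity of the resulting manifold. The only mild subtlety worth noting is the requirement $L \geq 2$, which simply ensures that $L-1 \geq 1$ so that the manifold is nontrivial; for $L=1$ the level set would be a discrete set of points, which is formally a $0$-dimensional manifold but is not the kind of object the later application to the neutral surface of mutation rates intends.
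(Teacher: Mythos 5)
Your proposal is correct and follows essentially the same route as the paper: both invoke the Regular Value Theorem (Implicit Function Theorem in the form of the Preimage/Submersion Theorem), using the hypothesis $\partial F/\partial m_i < 0$ to conclude that $F$ is a submersion so that the level set $\Nc(\m)$ is a smooth $(L-1)$-dimensional submanifold. Your additional pointwise local-graph description is a fuller unpacking of the same argument, not a different one.
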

\begin{proof}
The proof is immediate using a general form of the Implicit Function Theorem \citep[p. 135]{Singer:and:Thorpe:1967}), referred to as the Preimage Theorem in \citet[p. 21]{Guillemin:and:Pollack:1974}, or the Regular Value Theorem \citep[Theorem 3.3 p. 22]{Hirsch:1976}, which I restate:
\begin{Theorem}[Implicit Function] \protect{\citep[p. 135]{Singer:and:Thorpe:1967}}.
Let $X$ and $Y$ be smooth manifolds, with $\dim X > \dim Y$.  Let $\psi: X \rightarrow Y$ be a smooth map.  Let $y_0 \in \psi(X)$ and let
\[
X_0 = \psi^{-1}(y_0) = [x \in X \colon \psi(x) = y_0].
\]
Assume that for each $x \in X_0$, $d \psi(x) \colon T(X,x) \rightarrow T(Y, \psi(x))$ is surjective, i.e. the $\dim X \times \dim Y$ matrix
\[
\matrx{ (\partial / \partial x_j) (y_i \circ \psi)|_{x_0} }
\]
is full rank, $\dim Y$.  Then $X_0$ has a manifold structure, whose underlying topology is the relative topology of $X_0$ in $X$, and in which the inclusion map $X_0 \rightarrow X$ is smooth.  Furthermore, $\dim X_0 = \dim X - \dim Y$.
\end{Theorem}

Here, let $X = (0, c)^L$, $Y = \Reals$, $\psi = F$, $x_0 = \m$, $\dim X = L$, $\dim Y = 1$, $X_0 = \Nc(\m)$, and
\[
d \psi(x) = \matrx{ (\partial / \partial x_j) (y_i \circ \psi)|_{x_0} } = \matrx{\partial F(\m) / \partial m_i |_{\m} }.
\]
Here, $d \psi(x)$ is surjective if $\partial F(\m) / \partial m_i \neq 0$ for at least one $i$.  In fact, by hypothesis $\partial F(\m) / \partial m_i < 0$ for every $\m$ and $i$ (so every value $F(\m)$ is a regular point,  making $F$ a submersion).  Thus, $\Nc(\m)$ is a smooth submanifold of $(0, c)^L$ with $\dim \Nc(\m) = \dim (0, c)^L - \dim \Reals = L - 1$.
\end{proof}

These lemmas are now applied to the modifier model:

\begin{Theorem}[Manifold of Neutral Mutation Rates]
\label{Theorem:MutationManifold}
Assume the conditions of Theorem \ref{Theorem:MultivariateMutation}.  For any given mutation rate vector $\muv \in (0, 1/2)^L$, the set of mutation rate vectors that produce the same spectral radius as $\muv$, $\Nc(\muv) = \{\etav \in (0, 1/2)^L \suchthat \rho(\M_\etav \D) = \rho(\M_\muv \D) \}$, has the following properties:
\benu
\item \label{item:Orthants} There is some $\etav \in \Nc(\muv)$ in every orthant around $\muv$ except the orthants $\etav \leq \neq \muv$, and $\etav \geq \neq \muv$;
\item \label{item:Disconnects} $\Nc(\muv)$ disconnects the mutation parameter space $(0, 1/2)^L$ into two connected parts, $\Sc^-(\muv) $ and $\Sc^+ (\muv) $, such that $\rho(\M_\etav \D) < \rho(\M_\muv \D)$ for all $\etav \in \Sc^-(\muv) $, and $\rho(\M_\etav \D)  > \rho(\M_\muv \D)$ for all $\etav \in \Sc^+ (\muv) $.
\item \label{item:Manifold} $\Nc(\muv)$ is a smooth manifold of dimension $L-1$, which is a subset of an affine algebraic variety. 
\eenu
\end{Theorem}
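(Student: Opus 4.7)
My plan is to identify $F(\muv) := \rho(\M_\muv \D)$ and invoke the three preparatory Lemmas \ref{Lemma:Orthants}, \ref{Lemma:ConnectedRegions}, and \ref{Lemma:Manifold} with $c = 1/2$, after verifying their hypotheses. Parts \ref{item:Orthants} and \ref{item:Disconnects} then follow immediately from Lemmas \ref{Lemma:Orthants} and \ref{Lemma:ConnectedRegions}, the smooth manifold clause of part \ref{item:Manifold} follows from Lemma \ref{Lemma:Manifold}, and the ``affine algebraic variety'' clause in part \ref{item:Manifold} is a short direct argument using the characteristic polynomial.

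To verify the hypotheses: strict monotonicity of $F$ in each coordinate, and strict negativity of $\partial F / \partial \mu_\xi$, are precisely the conclusion of Theorem \ref{Theorem:MultivariateMutation} under the standing assumption of this subsection that the marginal fitnesses depend on every locus (condition \eqref{eq:Differ} holds for each $\kappa$). Continuity of $F$ follows from the fact that the entries of $\M_\muv \D$ are multilinear polynomials in $\muv$ and roots of a polynomial depend continuously on its coefficients. For the smoothness required by Lemma \ref{Lemma:Manifold}, each factor $\M^{(\xi)}_{\mu_\xi} = (1-\mu_\xi)\I^{(\xi)} + \mu_\xi \Pm^{(\xi)}$ is irreducible with strictly positive diagonal, hence primitive; the Kronecker product of primitive matrices is primitive, so $\M_\muv$ is primitive, and $\M_\muv \D$ (with positive $\D$) remains irreducible. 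By Perron--Frobenius, $\rho(\M_\muv \D)$ is a simple, isolated eigenvalue throughout $(0, 1/2)^L$, and standard analytic perturbation theory for simple eigenvalues then gives real-analyticity, hence $C^\infty$ regularity, of $F$ on this domain.

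With these hypotheses verified, Lemmas \ref{Lemma:Orthants}, \ref{Lemma:ConnectedRegions}, and \ref{Lemma:Manifold} directly yield the first three assertions, including that $\Nc(\muv)$ is a smooth submanifold of $(0,1/2)^L$ of dimension $L-1$. For the algebraic-variety statement, fix $\rho_0 := \rho(\M_\muv \D)$ and consider the characteristic polynomial $p(\lambda; \etav) := \det(\lambda \I - \M_\etav \D)$, which is jointly polynomial in $\lambda$ and $\etav$ because $\M_\etav \D$ has polynomial entries in $\etav$. Setting $\lambda = \rho_0$ produces a polynomial $p(\rho_0; \etav)$ in $\etav$ alone, and every $\etav \in \Nc(\muv)$ satisfies $p(\rho_0; \etav) = 0$ since $\rho_0$ must then be an eigenvalue of $\M_\etav \D$. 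Hence $\Nc(\muv) \subset \{\etav \in \Reals^L \suchthat p(\rho_0; \etav) = 0\}$, which is an affine algebraic variety. The only non-mechanical step is the smoothness of $\rho(\M_\muv \D)$, which reduces to preservation of simplicity of the Perron root throughout $(0, 1/2)^L$; this is the mildest of obstacles, since the same analytic machinery was already implicit (via the differentiable eigenvector $\xvh(\muv)$) in the proof of Theorem \ref{Theorem:MultivariateMutation}.
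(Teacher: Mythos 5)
Your proposal is correct and follows essentially the same route as the paper: define $F(\muv) = \rho(\M_\muv\D)$, verify continuity, smoothness, and strict negativity of the partial derivatives from Theorem \ref{Theorem:MultivariateMutation}, apply Lemmas \ref{Lemma:Orthants}, \ref{Lemma:ConnectedRegions}, and \ref{Lemma:Manifold} with $c=1/2$, and obtain the algebraic-variety containment from the characteristic polynomial $\det(\rho_0\I - \M_\etav\D)=0$, exactly as in the paper's $\Vc$. The only cosmetic difference is that you justify smoothness via primitivity of the Kronecker factors and analytic perturbation of the simple Perron root, whereas the paper cites Deutsch and Neumann for derivatives at a simple eigenvalue; both are standard and equivalent here.
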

\begin{proof}
Let $F(\muv) \eqdef \rho(\M_\muv \D)$.  
From Theorem \ref{Theorem:MultivariateMutation}, $\rho(\M_\muv \D)$ is continuous and strictly decreasing in each mutation rate $\mu_\kappa$.  This satisfies the conditions of Lemma \ref{Lemma:Orthants}, and establishes \ref {item:Orthants}.  Further, with $c = 1/2$ the conditions of Lemma \ref {Lemma:ConnectedRegions} are satisfied and \itemref {item:Disconnects} established.  As Lemma \ref{Lemma:Manifold} requires, $(0, 1/2)^L$ and $\Reals$ are smooth manifolds, and $\rho(\M_\muv \D)$ is a smooth map with respect to $\muv$ when $\M_\muv \D$ is irreducible as it is in Theorem \ref{Theorem:MultivariateMutation} (since for simple eigenvalues, all orders of partial derivatives with respect to the matrix entries exist \citep[p. 2]{Deutsch:and:Neumann:1984}) .  The last requirement of Lemma \ref{Lemma:Manifold} is met since Theorem \ref{Theorem:MultivariateMutation} shows $\partial \rho(\M_\muv \D) / \partial \mu_\kappa < 0$ for each $i = 1$ to $L$, therefore \itemref{item:Manifold} is established.

It can be seen that $\Nc(\muv)$ is a subset of an affine algebraic variety, because $\Nc(\muv) \subset (0, 1/2)^L \intersect \Vc $, where $\Vc$ is the affine variety
$$
\Vc = \{ \etav \suchthat \det[ \bigotimes_{\xi=1}^L [ (1-\eta_\xi) \I^{(\xi)} + \eta_\xi \Pm^{(\xi)}  ] \D -  \rho(\M_\muv \D) \ \I] = 0 \}. \qedhere
$$
\end{proof}

\subsection{Main Results}
\label{sec:Main}

Theorem \ref{Theorem:MultivariateMutation}, Corollary \ref{Corollary:MultipleDivisions}, and Theorem \ref{Theorem:MutationManifold} may now be applied to the dynamics of the modifier gene model:

\begin{Theorem}[Multivariate Reduction Principle for Symmetrizable Mutation Rates at Multiple Loci]
\label{Theorem:Main}
Consider a genetic system in which a modifier locus controls the mutation rates of a group of loci under viability selection.  Mutations occur independently among the loci under selection.  In a population near equilibrium under a stable mutation-selection balance, fixed at the modifier locus, let a new allele of the modifier locus be introduced.  The new modifier allele can change the mutation rate parameter separately for each locus, and each parameter scales equally the probability of mutations at that locus. 

Under the following constraints:
\benu
\item mutation rates at each locus range between 0 and $1/2$,
\item no recombination or other transformation process acts on the genes,
\item the mutation matrix for each locus is irreducible, and also irreducible when restricted to nonlethal alleles,
\item is the transition matrix for some reversible Markov chain,
\eenu
then the new modifier allele will increase (decrease) in frequency at a geometric rate if, among the loci that affect the marginal fitnesses of the haplotypes present in the population:
\begin{enumerate}
\item  it reduces (increases) the mutation rate at any locus, and does not increase (decrease) the mutation rates at any locus;
\item it increases the mutation rates for at least one locus, and decreases the mutation rates for at least one locus, and falls below (above) the neutral manifold of mutation rates that includes the mutation rates at the equilibrium.   Should the mutation rates produced by the new modifier allele fall on this neutral manifold, then it will not change frequency at a geometric rate.
\end{enumerate}
Moreover, the further that the new set of mutation rates is from the neutral manifold, the stronger is the eventual induced selection for (against) the new modifier allele, up to a maximum fitness of $\max_i \wh_i / \wbh$ for a modifier allele that eliminates all mutation.

These results hold, in the case of multicellular organisms, for arbitrary numbers of cell divisions between gamete generations.  \finalversion{The strength of selection on the modifier locus scales in proportion to the number of cell divisions in the germline, and increases with the number of loci controlled by the modifier.}
\end{Theorem}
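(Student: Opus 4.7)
The plan is to recognize that Theorem \ref{Theorem:Main} is essentially a translation of the spectral machinery (Theorem \ref{Theorem:MultivariateMutation}, Theorem \ref{Theorem:MutationManifold}, and Corollary \ref{Corollary:MultipleDivisions}) back into the biology of the modifier recursion. The key observation is that under the tractability constraints of \Secref{sec:Constraints} --- tight linkage ($r = 0$), mutation as the sole transformation process, and fixation on the resident modifier allele $b$ --- equation \eqref{eq:Mmut} collapses to $\M_{(a)} = \T_{ab}$, and the multilocus mutation structure \eqref{eq:Mdef} identifies $\T_{ab}$ with $\M_{\muv_a}$, where $\muv_a$ is the $L$-vector of mutation rates induced by the new modifier allele. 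The equilibrium equation \eqref{eq:EquilibriumFixed} applied to the resident says that $\zvh_b$ is the positive Perron vector of $\M_{\muv_b}\D$, hence $\rho(\M_{\muv_b}\D) = 1$, and by \eqref{eq:ExternalStability} the geometric fate of $a$ is controlled by whether $\rho(\M_{\muv_a}\D)$ exceeds, equals, or falls below $1 = \rho(\M_{\muv_b}\D)$.

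First I would translate the biological phrase ``loci that affect the marginal fitnesses of the haplotypes present'' into the matrix condition \eqref{eq:Differ}: a locus $\kappa$ fails to affect $\wh_j$ precisely when the diagonal entries of $\D$ are invariant under changing the $\kappa$-th coordinate of the haplotype index, so the two formulations agree (with the Lethality Case Corollary covering those coordinates for which some $D_{ii}$ vanish). With this dictionary in place, clause (1) of the theorem --- componentwise decrease (respectively increase) of $\muv_a$ relative to $\muv_b$ with at least one strict change at a fitness-relevant locus --- follows by traversing a straight-line path in $(0, 1/2)^L$ from $\muv_b$ to $\muv_a$ and integrating the strictly negative partial derivative supplied by Theorem \ref{Theorem:MultivariateMutation}, yielding $\rho(\M_{\muv_a}\D) > 1$ (respectively $< 1$).

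Next I would dispatch clause (2) (mixed changes) by direct appeal to Theorem \ref{Theorem:MutationManifold}: the neutral manifold $\Nc(\muv_b)$ passes through $\muv_b$, disconnects $(0, 1/2)^L$ into $\Sc^-(\muv_b)$ and $\Sc^+(\muv_b)$, and by item \ref{item:Disconnects} of that theorem $\rho(\M_{\muv_a}\D)$ is less than, equal to, or greater than $1$ according as $\muv_a$ lies in $\Sc^-$, on $\Nc$, or in $\Sc^+$. The maximum induced fitness bound $\max_i \wh_i / \wbh$ falls out immediately by evaluating $\rho(\M_{\Zero}\D) = \rho(\D) = \max_i [\D]_{ii}$ at $\muv_a = \Zero$. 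For the multicellular extension and the scaling remarks, I would invoke Corollary \ref{Corollary:MultipleDivisions} to substitute $\M_{\muv_a}^t$ for $\M_{\muv_a}$ throughout (preserving monotonicity while rescaling every partial derivative by $t$), and use the sum-over-loci formula \eqref{eq:SumPartials} from the proof of Corollary \ref{Corollary:GlobalMutation} to obtain the stated scaling of the induced selection strength with the number of loci controlled by the modifier.

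The main obstacle I anticipate is not a deep calculation but a careful bookkeeping step: ensuring that the strengthened hypothesis ``irreducible when restricted to nonlethal alleles'', combined with the possibility of zero diagonal entries in $\D$, is handled through the Lethality Case Corollary rather than Theorem \ref{Theorem:MultivariateMutation} directly, and that the resulting characterization of the equality case remains faithful to the biological restriction to ``haplotypes present in the population.'' Once this dictionary between $\D$ and marginal fitnesses is pinned down, each clause of Theorem \ref{Theorem:Main} maps transparently onto a clause of one of the three prior results.
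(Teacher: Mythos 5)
Your proposal is correct and follows essentially the same route as the paper's own proof: identify $\rho(\M_{\muv_b}\D)=1$ from the equilibrium relation, reduce clause (1) to the sign of the partial derivatives in Theorem \ref{Theorem:MultivariateMutation}, reduce clause (2) to Theorem \ref{Theorem:MutationManifold}, read off the bound $\rho(\D)=\max_i \wh_i/\wbh$ at $\muv=\0$, and invoke Corollaries \ref{Corollary:MultipleDivisions} and \ref{Corollary:GlobalMutation} for the cell-division and locus-number scaling. Your explicit path-integration argument for clause (1) and your flagged bookkeeping of the lethality/irreducibility hypothesis are slightly more careful than the paper's text, but they do not constitute a different method.
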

\begin{proof}
In the single cell-division model, the population begins at equilibrium fixed on modifier allele $b$ which yields mutation rate vector $\muv$, and \eqref{eq:EquilibriumFixed} becomes:
\[
\zvh_{b} =   \M_\muv \; \D  \; \zvh_{b},
\]
Therefore, in \eqref{eq:BKUvvh}, $\vvh(\muv) = \zvh_b$, and $\rho(  \M_\muv \D ) = 1$.\footnote{The use of this equilibrium relation, without having to explicitly solve for the equilibrium, was first introduced into modifier gene theory by \citet[p. 89]{Teague:1977}.}  Let $\etav$ be the vector of mutation rates produced by the new modifier allele $a$.  If $\etav \leq \muv$ and $\eta_\kappa < \mu_\kappa$ for some locus $\kappa$ for which the equilibrium marginal fitnesses depend on the alleles at locus $\kappa$, then by Theorem \ref{Theorem:MultivariateMutation}, $\rho(\M_\etav \D) > \rho(\M_\muv \D) = 1$, so new modifier allele $a$ increases at a geometric rate.  The mirror argument applies when $\etav \geq \muv$ and $\eta_\kappa > \mu_\kappa$ for some locus $\kappa$ for which the equilibrium marginal fitnesses depend on the alleles at locus $\kappa $, in which case the new modifier allele will decrease at a geometric rate.

In the case where $\eta_\kappa > \mu_\kappa$ and $\eta_j < \mu_j$ for some $\kappa \neq j$, Theorem \ref {Theorem:MutationManifold} establishes that there is a smooth $L-1$ dimensional surface $\Nc(\muv)$ that dissects this orthant surrounding $\muv$ into a set below $\Nc(\muv)$ in which $\rho(\M_\etav \D) > 1 $, and a set above $\Nc(\muv)$ in which $\rho(\M_\etav \D) < 1$, the new modifier allele increasing in frequency in the former case, and decreasing in the latter case.

If $\etav \in \Nc(\muv)$, then by definition $\rho(\M_\etav \D) = 1$, so the new allele will not change frequency at a geometric rate.

By `further from' the neutral manifold, I mean a partial ordering of mutation rate vectors in which $\muv_1 \prec \muv_2$ if $\muv_1 - \muv_2 \leq \neq \0$ (not equal for at least one  locus that the equilibrium fitnesses $\wh_i$ depend on).  Since the derivative of $\rho(\M_\etav \D)$ is negative with respect to each variable $\eta_\xi$ when locus $\xi$ affects $\wh_i$, if $\muv_1 \prec \muv_2 \prec \muv_3$, then $\rho(\M_{\muv_1} \D) > \rho(\M_{\muv_2} \D) > \rho(\M_{\muv_3} \D)$.  For a modifier allele that eliminates mutation, $\muv = \0$, so $\rho(\M_\0 \D) = \rho(\D) = 1 + V = \max_i \wh_i / \wbh$.

In the multiple cell-division model, the initial equilibrium satisfies:
\[
\zvh_{b} =   \M_\muv^t \; \D  \; \zvh_{b},
\]
so $\rho(\M_\muv^t  \D) = 1$.  From Corollary \ref{Corollary:MultipleDivisions}, we see that letting $t \geq 1$ does not alter the inequalities on the spectral radius, so the same conclusions apply for all $t$.
\end{proof}

\section{Discussion}

The motivation for the paper was to extend the general theory of modifier genes beyond single event models and the constraint of linear variation.  Here, multiple independent mutations among multiple loci are modeled, with a modifier gene that has arbitrary control of the mutation rates at each locus.  Under this multivariate, multiplicative form of variation, the reduction principle is again found to hold.  In particular:  
\benu
\item  The result applies for arbitrary selection coefficients on the diploid genotypes (with some technical constraints on the global pattern of any lethal genotypes), arbitrary mutation rates and mutation distributions and as long as they are symmetrizable, arbitrary numbers of (tightly linked) loci and alleles, arbitrary control over each single-locus mutation rate, and any number of cell divisions in the germline.  
\item \label{item:DisI} Changes in the mutation rate at a locus will be neutral if the alleles at that locus do not make any difference in the marginal fitnesses of the haplotypes under selection. 
\item There is a surface of mutation rates that a new modifier allele can produce that leave it neutral, i.e.  it will not change frequency at any geometric rate when introduced into the population.
\item Mutation rates that fall below this surface will cause the new modifier allele to increase when rare, and rates above this surface will cause it to go extinct.  The surface is such that the modifier allele can increase the mutation rate at some loci and decrease at others --- for any arbitrary choice of loci that affect the marginal fitnesses at equilibrium --- and there will always be some values for the magnitude of these changes that fall below the neutral surface of mutation rates, and other values that fall above.
\item The strength of selection on a new modifier allele increases with the distance of its rates from the neutral surface of mutation rates, \finalversion{which increases with each locus affected, and it increases} with the number of cell divisions in the germline.
\eenu

Two properties of modifier polymorphisms are also shown:  
\benu  
\item The ``viability analogous, Hardy-Weinberg'' modifier polymorphisms that emerge in single-event models cannot exist under multivariate, multiplicative variation in transmission due to the loss of convexity in of the space of transmission values.  
\item When the modifier locus is polymorphic, the only values that matter to the change in frequencies of the loci under selection are the mean transmission probabilities for those loci; the frequencies and associations of the modifier alleles are otherwise irrelevant.
\eenu

\subsection{Q \& A}

Since the implications of the main results may not be immediately apparent, an attempt to elucidate them is provided through the following `Question and Answer' format.

\newcounter{QA}
\begin{list}
{\bf Q.\arabic{QA}.}{\usecounter{QA}}

\item \label{Q1} What new phenomena are found in these results? 

{\bf A.}  While the general result that mutation rates evolve to decrease is not novel, several phenomena are:  \benu
\item \label{item:Compensate} Increases in mutation rates may evolve if they are compensated for by decreases at other loci (see \Secref{subsec:Multivariate} below).  
\finalversion{\item The strength of selection for (against) a new modifier grows with the number of loci whose mutation rates it decreases (increases) (see Corollary \ref{Corollary:GlobalMutation}).}
\item  \label{item:Impunity} Mutation rates of loci that do not affect the marginal fitnesses at equilibrium may be changed `with impunity' by the new modifier allele, including when they are changed as a side effect of changes in the mutation rates at other loci.  This implies --- other things being equal --- that if there is local tuning of mutation rates, then neutral loci should have greater mutation rates that loci held in mutation-selection balance (see \Secref {subsubsec:Correlations}).
\item The reduction principle applies when there are multiple cell divisions in the lineage from zygote to gamete, \finalversion{and the strength of selection on the modifier locus scales in proportion to the number of cell divisions from zygote to gamete (Corollary \ref{Corollary:MultipleDivisions})}.
\eenu

\item \label{Q2 }Why is this called `evolutionary reduction' if it is possible for some mutation rates to evolve an increase?

{\bf A.}  It is a `reduction result' because mutation rates must be below the neutral manifold in order for the new modifier allele to invade --- making the neutral manifold like a wall (see \Secref{subsubsec:Manifold}).  Also, the further below this wall that the mutation rates are, the stronger the induced selection for the modifier allele carrying \finalversion{them.}

\item \label{Q3} Could there be some sort of complex epistatic multi-locus selection regime that would allow mutation rates to get around this wall?

{\bf A.}  No, the neutral manifold emerges for all possible selection regimes, the only `holes' in the wall being the mutation rates of loci that do not affect the marginal fitnesses at equilibrium, which are free to evolve in any direction.

\item \label{Q3A}  Doesn't the reduction result depend on the assumption that most mutations are deleterious?  Why isn't this assumption stated anywhere in the model?

{\bf A.}  The reduction result does not depend on any assumption that most mutations are deleterious --- and that is why it takes some mathematical machinery to show it.  What it does depend on is a \emph{net flux} of mutations at equilibrium from more fit to less fit haplotypes, which is a necessary and emergent property of mutation-selection balances.  By net flux I mean an absence of the `detailed balance' condition that characterizes the \finalversion{stationary} state of reversible Markov chains, \eqref{eq:ReversibleCriterion}, in which the fraction of the population mutating from type $j$ to $k$ equals the fraction mutation from $k$ to $j$.  At a mutation-selection balance, a net flux is necessary to keep the haplotypes with above average fitness from continuing to grow in frequency, and to keep the haplotypes with below average fitness from continuing to decline in frequency.  This outcome will occur regardless of how the \emph{distribution of fitness effects} (DFE) \citep{Eyre-Walker:and:Keightley:2007} is set for each diploid genotype by nature.

By altering the flux, the new modifier allele unbalances the mutation-selection balance within the subpopulation that contains it.  It is not immediately obvious why a reduction in the flux equally across all haplotypes (linear variation) would create a subpopulation with increased mean fitness, because the flow is reduced in both directions:  from less fit to more fit, and from more fit to less fit.  But the net effect is always to increase the subpopulation's mean fitness, as shown by Theorem 5.2 \citet{Karlin:1982}.  Here, fluxes are scaled equally between all single locus alleles, multiplied across loci, and Theorem \ref{Theorem:MultivariateMutation} gives the multivariate reduction result.

\item \label{Q4}   In nature, are not the rates of mutations that affect the phenotype so low that multiple mutations in a gamete are very rare? --- in which case, don't the results here reduce to the classical results for single events?

{\bf A.}  No, for several reasons:
\benu
\item Phenomena \itemref{item:Compensate} and  \itemref{item:Impunity} in \Qref{Q1} above, are novel to the multivariate control of mutation rates, and are not eliminated in the limit of small mutation rates.  In this limit, when multiple mutations are rare enough to be ignored,
\begin{align}
\label{eq:SmallMu}
\M_\muv &= \bigotimes_{\xi=1}^L [ (1-\mu_\xi) \I^{(\xi)} + \mu_\xi \Pm^{(\xi)}  ]  \\
&= \I + \sum_{\kappa=1}^L  \mu_ \kappa \left\{\bigotimes_{\xi =1}^{\kappa - 1}  \I^{(\xi)} 
 \otimes [  \Pm ^{(\kappa)} - \I^{(\kappa)}] \finalversion{\otimes} \bigotimes_{\xi=\kappa + 1}^{L}   \I^{(\xi)}  \right\}+ \Order{(\mu_\xi^2)}\notag
\end{align}
Ignoring the $\Order{(\mu_\xi^2)}$ terms, the space of variation $\Mc = \{ \M_\muv \}$ becomes convex, and the ``viability analogous, Hardy-Weinberg'' equilibria become feasible.  A modifier allele that scales all $\mu_\xi$ equally will produce linear variation, as is covered by earlier treatments \citep{Altenberg:1984,Altenberg:and:Feldman:1987,Altenberg:2009:Linear}.  But modifiers that change the ratios between the $\mu_\xi$ do not produce linear variation, and are subject to phenomena \itemref{item:Compensate} and  \itemref{item:Impunity} in \Qref{Q1} above.
 
\item \label{QA:Poisson} Mutation rates observed in organisms are actually not small enough to ignore multiple mutations.   For example, \citet{Roach:Glusman:etal:2010} estimate that humans have some 70 new nucleotide mutations per diploid genome per generation.  On a per-cell division basis, this puts the human germline mutation rate lower than that recorded for any other species \citep{Lynch:2010:Rate}.  For the fraction of these mutations that have phenotypic effect, \citet{Eyre-Walker:and:Keightley:2007} summarize several studies that estimate the proportion of the genome subject to natural selection at around 5\% in mammals.  

Letting $\lambda$ be the number of non-neutral mutations per haplotype per generation, this yields and estimate of $\lambda = 0.05 \times 70 / 2 = 1.75$.  \citet{Lynch:2010:Rate} gives a concordant estimate of 0.9 to 4.5 deleterious mutations per diploid genome per generation, or $0.45 \leq \lambda \leq 2.25$ per haplotype.
 
With this magnitude for the expected number of mutations, a modifier allele that changes the global mutation rate will not be producing linear or even convex variation.  When $\lambda = \mu \ L << L$, the multiplicative model is approximated by a Poisson process, where the probability of parent $j$ producing gamete $i$ with $\nu $ mutations is $\lambda^\nu  e^{-\lambda} / \nu !$.  The ratio between gametes with multiple mutations and gametes with single mutations is:
\begin{align*}
\frac{\Pr[\geq 2]}{\Pr[1]} &=  \frac{ 1 - (1+ \lambda) e^{-\lambda} }{\lambda e^{-\lambda}}
=  \frac{1 }{\lambda} ( e^{\lambda} - 1) -   1 
=  \lambda \sum_{\nu =0}^\infty \frac{\lambda^{\nu } }{(\nu +2)!} 
\end{align*}
At the small mutation rate limit,
$$
\lim_{\lambda \rightarrow 0} \frac{\Pr[\geq 2]}{\Pr[1]}  = 0,
$$
but for $\lambda = 1.75$, 
\begin{align*}
\frac{\Pr[\geq 2]}{\Pr[1]} &=  \frac{1 }{1.75} ( e^{1.75} - 1) -   1 = 1.72 >> 0.
\end{align*}
Here there are about twice as many multiple mutations as single mutations.  The range $0.45 \leq \lambda \leq 2.25$ gives $0.26 \leq \Pr[\geq 2] / \Pr[1]  \leq 2.8$.  So multiple mutations cannot be ignored in \eqref{eq:SmallMu}.   Other eukaryotic species whose mutation rates have been measured give rates of deleterious mutations of $\lambda > 0.5$ \citep[p. 1171]{Kondrashov2:2010}.  Therefore, when the expected number of mutations is on the order of 1, the multiplicative model is not approximated by the classical model of linear variation and requires an approach such as taken here.

\item Lastly, a full picture of the evolution of mutation rates must take into account not only the `wild type' mutation rates --- which are the endpoint of the evolutionary process --- but also the full range of mutation rates that organisms are capable of generating, because they are the values that test the  evolutionary stability of the wild-type values.   In humans, somatic cells exhibit mutation rates that are one to two orders of magnitude greater than germline cells \citep{Lynch:2010:Rate}.  This shows that  human cells are capable of producing many mutations per generations, and makes necessary a treatment  that can handle multiple mutations in order to analyze the evolutionary stability of low mutation rates.
\eenu

\item \label{Q5}  This result holds for populations fixed at the modifier locus.  What can we expect if the initial population is polymorphic for the modifier?

{\bf A.}  Here the analytical techniques break down, because there is no clear relationship between the variation in transmission produced by a new modifier allele and the mean transmission probabilities \eqref {eq:Tbar} in the population.  Based on the ubiquity of the reduction result, one can conjecture that a form of reduction result will hold, but its exact form requires analysis that can handle more general forms of variation in transmission.

\item \label{Q6} Once a new modifier allele successfully invades, what happens then?

{\bf A.}  The results here are for local perturbations of the equilibrium population, and so do not reveal what happens once a modifier allele invades.  As the new modifier allele increases in frequency, it obviously changes the mutation rates experienced by the loci under selection.  If these changes are small enough, then $\bar{T}_{(r)}(i \la j | k)$ in \eqref {eq:Tbar} will change only slightly, and by the `theory of small parameters' \citep{Karlin:and:McGregor:1972:Application}, the haplotype frequencies of loci under selection will converge to another stable equilibrium near the starting stable equilibrium.  

For modifiers with larger effects, however, the original equilibrium can potentially become unstable or even disappear.  Homotopy continuation methods may be of use in elucidating the possibilities here.  

Whatever the population re-equilibrates to after invasion of the modifier allele, it is again subject to invasion by additional modifier alleles that reduce the mutation rates below the current  neutral manifold of mutation rates.

\item \label{Q7} What guess can be made as to how the inclusion of recombination would change the results?

{\bf A.}  The inclusion of mutation makes the model into an example of a `mixed process', which is where departures from the reduction result have been found.  \citet{Holsinger:and:Feldman:1983:MM} find that maximal mutation rates evolve in a model of pure selfing and overdominance --- which is a mixed process --- because selfing drives down the frequency of the fittest genotype, the heterozygote, which high mutation helps to restore.  So, could mutation restore the frequencies of high fitness genotypes that recombination drives down?  Such a situation is difficult to imagine, because the genotypes that recombination would drive down are overdominant coadapted gene complexes, and it seems unlikely that mutation would help to boost the frequency of such complexes.

As to recombination between the modifier gene and the loci under selection, it in essence dilutes the subpopulation by mixing in some of the equilibrium population.  So recombination would expected be to moderate the force of selection induced on a new modifier allele, but not to change its direction.

\item \label{Q8} What do these results have to say about populations not at equilibrium?

{\bf A.}  Almost, but not quite, nothing.  Populations that are far from equilibrium --- due to small populations, populations under varying selection, populations in transient phases of evolution, and populations evolving with novel genotypes --- have modifier gene dynamics that are fundamentally different from the equilibrium populations considered here.  However, at some point where the population becomes `close enough' to equilibrium, the near-equilibrium dynamics will again take hold.  This appears to be seen, for example, by \citet{Giraud:Matic:etal:2001} in enteric bacterial populations, which are far from equilibrium when first colonizing a new host, and evolve higher mutation rates, but after some period of time evolved reduced mutation rates.  So at some point with large enough population sizes, slow enough variation in selection, damped out transients, or rare enough novel genotypes, the results for near-equilibrium models should come to dominate the dynamics.  
\end{list}

Next, details of additional aspects of the results will be discussed:  The constraint that mutation be symmetrizable, the multivariate reduction principle, the strength of selection on the modifier locus, and models that depart from the reduction principle.

\subsection{The Symmetrizable Mutation Constraint}

The constraint that the mutation matrices be symmetrizable is necessary to use the Rayleigh-Ritz variational characterization for the spectral radius.  It causes all the eigenvalues of $\M_\muv$ and $\M_\muv \D$ to be real.  Since symmetrizable $\M$ is the transition matrix for a reversible Markov chain, its Perron vector produces `detailed balance' \eqref{eq:ReversibleCriterion}.  The mathematical tractability of reversible Markov chains has led to their widespread used in phylogenetic inference models, regardless of whether empirical mutation rates actually are symmetrizable (\citealt{Rodriguez:et:al:1990,Yang:1995,Jayaswal:Jermiin:and:Robinson:2005, Squartini:and:Arndt:2008}).

There is no reason to believe, however, that symmetrizability is fundamental to the reduction result.   It is not needed in the general reduction result for linear variation \citep{Altenberg:1984,Altenberg:and:Feldman:1987,Altenberg:2009:Linear}.  In the absence of symmetrizability, the non-Perron eigenvalues may be complex.  Complex eigenvalues correspond to circulating non-zero net flows between states.  But as discussed in \Qref {Q3A}, net flows from fitter-than-average haplotypes to less-fit-than-average haplotypes are already a part of any mutation-selection balance.

 I conjecture that the symmetrizability constraint can be removed, and the multivariate reduction result will still pertain.

\subsection{The Multivariate Reduction Principle}
\label{subsec:Multivariate}

Details of the multivariate reduction principle are now discussed.

\subsubsection{Negative Correlations between Selection and Mutation Rates}
\label{subsubsec:Correlations}

A new feature of this model is that it analyzes modifier loci that individually tune the mutation rates different loci.  When the marginal fitnesses at equilibrium do not depend on a particular locus, the modifier locus can `detect' this, even in the midst of large complex fitness interactions among the other loci, by being able to change the mutation rates at this locus with no effect on the modifier allele's survival.  

This means if genetic variation exists for local mutation rates, these rates will evolve differently depending on whether the locus is neutral or not.  Empirical studies find substantial variation in mutation rates between sites within a genome \citep{Baer:Miyamoto:and:Denver:2007,King:and:Kashi:2007,Fox:Tuch:and:Chuang:2008}.  An implication of Theorem \ref{Theorem:MultivariateMutation} is that these differences may be the result different histories of selection among loci.  In particular, neutral loci do not have the reduction force operating on variation for their individual mutation rates, so they may evolve  higher mutation rates.  

If there were any mechanism that decreased mutations at one location at the expense of increasing it at another location, then neutral loci could become a `dumping ground' for such negative pleiotropic relations, and would enable their partner loci under selection to evolve lower mutation rates.  A potential example of such pleiotropic interactions is documented by \citet{Hoede:Denamur:and:Tenaillon:2006}, who find that single-stranded DNA secondary structure reduces mutation rates in \emph{E. coli}, and that such structures are found in excess within heavily transcribed sections of DNA.  If two sequences $A$ and $B$ were competing to form secondary structure with a third site $C$, modifier dynamics would favor the evolution of secondary structure to protect the sequence $A$ or $B$ incurring the greatest genetic load.

 The possibility that there is a systemic negative correlation between underlying mutation rates and selection intensity presents a confounding possibility for models of base substitution in phylogenetic models.

\subsubsection{The Neutral Manifold of Mutation Rates}
\label{subsubsec:Manifold}

The manifold, $\Nc(\muv)$, of mutation rates that are neutral for a new modifier allele is a topological necessity whenever all the loci individually exhibit the reduction result.  The finding in \citet{Zhivotovsky:Feldman:and:Christiansen:1994} that a weighted average of the recombination rates determines whether the new modifier allele increases or not is, in fact, the finding of the neutral manifold in the linear limit.  Their manifold can be defined by setting to zero their expression:
\eb
\label{eq:ZFCaverage}
\sum_{s=1}^L \sum_{\Above{t=1}{t \neq s}}^L A_{st} \frac{\rho_{st}}{r_{st}} = 0,
\ee
which produces an $L(L-1) - 1$ dimensional plane in the $L(L-1)$ dimensional space of pairwise recombination rates between $L$ loci.  Their manifold is a flat hyperplane, one may infer, as a consequence of the assumptions of weak selection with pairwise additive-by-additive epistasis, which eliminates many nonlinearities.  In the current paper, an explicit formula like \eqref {eq:ZFCaverage} for the neutral manifold never appears;  the existence and properties of this manifold are inferred through topological arguments, purely from the monotonicity and negative partial derivatives of the spectral radius $\partial \rho(\M_\muv \D) / \partial \mu_\kappa$.  

However, an explicit equation for the manifold can be given for small perturbations of $\muv$.  Let
\[
\dv  \eqdef \finalversion{\nabla \rho(\muv)} = \matrx{\pmu{ \rho(\M_\muv \D) }{\xi}}_{i=1}^L 
\]
refer to the \finalversion{gradient vector}.  Its value can be computed explicitly (numerically if not analytically) if $\M_\muv$ and $\D$ are given, using \eqref{eq:pmukSumForm} and \eqref{eq:xvh},
since $\Lam^{(\kappa)}$, $\Upsi_\muv$, $\K$, and $\B$ all derive from $\M_\muv$, and $\vvh$ and $ \rho(\M_\muv \D) $ derive from $\M_\muv$ and $\D$.

So for small $\del \in \Reals^L$, the manifold $ \Nc(\muv)$ is approximated near $\muv$ by
\eb
\label{eq:mudel}
\{\muv + \del \suchthat \dv^\tp \del = 0 \}.
\ee

The entries of $\dv$ are analogous to the weights $A_{st}$ in \eqref {eq:ZFCaverage}, which can be inferred to be  proportional to the derivatives of the spectral radius with respect to each recombination rate $r_{st}$. \citet{Zhivotovsky:Feldman:and:Christiansen:1994} point out that \eqref {eq:ZFCaverage} is not simply a total of all the changes in the recombination rates, but a weighted sum whose weights $A_{st}$ incorporate the intensity of epistasis between loci $s$ and $t$.   A simple sum would entail that the derivatives of the spectral radius be all equal, but clearly, the derivatives depend on selection and mutation or recombinations distributions in an intricate way.

A little reflection will show that what is found here and in \citet{Zhivotovsky:Feldman:and:Christiansen:1994} is, indeed,  the only possible form that a multivariate reduction principle could take.  A multivariate reduction principle should have, as its simplest requirement, that when a new modifier allele changes a single variable, it should increase if and only if it reduces the value of that variable.  \Secref{subsection:NeutralSurfaces} shows that this simple requirement leads, through  topological necessity, to the existence the neutral surface of mutation rates with its described properties.

\subsection{The Strength of Selection on the Modifier Locus}
\label{subsec:Strength}

It is something of the `lore' about modifier genes that selection induced on them is weak, since a number of particular cases studied found slow changes in frequency of the modifier alleles (e.g. \citealt{Karlin:and:McGregor:1972:Modifier,Karlin:and:McGregor:1974}).  With weak selection and pairwise additive-by-additive epistasis, \citet{Zhivotovsky:Feldman:and:Christiansen:1994} find that selection induced on the modifier allele is quite small and the asymptotic rate of change in the modifier allele on the order of the square of the epistasis.  \citet{Kondrashov:1995} finds that the modifier alleles change frequency slowly in a model using various assumptions and approximations to estimate the selection induced on a mutation modifier in populations under mutation-selection balance.   

But small rates of change are not, in general, a necessity of modifier gene models.  As was shown in \citet[Result 2b]{Altenberg:and:Feldman:1987}, in the extreme case that $\etav = \0$, the asymptotic growth rate of the new modifier allele, $\rho(\M_\etav \D)$ will equal $\max_i \wh_i / \wbh > 1$, which has an upper bound of $1/\sigma$, where $\sigma$ is the fraction of haplotypes transmitted without change.

Here, $\sigma = \prod_{\xi=1}^L (1-\mu_\xi)$ is the fraction of haplotypes that are transmitted without any mutations.  If the number of loci is large, and the values of $\mu_\xi$ moderate, $\sigma$ can be quite small, and the upper bound $1/\sigma$ large.   In the Poisson approximation discussed in \Qref{Q4} above, part \itemref{QA:Poisson}, the estimate of $\sigma$ in humans is $\sigma = \Pr[0] = \lambda^\nu  e^{-\lambda} / \nu ! = 1.75^0 e^{-1.75} / 0! = 0.17$, so $1/\sigma = 5.7$.  Thus, the upper bound on the strength of induced selection coefficient of a mutation-eliminating modifier allele is around 6, which allows very strong induced selection on the modifier locus.  The actual value that $\max_i \wh_i / \wbh$ takes on depends on the specifics of the selection regime and mutation distributions and can be substantially less than $1/\sigma$.

\finalversion{The strength of selection on the new modifier can also be seen to increase with several factors:  the magnitude of its change on mutation rates (Theorem \ref {Theorem:Main}), the number of loci whose mutation rates it alters (Corollary \ref{Corollary:GlobalMutation}), and the number of cell divisions from zygote to gamete (Corollary \ref {Corollary:MultipleDivisions}). }

\subsection{Relation to Models that Depart from the Reduction Principle}
\label{subsec:Depart}

Departures from the reduction result in near-equilibrium populations have been found mainly in models that depart from linear variation.  The models here and in \citet{Zhivotovsky:Feldman:and:Christiansen:1994} have variation that is not linear, yet they both produce the multivariate reduction result.  What underlying properties can explain this?

As a way to summarize the examples of departures from the reduction principle, \citet[pp. 149, 225--228]{Altenberg:1984} proposed a `principle of partial control':   when the modifier gene has only partial control over the transformation occurring at loci under selection, then it may be possible for the part it controls to evolve an increase in rates.    I offered the following speculation:
\begin{quote}
If a modifier controls the transformation acting at only one or a
few loci, then the transformations acting at other loci will render the
variation at this modifier non-linear. It is conceivable, therefore,
that a modifier affecting recombination at only a few loci could evolve
to increase that recombination when recombination is occurring
elsewhere. \citep[p. 227]{Altenberg:1984}
\end{quote}
The above possibility is ruled out by the results in \citet{Zhivotovsky:Feldman:and:Christiansen:1994}, at least for weak selection and pairwise epistasis:  even when the modifier has only partial control over the recombination events --- because it varies only one or a few pairwise recombination frequencies --- it can only evolve to decrease the recombination rates below the neutral manifold.  And the same situation applies here for mutation rates:  any departures from the reduction result due to partial control over mutation rates are ruled out. 

One can speculate about what the underlying difference is between these models and the models that provided the basis for the principle of partial control, namely: recombination in the presence of mutation \citep{Feldman:Christiansen:and:Brooks:1980}, or migration \citep{Charlesworth:and:Charlesworth:1979}, or segregation and syngamy \citep{Charlesworth:Charlesworth:and:Strobeck:1979,Holsinger:and:Feldman:1983:LM}, or mutation in the presence of segregation and syngamy \citep{Holsinger:and:Feldman:1983:MM}.  Each of the latter models is a mixed process, in which the modifier locus controls one among multiple transformation processes that differ in their mathematical structure.  In the current paper, there is only one type of process --- mutation, and in \citet{Zhivotovsky:Feldman:and:Christiansen:1994}, there is only recombination.  Multiple instances of linear variation for only one kind of process are compounded together to produce the variation in transmission of the entire genome, which is nonlinear.

One may wonder whether it is the independent occurrence of multiple events that produces the reduction result.  Here, multiple mutations occur independently.  But the model of \citet{Zhivotovsky:Feldman:and:Christiansen:1994} does not assume that multiple recombination events are independent, and can accommodate arbitrary interference patterns.  So, the independence of events is not essential to the reduction results observed.

These two models of non-linear variation that preserve the reduction result do share the following:  they use multiple instances of homogeneous genetic processes to built up a multilocus, multivariate model.  Is this the key to their preservation of the reduction result?  Lest one surmise that it is the homogeneity of processes that is the underlying feature that produces the reduction result, the following model from \citet[pp. 149--151]{Altenberg:1984} provides a counterexample.  

The model posits a single locus upon which two different mutation processes act sequentially.  Each process is `House of Cards' mutation \citep{Kingman:1978,Kingman:1980}, where the mutation distribution matrix, $\Pm^{(i)}$, for each process $i$, is a rank one matrix:  $\Pm^{(i)} = \piv^{(i)} \ev^\tp$.  The modifier gene has linear control over one of the two processes, and varies either $\mu_1$ or $\mu_2$ in the expression,
\eb
\label{eq:Mmixed}
\M_{\mu_1, \mu_2} =  [ (1-\mu_1) \I + \mu_1 \piv^{(1)} \ev^\tp] [ (1-\mu_2) \I + \mu_2 \piv^{(2)} \ev^\tp] .
\ee
One can craft values for the variables that violate the reduction result: if $\piv^{(2)}$ is weighted towards the least fit haplotypes, while $\piv^{(1)}$ is weighted toward the most fit, and $\mu_1$ is small while $\mu_2$ is large, then a modifier which shifts its subpopulation toward the fitter haplotypes by increasing mutation rate $\mu_1$ will increase when rare, provided the variables are in the right ranges (e.g. for two alleles, $\wh_1 > \wh_2$, $\mu_1 = 0.1$, $\mu_2 = 0.4$, $\pi^{(1)}_1 = 0.9$, and $ \pi^{(2)}_1 = 0.1$).   

When these two processes $\Pm^{(1)}$ and $\Pm^{(2)}$ act on two {\em different loci}, however, they can no longer  interact in the same way.  The mutation matrix then becomes:
\eb
\label{eq:Msplit}
\M_{\mu_1, \mu_2} =  [ (1-\mu_1) \I \otimes \I + \mu_1 ( \piv^{(1)} \ev^\tp) \otimes \I] [ (1-\mu_2) \I \otimes \I + \mu_2 ( \I  \otimes \piv^{(2)} \ev^\tp) ] .
\ee
Indeed, since $\Pm^{(1)}$ and $\Pm^{(2)}$ are symmetrizable, \eqref {eq:Msplit} is simply an instance of \eqref{eq:Mmuv}, the model analyzed here, so the reduction results of Theorem \ref{Theorem:MultivariateMutation} apply.  So, we see that the reduction principle applies to nonlinear variation of the form \eqref {eq:Msplit}, but not of the form \eqref {eq:Mmixed}.  The only difference between them is that the two mutation process have a single target in  \eqref {eq:Mmixed}, but separate targets in \eqref {eq:Msplit}.

The picture that emerges is that when mixed processes are acting on the same set of loci, the expansion of one process can sometimes systematically shift the population toward the fitter genotypes, and cause modifiers that support this expansion to survive.  This is the essence of the deterministic mutation hypothesis for the evolution of sex and recombination \citep{Kondrashov:1982}.  The theoretical question then becomes, how do we identify which combinations of processes and conditions on selection will produce this effect?  

One can make a wild conjecture at this point:  that in all of the cases of modifier models where a mixing of forces produces departures from the reduction principle, then a `separation of forces' into linear variation on separate loci --- provided it is feasible to follow a form similar to going from \eqref {eq:Mmixed} to \eqref {eq:Msplit} --- will restore the reduction result.  Evaluation of this conjecture is deferred to future work.

\section{Acknowledgements}

I thank Marc Feldman for his mentorship in the theory of modifier genes.  Anonymous reviewers provided pointed questions that were helpful in revising the paper.


\begin{thebibliography}{64}
\expandafter\ifx\csname natexlab\endcsname\relax\def\natexlab#1{#1}\fi
\expandafter\ifx\csname url\endcsname\relax
  \def\url#1{\texttt{#1}}\fi
\expandafter\ifx\csname urlprefix\endcsname\relax\def\urlprefix{URL }\fi

\bibitem[{Ababneh et~al.(2006)Ababneh, Jermiin, and
  Robinson}]{Ababneh:Jermiin:and:Robinson:2006}
Ababneh, F., Jermiin, L.~S., and Robinson, J.
\newblock 2006.
\newblock Generation of the exact distribution and simulation of matched
  nucleotide sequences on a phylogenetic tree.
\newblock Journal of Mathematical Modelling and Algorithms 5:291--308.
\newblock ISSN 1570-1166 (Print) 1572-9214 (Online).

\bibitem[{Altenberg(1984)}]{Altenberg:1984}
Altenberg, L., 1984.
\newblock A Generalization of Theory on the Evolution of Modifier Genes.
\newblock Ph.D. thesis, Stanford University.
\newblock Searchable online and available from University Microfilms, Ann
  Arbor, MI.

\bibitem[{Altenberg(2009)}]{Altenberg:2009:Linear}
Altenberg, L.
\newblock 2009.
\newblock The evolutionary reduction principle for linear variation in genetic
  transmission.
\newblock Bulletin of Mathematical Biology 71:1264--1284.

\bibitem[{Altenberg and Feldman(1987)}]{Altenberg:and:Feldman:1987}
Altenberg, L. and Feldman, M.~W.
\newblock 1987.
\newblock Selection, generalized transmission, and the evolution of modifier
  genes. {I}. {T}he reduction principle.
\newblock Genetics 117:559--572.

\bibitem[{Baer et~al.(2007)Baer, Miyamoto, and
  Denver}]{Baer:Miyamoto:and:Denver:2007}
Baer, C.~F., Miyamoto, M.~M., and Denver, D.~R.
\newblock 2007.
\newblock Mutation rate variation in multicellular eukaryotes: causes and
  consequences.
\newblock Nature Reviews Genetics 8:619--631.

\bibitem[{Balkau and Feldman(1973)}]{Balkau:and:Feldman:1973}
Balkau, B. and Feldman, M.~W.
\newblock 1973.
\newblock Selection for migration modification.
\newblock Genetics 74:171--174.

\bibitem[{Brandon(1982)}]{Brandon:1982}
Brandon, R.~N., 1982.
\newblock The levels of selection.
\newblock Pages 315--323 \emph{in} P.~Asquith and T.~Nickles, eds. PSA 1982,
  volume~1. Philosophy of Science Association, East Lansing, MI.

\bibitem[{Charlesworth(1990)}]{Charlesworth:1990:MSB}
Charlesworth, B.
\newblock 1990.
\newblock Mutation-selection balance and the evolutionary advantage of sex and
  recombination.
\newblock Genetical Research (Cambridge) 55:199--221.

\bibitem[{Charlesworth and
  Charlesworth(1979)}]{Charlesworth:and:Charlesworth:1979}
Charlesworth, B. and Charlesworth, D.
\newblock 1979.
\newblock Selection on recombination in clines.
\newblock Genetics 91:581--589.

\bibitem[{Charlesworth et~al.(1979)Charlesworth, Charlesworth, and
  Strobeck}]{Charlesworth:Charlesworth:and:Strobeck:1979}
Charlesworth, B., Charlesworth, D., and Strobeck, C.
\newblock 1979.
\newblock Selection for recombination in partially self-fertilizing
  populations.
\newblock Genetics 93:237--244.

\bibitem[{Deutsch and Neumann(1984)}]{Deutsch:and:Neumann:1984}
Deutsch, E. and Neumann, M.
\newblock 1984.
\newblock Derivatives of the {P}erron root at an essentially nonnegative matrix
  and the group inverse of an {M}-matrix.
\newblock Journal of Mathematical Analysis and Applications 102:1--29.

\bibitem[{Duistermaat and Kolk(2004)}]{Duistermaat:Kolk:2004}
Duistermaat, J.~J. and Kolk, J. A.~C., 2004.
\newblock Multidimensional Real Analysis {I}: {D}ifferentiation.
\newblock Number~86 in Cambridge Studies in Advanced Mathematics. Cambridge
  University Press.
\newblock ISBN 9780521551144.

\bibitem[{{Eyre-Walker} and Keightley(2007)}]{Eyre-Walker:and:Keightley:2007}
{Eyre-Walker}, A. and Keightley, P.~D.
\newblock 2007.
\newblock The distribution of fitness effects of new mutations.
\newblock Nature Reviews Genetics 8:610--618.

\bibitem[{Feldman(1972)}]{Feldman:1972}
Feldman, M.~W.
\newblock 1972.
\newblock Selection for linkage modification: {I}. {R}andom mating populations.
\newblock Theoretical Population Biology 3:324--346.

\bibitem[{Feldman and Balkau(1973)}]{Feldman:and:Balkau:1973}
Feldman, M.~W. and Balkau, B.
\newblock 1973.
\newblock Selection for linkage modification {II}. {A} recombination balance
  for neutral modifiers.
\newblock Genetics 74:713--726.

\bibitem[{Feldman et~al.(1980)Feldman, Christiansen, and
  Brooks}]{Feldman:Christiansen:and:Brooks:1980}
Feldman, M.~W., Christiansen, F.~B., and Brooks, L.~D.
\newblock 1980.
\newblock Evolution of recombination in a constant environment.
\newblock Proceedings of the National Academy of Sciences U.S.A. 77:4838--4841.

\bibitem[{Feldman and Krakauer(1976)}]{Feldman:and:Krakauer:1976}
Feldman, M.~W. and Krakauer, J., 1976.
\newblock Genetic modification and modifier polymorphisms.
\newblock Pages 547--583 \emph{in} S.~Karlin and E.~Nevo, eds. Population
  Genetics and Ecology. Academic Press, New York.

\bibitem[{Feldman and Liberman(1986)}]{Feldman:and:Liberman:1986}
Feldman, M.~W. and Liberman, U.
\newblock 1986.
\newblock An evolutionary reduction principle for genetic modifiers.
\newblock Proc. Natl. Acad. Sci. USA 83:4824--4827.

\bibitem[{Feller(1971)}]{Feller:1968v1}
Feller, W., 1971.
\newblock An Introduction to Probability Theory and Its Applications,
  volume~{I}.
\newblock John Wiley and Sons, New York, 3rd edition.

\bibitem[{Fox et~al.(2008)Fox, Tuch, and Chuang}]{Fox:Tuch:and:Chuang:2008}
Fox, A., Tuch, B., and Chuang, J.
\newblock 2008.
\newblock Measuring the prevalence of regional mutation rates: an analysis of
  silent substitutions in mammals, fungi, and insects.
\newblock BMC Evolutionary Biology 8:186.
\newblock ISSN 1471-2148.

\bibitem[{Giraud et~al.(2001)Giraud, Matic, Tenaillon, Clara, Radman, Fons, and
  Taddei}]{Giraud:Matic:etal:2001}
Giraud, A., Matic, I., Tenaillon, O., Clara, A., Radman, M., Fons, M., and
  Taddei, F.
\newblock 2001.
\newblock Costs and benefits of high mutation rates: Adaptive evolution of
  bacteria in the mouse gut.
\newblock Science 291.

\bibitem[{Guillemin and Pollack(1974)}]{Guillemin:and:Pollack:1974}
Guillemin, V. and Pollack, A., 1974.
\newblock Differential Topology.
\newblock Prentice-Hall, Englewood Cliffs, NJ.

\bibitem[{Hirsch(1976)}]{Hirsch:1976}
Hirsch, M.~W., 1976.
\newblock Differential Topology.
\newblock Springer-Verlag, New York.

\bibitem[{Hoede et~al.(2006)Hoede, Denamur, and
  Tenaillon}]{Hoede:Denamur:and:Tenaillon:2006}
Hoede, C., Denamur, E., and Tenaillon, O.
\newblock 2006.
\newblock Selection acts on dna secondary structures to decrease
  transcriptional mutagenesis.
\newblock PLoS Genetics 2:e176.
\newblock \urlprefix\url{http://dx.plos.org/10.1371%2Fjournal.pgen.0020176}.

\bibitem[{Holsinger et~al.(1986)Holsinger, Feldman, and
  Altenberg}]{Holsinger:Feldman:and:Altenberg:1986}
Holsinger, K., Feldman, M.~W., and Altenberg, L.
\newblock 1986.
\newblock Selection for increased mutation rates with fertility differences
  between matings.
\newblock Genetics 112:909--922.

\bibitem[{Holsinger and
  Feldman(1983{\natexlab{\emph{a}}})}]{Holsinger:and:Feldman:1983:LM}
Holsinger, K.~E. and Feldman, M.~W.
\newblock 1983{\natexlab{\emph{a}}}.
\newblock Linkage modification with mixed random mating and selfing: a
  numerical study.
\newblock Genetics 103:323--333.

\bibitem[{Holsinger and
  Feldman(1983{\natexlab{\emph{b}}})}]{Holsinger:and:Feldman:1983:MM}
Holsinger, K.~E. and Feldman, M.~W.
\newblock 1983{\natexlab{\emph{b}}}.
\newblock Modifiers of mutation rate: Evolutionary optimum with complete
  selfing.
\newblock Proceedings of the National Academy of Sciences U.S.A. 80:6732--6734.

\bibitem[{Horn and Johnson(1985)}]{Horn:and:Johnson:1985}
Horn, R.~A. and Johnson, C.~R., 1985.
\newblock Matrix Analysis.
\newblock Cambridge University Press, Cambridge.

\bibitem[{Iosifescu(1980)}]{Iosifescu:1980}
Iosifescu, M., 1980.
\newblock Finite Markov Processes and Their Applications.
\newblock John Wiley and Sons, Bucharest.

\bibitem[{Jayaswal et~al.(2005)Jayaswal, Jermiin, and
  Robinson}]{Jayaswal:Jermiin:and:Robinson:2005}
Jayaswal, V., Jermiin, L.~S., and Robinson, J.
\newblock 2005.
\newblock Estimation of phylogeny using a general markov model.
\newblock Evolutionary Bioinformatics Online 1:62--80.

\bibitem[{Karlin(1976)}]{Karlin:1976}
Karlin, S., 1976.
\newblock Population subdivision and selection migration interaction.
\newblock Pages 616--657 \emph{in} S.~Karlin and E.~Nevo, eds. Population
  Genetics and Ecology,. Academic Press, New York.

\bibitem[{Karlin(1982)}]{Karlin:1982}
Karlin, S., 1982.
\newblock Classification of selection-migration structures and conditions for a
  protected polymorphism.
\newblock Pages 61--204 \emph{in} M.~K. Hecht, B.~Wallace, and G.~T. Prance,
  eds. Evolutionary Biology, volume~14. Plenum Publishing Corporation.

\bibitem[{Karlin and
  McGregor(1972{\natexlab{\emph{a}}})}]{Karlin:and:McGregor:1972:Application}
Karlin, S. and McGregor, J.
\newblock 1972{\natexlab{\emph{a}}}.
\newblock Application of method of small parameters to multi-niche population
  genetic models.
\newblock Theoretical Population Biology 3:186--209.

\bibitem[{Karlin and
  McGregor(1972{\natexlab{\emph{b}}})}]{Karlin:and:McGregor:1972:Modifier}
Karlin, S. and McGregor, J.
\newblock 1972{\natexlab{\emph{b}}}.
\newblock The evolutionary development of modifier genes.
\newblock Proceedings of the National Academy of Sciences U.S.A. 69:3611--3614.

\bibitem[{Karlin and McGregor(1974)}]{Karlin:and:McGregor:1974}
Karlin, S. and McGregor, J.
\newblock 1974.
\newblock Towards a theory of the evolution of modifier genes.
\newblock Theoretical Population Biology 5:59--103.

\bibitem[{Keilson(1979)}]{Keilson:1979}
Keilson, J., 1979.
\newblock Markov Chain Models: Rarity and Exponentiality.
\newblock Springer-Verlag, New York.

\bibitem[{King and Kashi(2007)}]{King:and:Kashi:2007}
King, D.~G. and Kashi, Y.
\newblock 2007.
\newblock Mutation rate variation in eukaryotes: evolutionary implications of
  site-specific mechanisms.
\newblock Nature Reviews Genetics 8.

\bibitem[{Kingman(1978)}]{Kingman:1978}
Kingman, J. F.~C.
\newblock 1978.
\newblock A simple model for the balance between selection and mutation.
\newblock Journal of Applied Probability 15:1--12.

\bibitem[{Kingman(1980)}]{Kingman:1980}
Kingman, J. F.~C., 1980.
\newblock Mathematics of Genetic Diversity.
\newblock Society for Industrial and Applied Mathematics, Philadelphia.
\newblock ISBN 0-89871-166-5.

\bibitem[{Kondrashov(1982)}]{Kondrashov:1982}
Kondrashov, A.~S.
\newblock 1982.
\newblock Selection against harmful mutations in large sexual and asexual
  populations.
\newblock Genetical Research (Cambridge) 40::325--332.

\bibitem[{Kondrashov(1984)}]{Kondrashov:1984}
Kondrashov, A.~S.
\newblock 1984.
\newblock Deleterious mutations as an evolutionary factor. {I}. {T}he advantage
  of recombination.
\newblock Genetical Research (Cambridge) 44:199--217.

\bibitem[{Kondrashov(1995)}]{Kondrashov:1995}
Kondrashov, A.~S.
\newblock 1995.
\newblock Modifiers of mutation-selection balance: general approach and the
  evolution of mutation rates.
\newblock Genetical Research 66:53--69.

\bibitem[{Kondrashov and Kondrashov(2010)}]{Kondrashov2:2010}
Kondrashov, F.~A. and Kondrashov, A.~S.
\newblock 2010.
\newblock Measurements of spontaneous rates of mutations in the recent past and
  the near future.
\newblock Philosophical Transactions of the Royal Society {B} 365:1169--1176.

\bibitem[{Lewontin(1974)}]{Lewontin:1974}
Lewontin, R.~C., 1974.
\newblock The Genetic Basis of Evolutionary Change.
\newblock Columbia University Press.

\bibitem[{Liberman and
  Feldman(1986{\natexlab{\emph{a}}})}]{Liberman:and:Feldman:1986:GRP}
Liberman, U. and Feldman, M.~W.
\newblock 1986{\natexlab{\emph{a}}}.
\newblock A general reduction principle for genetic modifiers of recombination.
\newblock Theoretical Population Biology 30:341--371.

\bibitem[{Liberman and
  Feldman(1986{\natexlab{\emph{b}}})}]{Liberman:and:Feldman:1986:MMR}
Liberman, U. and Feldman, M.~W.
\newblock 1986{\natexlab{\emph{b}}}.
\newblock Modifiers of mutation rate: {A} general reduction principle.
\newblock Theoretical Population Biology 30:125--142.

\bibitem[{Lynch(2010)}]{Lynch:2010:Rate}
Lynch, M.
\newblock 2010.
\newblock Rate, molecular spectrum, and consequences of human mutation.
\newblock Proceedings of the National Academy of Sciences U.S.A. 107:961--968.

\bibitem[{Lynch et~al.(2008)Lynch, Sung, Morris, Coffey, Landry, Dopman,
  Dickinson, Okamoto, Kulkarni, Hartl, and Thomas}]{Lynch:Sung:etal:2008}
Lynch, M., Sung, W., Morris, K., Coffey, N., Landry, C.~R., Dopman, E.~B.,
  Dickinson, W.~J., Okamoto, K., Kulkarni, S., Hartl, D.~L., and Thomas, W.~K.
\newblock 2008.
\newblock {A genome-wide view of the spectrum of spontaneous mutations in
  yeast}.
\newblock Proceedings of the National Academy of Sciences 105:9272--9277.

\bibitem[{Munkres(1975)}]{Munkres:1975}
Munkres, J.~R., 1975.
\newblock Topology: A First Course.
\newblock Prentice-Hall, Englewood Cliffs, NJ.
\newblock ISBN 0-13-925495-1.

\bibitem[{Otto and Feldman(1997)}]{Otto:and:Feldman:1997}
Otto, S.~P. and Feldman, M.~W.
\newblock 1997.
\newblock Deleterious mutations, variable epistatic interactions, and the
  evolution of recombination.
\newblock Theoretical Population Biology 51:34--47.

\bibitem[{Pylkov et~al.(1998)Pylkov, Zhivotovsky, and
  Feldman}]{Pylkov:Zhivotovsky:and:Feldman:1998}
Pylkov, K.~V., Zhivotovsky, L.~A., and Feldman, M.~W.
\newblock 1998.
\newblock Migration versus mutation in the evolution of recombination under
  multilocus selection.
\newblock Genetical Research (Cambridge) 71:247--256.

\bibitem[{Roach et~al.(2010)Roach, Glusman, Smit, Huff, Hubley, Shannon, Rowen,
  Pant, Goodman, Bamshad, Shendure, Drmanac, Jorde, Hood, and
  Galas}]{Roach:Glusman:etal:2010}
Roach, J.~C., Glusman, G., Smit, A. F.~A., Huff, C.~D., Hubley, R., Shannon,
  P.~T., Rowen, L., Pant, K.~P., Goodman, N., Bamshad, M., Shendure, J.,
  Drmanac, R., Jorde, L.~B., Hood, L., and Galas, D.~J.
\newblock 2010.
\newblock Analysis of genetic inheritance in a family quartet by whole-genome
  sequencing.
\newblock Science Pages science.1186802+.
\newblock ISSN 1095-9203.
\newblock \urlprefix\url{http://dx.doi.org/10.1126/science.1186802}.

\bibitem[{Rodr\'{i}guez et~al.(1990)Rodr\'{i}guez, Oliver, Mar\'{i}n, and
  Medina}]{Rodriguez:et:al:1990}
Rodr\'{i}guez, F., Oliver, J., Mar\'{i}n, A., and Medina, J.
\newblock 1990.
\newblock The general stochastic model of nucleotide substitution.
\newblock Journal of Theoretical Biology 142:485 -- 501.
\newblock ISSN 0022-5193.

\bibitem[{Salmon(1971)}]{Salmon:1971}
Salmon, W.~C., 1971.
\newblock Statistical Explanation and Statistical Relevance.
\newblock University of Pittsburgh Press, Pittsburgh.

\bibitem[{Salmon(1984)}]{Salmon:1984}
Salmon, W.~C., 1984.
\newblock Scientific Explanation and the Causal Structure of the World.
\newblock Princeton University Press, Princeton, N. J.

\bibitem[{Singer and Thorpe(1967)}]{Singer:and:Thorpe:1967}
Singer, I.~M. and Thorpe, J.~A., 1967.
\newblock Lecture Notes on Elementary Topology and Geometry.
\newblock Springer-Verlag, New York.
\newblock ISBN 0-387-90202-3.

\bibitem[{Squartini and Arndt(2008)}]{Squartini:and:Arndt:2008}
Squartini, F. and Arndt, P.~F.
\newblock 2008.
\newblock Quantifying the stationarity and time reversibility of the nucleotide
  substitution process.
\newblock Molecular Biology and Evolution 25:2525--2535.

\bibitem[{Teague(1977)}]{Teague:1977}
Teague, R.
\newblock 1977.
\newblock A model of migration modification.
\newblock Theoretical Population Biology 12:86--94.

\bibitem[{Whelan and Goldman(2004)}]{Whelan:and:Goldman:2004}
Whelan, S. and Goldman, N.
\newblock 2004.
\newblock Estimating the frequency of events that cause multiple-nucleotide
  changes.
\newblock Genetics 167:2027--2043.

\bibitem[{Wilkinson(1965)}]{Wilkinson:1965}
Wilkinson, J.~H., 1965.
\newblock The Algebraic Eigenvalue Problem.
\newblock Clarendon Press, Oxford.

\bibitem[{Yang(1995)}]{Yang:1995}
Yang, Z.
\newblock 1995.
\newblock On the general reversible markov process model of nucleotide
  substitution: A reply to saccone et al.
\newblock Journal of Molecular Evolution 41:254--255.

\bibitem[{Yang and Nielsen(2002)}]{Yang:and:Nielsen:2002}
Yang, Z. and Nielsen, R.
\newblock 2002.
\newblock Codon-substitution models for detecting molecular adaptation at
  individual sites along specific lineages.
\newblock Molecular Biology and Evolution 19:908--917.

\bibitem[{Zhivotovsky and Feldman(1995)}]{Zhivotovsky:and:Feldman:1995}
Zhivotovsky, L.~A. and Feldman, M.~W.
\newblock 1995.
\newblock The reduction principle for recombination under density-dependent
  selection.
\newblock Theoretical Population Biology 47:244--256.

\bibitem[{Zhivotovsky et~al.(1994)Zhivotovsky, Feldman, and
  Christiansen}]{Zhivotovsky:Feldman:and:Christiansen:1994}
Zhivotovsky, L.~A., Feldman, M.~W., and Christiansen, F.~B.
\newblock 1994.
\newblock Evolution of recombination among multiple selected loci: {A}
  generalized reduction principle.
\newblock Proceedings of the National Academy of Sciences U.S.A. 91:1079--1083.

\end{thebibliography}
\end{document}